\documentclass[11pt]{article}
\usepackage[utf8]{inputenc}
\usepackage[T1]{fontenc}
\usepackage[top=1in, bottom=1in, left=1in, right=1in, letterpaper]{geometry}

\usepackage{dsfont}
\usepackage{amsmath}
\usepackage{amsthm}
\usepackage{amssymb}
\usepackage{graphicx}
\usepackage{url}
\usepackage{nicefrac}
\usepackage{enumitem}
\usepackage{tikz} 
\usepackage{xspace}
% \usetikzlibrary{arrows,decorations.markings}
% \usepackage[normalem]{ulem}
\usepackage[noend]{algpseudocode}
\usepackage{comment}
\usepackage{hyperref}
\usepackage{subcaption}
\usepackage{authblk}

\usepackage{thmtools} 
\usepackage{thm-restate}

% \usepackage{xcolor}

% \usepackage{pgf}
% \usetikzlibrary{patterns}
% \usetikzlibrary{backgrounds}
% \usetikzlibrary{scopes}
% \usetikzlibrary{arrows, automata, positioning}
% \usetikzlibrary{decorations.markings}
% \usepackage[T1]{fontenc}
% \usepackage{calc}
% \usepackage{color}
% \usepackage{lipsum}
% \usepackage{svg}

% \makeatother

\usepackage{color}

\definecolor{blueblack}{rgb}{0,0,.7}

\newcounter{sideremark}

\definecolor{Darkblue}{rgb}{0,0,0.4}
\definecolor{Brown}{cmyk}{0,0.61,1.,0.60}
\definecolor{Purple}{cmyk}{0.45,0.86,0,0}
\definecolor{brickred}{rgb}{0.8, 0.25, 0.33}

\theoremstyle{plain}
\newtheorem{theorem}{Theorem}
\newtheorem{invariant}{Invariant}
\newtheorem{lemma}{Lemma}
\newtheorem{Claim}{Claim}
\newtheorem{definition}{Definition}

\newtheorem{remark}{Remark}
\newtheorem{corollary}{Corollary}

\newcommand{\eps}{\varepsilon}

\newcommand{\calP}{\mathcal{P}}

\newcommand{\OPT}{\mathsf{OPT}}

\newcommand{\APX}{\mathsf{APX}}

\usepackage[colorinlistoftodos,textsize=tiny,textwidth=2cm,color=red!25!white,obeyFinal]{todonotes}

\ifdefined\DEBUG
\newcommand{\fab}[1]{\textcolor{red}{#1}}
\newcommand{\mi}[1]{\textcolor{blue}{#1}}
\newcommand{\af}[1]{\textcolor{orange}{#1}}

\newcommand{\mig}[1]{\todo[color=cyan!100!black!50]{M: #1}}
\newcommand{\afr}[1]{\todo[color=green!100!black!50]{A: #1}}
\newcommand{\fabr}[1]{\todo[color=red!100!black!50]{F: #1}}
\else
\newcommand{\fab}[1]{#1}
\newcommand{\mi}[1]{#1}
\newcommand{\af}[1]{#1}

\newcommand{\fabr}[1]{}
\newcommand{\afr}[1]{}
\newcommand{\mig}[1]{}
\fi
    
\date{}

\title{A PTAS for Triangle-Free 2-Matching\footnote{The first 2 authors are partially supported by the SNF Grant 200021\_200731 / 1.}}

\author[1]{Miguel {Bosch-Calvo}\thanks{miguel.boschcalvo@idsia.ch}}
\author[1]{\mbox{Fabrizio Grandoni}\thanks{fabrizio.grandoni@idsia.ch}}
\author[2]{Afrouz Jabal Ameli\thanks{a.jabal.ameli@tue.nl}}
\affil[1]{IDSIA, USI-SUPSI, Switzerland}
\affil[2] {TU Eindhoven, Netherlands}

\begin{document}

\maketitle             

\begin{abstract}
\noindent In the Triangle-Free (Simple) 2-Matching problem we are given an undirected graph $G=(V,E)$. Our goal is to compute a maximum-cardinality $M\subseteq E$ satisfying the following properties: (1) at most two edges of $M$ are incident on each node (i.e., $M$ is a 2-matching) and (2) $M$ does not induce any triangle. In his Ph.D. thesis from 1984,  Harvitgsen presents a complex polynomial-time algorithm for this problem, with a very complex analysis. This result was never published in a journal nor reproved in a different way, to the best of our knowledge. 

In this paper we have a fresh look at this problem and present a simple PTAS for it based on local search. Our PTAS exploits the fact that, as long as the current solution is far enough from the optimum, there exists a short augmenting trail (similar to the maximum matching case).        
\end{abstract}

%\clearpage
\section{Introduction}

Let $G=(V,E)$ be an undirected graph. A subset of edges $M\subseteq E$ is a (simple\footnote{We remark that here we are focusing on \emph{simple} 2-matching, meaning that we are allowed to include in $M$ at most one copy of each edge of $G$: therefore we will omit ``simple''.}) 2-matching if, for every node $v\in V$, at most two edges of $M$ are incident to $v$. In particular, $M$ is a collection of node-disjoint paths and cycles. 
It is easy to compute a 2-matching of maximum cardinality via a reduction to maximum matching.

In this paper we focus on the problem of computing a maximum-cardinality triangle-free 2-matching in $G$, i.e. a 2-matching $M$ which has the extra property not to induce any triangle (i.e., $(V,M)$ does not contain triangles). This extra property makes the problem substantially harder, in particular the above reduction to maximum matching does not work. In his Ph.D. thesis from 1984, Harvitgsen \cite{HartvigsenPhdThesis} presents a complex polynomial-time algorithm for this problem, with a very complex analysis. To the best of our knowledge, this result was never published in a journal, and no alternative or independent proof of it is known in the literature. 

Motivated by the above status of the literature, in this paper we have a fresh look at \fab{this} problem, and we present a drastically different and substantially simpler (though yet not elementary) proof of a slightly weaker result, namely a PTAS for the problem. 
\begin{theorem}\label{thr:main}
There exists a PTAS for triangle-free 2-matching.    
\end{theorem}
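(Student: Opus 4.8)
The plan is to build a local‑search PTAS whose single move is ``take the symmetric difference of the current solution with a short trail'', in direct analogy with augmenting paths for maximum matching. Fix $\eps\in(0,1)$ and set $k:=\lceil 3/\eps\rceil$ (the exact constant is pinned down later). The algorithm keeps a triangle‑free $2$-matching $M$, starting from $M=\emptyset$; as long as there is a \emph{$k$-augmentation} --- a pair $(A,R)$ with $A\subseteq E\setminus M$, $R\subseteq M$, $|A|+|R|\le k$, such that $M':=(M\setminus R)\cup A$ is again a triangle-free $2$-matching with $|M'|>|M|$ --- it replaces $M$ by such an $M'$, and it outputs $M$ once no $k$-augmentation exists. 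Every move strictly increases $|M|\le |V|$, so the process halts after $\le|V|$ moves, and a move can be found (or ruled out) by brute‑force enumeration over $(A,R)$ in time $n^{O(k)}$; hence the running time is polynomial for each fixed $\eps$. So the whole content of Theorem~\ref{thr:main} is the structural claim: if $M$ is a triangle-free $2$-matching with $|M|<(1-\eps)|\OPT|$, then a $k$-augmentation exists.

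To prove the structural claim I would use the symmetric difference $D:=M\triangle\OPT$, colouring an edge \emph{blue} if it lies in $\OPT\setminus M$ and \emph{red} if it lies in $M\setminus\OPT$; each vertex carries at most two blue and at most two red edges of $D$. Pairing up, at every vertex, a maximum number of blue--red incident pairs and following these transitions decomposes $E(D)$ into alternating closed trails and alternating open trails whose two ends are ``loose'' edges. A short case check shows that a vertex carrying a loose blue end has $M$-degree $\le 1$, and that the symmetric difference of $M$ with any alternating trail both of whose ends are blue raises $|M|$ by exactly one while keeping the degree bound; call such trails \emph{gaining}. Since every non‑gaining trail or closed trail has at least as many red as blue edges, the number $p$ of gaining trails is at least the total blue‑minus‑red surplus, i.e.\ $p\ge|\OPT|-|M|>\eps|\OPT|\ge\eps|M|$. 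As $D$ has at most $|M|$ red edges in total, some gaining trail $T$ uses fewer than $|M|/p<1/\eps$ red edges, hence has at most $2/\eps+1$ edges in total (the case $M=\emptyset$ being trivial). Absent the triangle constraint, $(A,R)=(E(T)\setminus M,\,E(T)\cap M)$ would already be the desired $k$-augmentation.

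The remaining --- and main --- obstacle is that $M\triangle E(T)$, though a $2$-matching of size $|M|+1$, may \emph{induce a triangle}. This is however very constrained: $(V,M\triangle E(T))$ has maximum degree $2$, so each induced triangle is a connected component equal to a $3$-cycle (equivalently, a triangle-free $2$-matching is just a $2$-matching none of whose cycle components is a triangle); each such bad triangle must contain a newly added ($\OPT$) edge of $T$, since $M$ itself is triangle-free; and distinct bad triangles are vertex-disjoint, so there are only $O(1/\eps)$ of them. I would then destroy each bad triangle by a \emph{local modification of $T$} --- rerouting the trail through the triangle, replacing the offending $\OPT$-edge by a short detour that instead deletes one of the triangle's two $M$-edges (picking up, if needed, an $\OPT$-edge leaving the triangle to keep the trail alternating and gaining), or, more crudely, excising a bounded‑length sub‑trail --- engineered so that the net gain is preserved, the triangle disappears, and no fresh triangle is created (deleting edges from a graph of maximum degree $2$ cannot produce a $3$-cycle component), while the length grows by only $O(1)$ per triangle. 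Processing all $O(1/\eps)$ bad triangles yields a bounded‑size improving swap of total size $O(1/\eps)$, contradicting termination of the local search and proving the structural claim.

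The part I expect to be the real work is exactly this last repair: making the rerouting precise while ruling out cascades --- where fixing one triangle spawns a new one --- and controlling the interaction of bad triangles that sit close together on $T$ (so that the length and the gain are genuinely preserved). This is presumably also the reason why the argument, although conceptually as simple as the matching case, is ``not elementary''. Everything else --- the algorithm, the polynomial‑time bound, and the symmetric‑difference/averaging step --- is routine once the right notion of ``gaining trail'' is in place.
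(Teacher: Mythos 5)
Your algorithm and high-level plan match the paper exactly: a local search over $2/\eps$-length augmenting trails, reduced to a structural lemma saying that a far-from-optimal solution admits a short augmentation (the paper's Lemma~\ref{lem:augmentingTrails}). Your averaging step (there are at least $|\OPT|-|M|$ edge-disjoint ``gaining'' trails, so one of them is short) is also essentially the paper's counting. The gap is exactly where you say it is — and it is not a detail but the entire content of the result. You first decompose all of $M\triangle\OPT$ into alternating trails, identify gaining trails, and then propose to \emph{repair} any trail whose flip would create a triangle by local rerouting, ``excising a bounded-length sub-trail'', etc., while controlling cascades. As stated this is a hope, not an argument: you have not shown that a repair always exists, that the repaired trail stays inside $M\triangle\OPT$, stays alternating and gaining, and stays short, nor that the $O(1/\eps)$ bad triangles and their fixes do not interfere. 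The cascade concern you raise is real: a detour that swaps which $M$-edge of a bad triangle is removed can both create a fresh triangle and collide with a repair you already performed, and a naive ``excision'' can disconnect the trail or change its parity so it is no longer gaining.

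The paper resolves this by \emph{not} decomposing first and repairing later. Instead it grows each trail incrementally as a concatenation of tiny ``chunks'' (alternating trails of $1$--$3$ edges whose consecutive edges lie on triangles of $\APX\cup\OPT$), and it maintains a pair of invariants at all times: not only must $\APX\triangle P_k$ be triangle-free (the constraint you care about), but also $\OPT\triangle\calP_k$ must be triangle-free (Invariant~\ref{inv:main}.\ref{prp:blueTriangles}). This second, seemingly gratuitous, $\OPT$-side invariant is precisely what rules out the dead ends your repair step would run into (see Figure~\ref{fig:optTriProperty}). On top of that the paper needs a nontrivial tie-breaking rule (Invariant~\ref{inv:tieBreak}) when several chunks are admissible, plus the Degree Domination Lemma~\ref{lem:blueMajority}, which in turn requires choosing $\OPT$ to maximize $|\APX\cap\OPT|$ — a normalization your decomposition does not impose but will almost certainly need once you try to extend a stuck trail. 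So while your sketch correctly isolates the hard part and even correctly predicts that it is the source of the non-elementariness, the proposal does not prove the theorem: the triangle-repair step is missing, and the paper's machinery (dual $\APX$/$\OPT$-triangle invariants, chunk-by-chunk construction, tie-breaking, degree domination) is a genuinely different mechanism for the same difficulty, not a routine formalization of ``reroute around each bad triangle''.
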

An overview of our approach is given in Section \ref{sec:overview}.

\subsection{Related Work}

The $C_k$-Free 2-matching problem is the problem of finding a 2-matching not inducing any $C_k$, i.e. a cycle of length $k$. In particular, the triangle-free 2-matching problem is the special case of the above problem with $k=3$. Hartvigsen~\cite{HARTVIGSEN2006} showed that the square-free case, i.e. $k=4$, in bipartite graphs is polynomial-time solvable (see also Pap~\cite{Pap2007} and Babenko~\cite{Babenko12}).
B{\'e}rczi and Kobayashi~\cite{BERCZI2012565} showed that square-free $2$-matching can be solved in polynomial time if the input graph is sub-cubic. 
%Kobayashi~\cite{KOBAYASHI2010197} has a polynomial time algorithm for triangle-free $2$-matching in sub-cubic graphs. 
%\fab{***fishy claim: it is more general than B{\'e}rczi and V{\'e}gh which came later? Which version is weighted?***} 

Another interesting variant of the problem is the maximum $C_{\le k}$-free 2-matching problem, where the objective is to find a maximum-cardinality $2$-matching not containing any cycle of length at most $k$. Papadimitriou showed that the case $k=5$ is NP-hard (his proof is described by Cornu{\'{e}}jols and Pulleyblank in 
\cite{CP80}). The complexity of the case $k=4$ is open, while the case $k=3$ is equivalent to the triangle-free 2-matching problem.

One can consider the natural weighted variant of these problems, where the input graph has non-negative edge-weights and one wishes to find a solution of maximum total weight. 
%The weighted $C_k$-free 2-matching problem is to find a 2-matching of maximum total weight that does not contain a cycle of size $k$ when we each edge has a non-negative weight. 
Hartvigsen and Li~\cite{HL07} presented the first polynomial-time algorithm for weighted triangle-free $2$-matching in sub-cubic graphs (see also~\cite{KOBAYASHI2010197}).
Finding a 2-matching of maximum total weight that has no cycles of length at most $4$ is NP-hard (\cite{Kiraly09,BERCZI2012565}).

The problem gets easier if we are allowed to take multiple copies of the same edge (i.e., the 2-matching is not simple). In particular, the weighted triangle-free non-simple 2-matching problem can be solved exactly in polynomial time \cite{CP80}.

One of the reasons of interest of triangle-free 2-matchings is that they can be used as a subroutine to derive improved approximations algorithms for the 2-Edge-Connected Spanning Subgraph problem (2ECSS). We recall that in 2ECSS we are given an undirected graph, and our goals is to compute a subgraph with the minimum number of edges which is 2-edge-connected. 2ECSS is APX-hard \cite{CL99,F98}, and a lot of work was devoted to the design of approximation algorithms for it \cite{CSS01,HVV19,KV94,SV14}, culminating with a recent $1.326$ approximation \cite{GGJ23}. The latter paper exploits the fact that a minimum-cardinality 2-edge cover (which can be computed in polynomial time) provides a lower bound on the optimum solution. Very recently Kobayashi and Noguchi \cite{kobayashi2023approximation} observed that a maximum-cardinality triangle-free 2-matching can be used to build a triangle-free 2-edge-cover whose size is an alternative lower bound to the optimal solution to 2ECSS (a similar idea appeared earlier in \cite{Jothi03}). Combining this observation with the analysis in 
\cite{GGJ23} and the polynomial-time algorithm for triangle-free 2-matching claimed by Hartvigsen \cite{HartvigsenPhdThesis}, one obtains an improved $1.3+\eps$ approximation for 2ECSS. The same result now can be achieved also exploiting the PTAS presented in this paper. \fab{In our opinion, triangle-free 2-matchings might be helpful also for the vertex-connectivity variant 2VCSS of 2ECSS \cite{BCGJ23,GVS93,HV17,KV94}.} 

\fab{2ECSS and 2VCSS fall in the area of Survivable Network Design, an area that recently attracted a lot of attention in terms of approximation algorithms (see, e.g., \cite{BGJ23,FGKS18,GJT22,N20,TZ23stoc}).}

\section{Preliminaries and Notation}

We use standard graph theoretical notation. Let $G=(V, E)$ be an undirected graph. %\fabr{Removed E(G)} 
Given a node $u\in V$, we denote the neighborhood of $u$ in $G$ as $N_G(u)$ or $N(u)$ when $G$ is clear from the context. Also, given a subset of edges $S\subseteq E$, we define as $N_S(u)$ the neighborhood of $u$ in the graph $G'=(V, S)$. A triangle involving the nodes $u$, $v$ and $w$ is simply denoted by $uvw$. 
%\fab{Given two sets $A$ and $B$, their symmetric difference is $A\triangle B:=(A\setminus B)\cup (B\setminus A)$.}

A trail $P$ (of length $|P|=k-1$) is a sequence of distinct consecutive edges $u_1u_2,u_2u_3,\ldots,u_{k-1}u_k$. The nodes $u_1,\ldots,u_k$ are not necessarily distinct. We say that $u_1$ and $u_k$ (where possibly $u_1=u_k$) are \fab{the first and the last node of the trail, resp.}, and they are both endpoints of the trail. 
%and any other node $u_i$ is an intermediate node of the trail.
%\mig{Maybe we should remark that an endpoint might also appear elsewhere in a trail. 
%\fab{Not needed}}\mig{Afrouz and I were talking and realised we do not use intermediate in the rest of the paper, we propose to remove the definition of intermediate, is enough with endpoint. \fab{Removed intermediate}} 
Given two trails $P_1=u_1u_2,\ldots,u_{k-1}u_k$ and $P_2=u_kv_2,\ldots,v_{h-1}v_h$ involving distinct edges, $P_1\circ P_2=u_1u_2,\ldots,u_{k-1}u_k,u_kv_2,\ldots,v_{h-1}v_h$ denotes their concatenation. 
Given a trail $P$, we will also use $P$ to denote the corresponding set of edges $\{u_1u_2,\ldots,u_{k-1}u_k\}$ (neglecting their order): \fab{the meaning will be clear from the context.}

\paragraph{Notation in the figures.} We will refer to an optimal solution $\OPT$ and an approximate solution $\APX$. In the coming figures, the edges colored in red belong to $\APX\setminus\OPT$, while the edges colored in blue to $\OPT\setminus\APX$. The edges in $\APX\cap \OPT$ are colored with red and blue. We will construct a collection of edge-disjoint alternating trails \fab{(as defined in next section)} $P_1,\ldots,P_k$. The edges in $P_1,...,P_{k-1}$ are dashed, while the edges in $P_k$ have an arrow pointing towards the end of the trail (we use a double arrow when the direction is not relevant).
%Dashed edges represent used edges (see discussion before Definition~\ref{def:chunk}). Edges of $P_k$ (see Invariant~\ref{inv:main}) are represented as arrows that point towards the end of the trail. Double arrows represent edges of $P_k$ for which the endpoint that is closest to the end of the trail is unknown or not important. 
Black and gray edges represent edges in $\APX\cup\OPT$ and potentially in $\APX\cup\OPT$, resp. (and that potentially could satisfy a subset of the above conditions). 

\section{Overview of Our Approach}
\label{sec:overview}

\begin{figure}[t]
    \centering
    \caption{PTAS for triangle-free $2$-matching. Here $\eps\in (0,1]$ is an input parameter.}
    \begin{algorithmic}[1]
        \State $\APX\gets \emptyset$
        \While{there exists an augmenting trail $P$ for $\APX$ containing at most $2/\eps$ edges} 
                    \State $\APX\gets \APX\triangle P$.
        \EndWhile
        \State \textbf{return} $\APX$
    \end{algorithmic}
    \label{alg:ptas}
\end{figure}
Our PTAS is simply based on local search (see also  Figure \ref{alg:ptas}). We need to define certain quantities that mimic the notion of alternating path and augmenting path in matching theory. Given a triangle-free 2-matching $\APX$, we say that a trail $P$ is alternating w.r.t. $\APX$ if $P$ does not contain consecutive edges in $\APX$ nor in $E\setminus \APX$, i.e. the edges of $\APX$ and $E\setminus \APX$ alternate in $P$. An alternating trail $P$ is augmenting for $\APX$ if $\APX \triangle P$\footnote{We recall that,
given any two sets $A$ and $B$, their symmetric difference  $A\triangle B:=(A\setminus B) \cup (B\setminus A)$ is the subset of elements contained in exactly one of the two sets $A$ and $B$.} is a triangle-free 2-matching of size $|\APX|+1$. An example of augmenting trail is given in Figure \ref{fig:exampleTrail}.

\begin{figure}[ht]
    \centering

\includegraphics{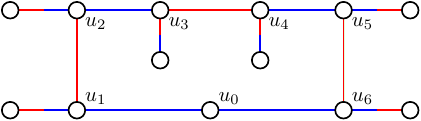}
    \caption{An augmenting trail is given by $u_0u_1,u_1u_2,u_2u_3,u_3u_4,u_4u_5,u_5u_6,u_6u_0$. Notice that no augmenting path exists in the matching sense.}     
%    An example of a graph $G$ where no augmenting path exists. In red we show a simple $2$-matching $M$ that is not maximum. The edges not in $M$ are shown in black. Every node except for $u_0$ has either degree $2$ in $M$ or degree $1$ in both $M$ and $G$. Therefore, the only candidate to be an endpoint of an augmenting path is $u_0$, so no such path can be constructed. On the other hand, the trail $u_0u_1u_2u_3u_4u_5u_6u_0$ is an augmenting trail.}
    \label{fig:exampleTrail}
\end{figure}

%The reader might wonder why we aim for alternating trails and not for alternating paths as in matching theory. Unfortunately, it might be the case that no augmenting path exists, even when we drop the condition of the simple $2$-matching being triangle-free. We show an example in Figure~\ref{fig:exampleTrail}.

Our PTAS simply increases the size of the current solution $\APX$ (initially empty) by exploiting any augmenting trail of length up to $2/\eps$, halting when no such trail exists. It is obvious that this algorithm runs in polynomial time \fab{for any constant $\eps>0$}. In order to analyze its approximation factor, we will prove the following non-trivial result.
\begin{lemma}\label{lem:augmentingTrails}
Let $\APX$ be a triangle-free 2-matching for $G=(V,E)$ and $\OPT$ be a maximum-cardinality triangle-free 2-matching for $G$ which maximizes $|\OPT\cap \APX|$. The\fab{n} there exist at least $|\OPT|-|\APX|$ edge disjoint augmenting trails for $\APX$ contained in $\APX\triangle \OPT$.    
\end{lemma}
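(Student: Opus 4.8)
The plan is to mimic the classical Berge-type argument for matchings: restrict attention to the graph $G'=(V,\APX\triangle\OPT)$ (since any augmenting trail here is augmenting in $G$), where every node has degree at most $4$, and decompose $\APX\triangle\OPT$ into a collection of edge-disjoint alternating trails and alternating closed trails (this decomposition exists because at every node the numbers of incident $\APX$-edges and $\OPT$-edges differ by a controlled amount, using the hypothesis $|N_\APX(u)|\le|N_\OPT(u)|$ that an optimal $\OPT$ maximizing $|\OPT\cap\APX|$ can be assumed to satisfy — this is the analogue of the ``\texttt{lem:blueMajority}'' step hinted at in the commented-out text). In such a decomposition the number of trails that have strictly more $\OPT$-edges than $\APX$-edges is at least $|\OPT|-|\APX|$ (a counting argument on edges: each balanced trail or closed trail contributes $0$ to $|\OPT|-|\APX|$, each ``good'' trail contributes at most ... wait, one; so we need at least $|\OPT|-|\APX|$ good trails). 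I would first carry out this decomposition and counting carefully, isolating $|\OPT|-|\APX|$ candidate trails $P$, each of which, ignoring the triangle constraint, would already be augmenting for $\APX$.

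The real work — and the main obstacle — is that flipping such a trail $P$ may create an induced triangle, so $P$ need not be augmenting in our stricter sense. The key structural idea (as the abstract advertises) is that whenever $\APX\triangle P$ contains a new triangle, that triangle must use at least one edge of $P$, and one can then \emph{reroute} or \emph{truncate} $P$ locally around the offending triangle to obtain a genuine augmenting trail, possibly shorter, still edge-disjoint from the others and still contained in $\APX\triangle\OPT$. Concretely, I would classify the triangles that can appear in $\APX\triangle P$ by how many of their three edges lie in $P$ versus in $\APX\setminus P$; for each case exhibit a surgery on $P$ (cutting at a node of the triangle, or swapping in the third triangle edge, or shortcutting) that removes the triangle while preserving the ``gains one edge, stays a triangle-free 2-matching'' property. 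Since $\OPT$ itself is triangle-free, the edges of $\OPT$ can always be used ``safely'', and this is what one leverages to argue a valid rerouting always exists.

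The steps, in order: (1) reduce to $E=\APX\cup\OPT$ and record that $\Delta(G')\le 4$; (2) establish the degree condition $|N_\APX(u)|\le|N_\OPT(u)|$ for all $u$ from maximality of $|\OPT\cap\APX|$ (swap argument); (3) decompose $\APX\triangle\OPT$ into edge-disjoint alternating (open and closed) trails and count that at least $|\OPT|-|\APX|$ of them are ``$\OPT$-surplus'' trails; (4) for each such trail, run the triangle-removal surgery to convert it into a bona fide augmenting trail, checking edge-disjointness is preserved throughout; (5) conclude. I expect step (4) to require the bulk of the case analysis and to be where all the subtlety of the triangle-free constraint is concentrated; steps (1)--(3) are routine adaptations of matching theory, and the cardinality bookkeeping in (3) is elementary once the decomposition is fixed.
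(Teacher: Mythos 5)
Your steps (1) and (2) match the paper: restricting to $E=\APX\cup\OPT$ and proving the degree--domination property (that one may take $\OPT$ with $|N_\APX(u)|\le|N_\OPT(u)|$ for all $u$) are exactly the paper's Lemma~\ref{lem:blueMajority} and the subsequent reduction. The counting in step (3) is also fine: with degree domination, every $\APX$-edge of $\APX\triangle\OPT$ at a node can be paired with an $\OPT$-edge, so a maximal alternating-trail decomposition has no $\APX$-surplus trails and hence at least $|\OPT|-|\APX|$ $\OPT$-surplus trails. So far so good.

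The gap is all of step (4). The paper does \emph{not} take a ``decompose first, then repair each trail'' route, and there are concrete reasons the repair cannot be carried out locally on a pre-committed decomposition. First, a triangle in $\APX\triangle P$ typically has one edge in $P$ and two edges in $\APX\setminus P$; fixing it by ``swapping in the third triangle edge'' or rerouting forces you to pull in edges that your decomposition has already assigned to \emph{another} trail, destroying edge-disjointness, and there is no global argument that you can make all these local choices simultaneously and consistently. Second, truncating $P$ before the offending triangle kills the $\OPT$-surplus of the trail (the truncated prefix need not gain an edge), so it no longer augments. Third — and this is the subtlety the paper's construction is designed around — an $\APX$-triangle-free extension can run into a dead end where the only continuation creates a triangle in $\OPT\triangle\calP_k$, and the paper explicitly imposes and maintains an $\OPT$-triangle-freeness invariant (Invariant~\ref{inv:main}.\ref{prp:blueTriangles}, motivated in Figure~\ref{fig:optTriProperty}) precisely to rule this out; a pre-chosen decomposition has no reason to satisfy this. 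The paper's actual construction builds each trail incrementally from short ``chunks'', with a Parity Lemma controlling degrees under the symmetric difference, a Unique Triangle corollary limiting the offending triangles to one per step, and a secondary tie-breaking invariant (Invariant~\ref{inv:tieBreak}) governing which chunk to append when several choices exist — all of which depend on how the partial trails were built so far, not on a fixed a-priori decomposition. In short: steps (1)--(3) are correct preliminaries that the paper also uses, but the entire burden of the lemma lies in step (4), and the decompose-then-repair plan as sketched would fail on edge-disjointness, on preserving augmentation under truncation, and on $\OPT$-triangle dead ends.
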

We remark that the above lemma is very similar to an analogous lemma for standard matching (with edge-disjoint augmenting trails replaces by node-disjoint augmenting paths). However, a critical difference is that augmenting simultaneously along multiple augmenting trails \fab{obtained via Lemma \ref{lem:augmentingTrails}} might lead to an infeasible solution. In other words, it might happen that, given two edge disjoint augmenting trails $P_1$ and $P_2$ for $\APX$, $\APX\triangle (P_1\cup P_2)$ is not a triangle-free 2-matching.
%\mig{I have one, but I couldn't find one satisfying all things we talked about, i.e. that those edge disjoint compatible trails do not exist. In fact, I no longer believe that a counterexample can be found. Should I still add it? \fab{Let's talk about it for the arxiv version}}

Given the above lemma, it is easy to prove Theorem \ref{thr:main}.
\begin{proof}[Proof of Theorem \ref{thr:main}]
Consider the algorithm from Figure \ref{alg:ptas}. Searching for a valid augmenting trail as in the while loop can be implemented in time $n^{O(2/\eps)}$, where $n=|V|$. Each time one such trail is identified, $|\APX|$ grows by $1$. Hence this step cannot be repeated more than $n$ times (otherwise some node would have degree more than $2$ w.r.t. $\APX$). 

Consider next the approximation factor. Let $\APX$ be the returned solution (which is obviously feasible by construction). Let $\OPT$ be an optimal solution maximizing $|\APX\cap \OPT|$. We claim that $|\APX|\geq (1-\eps)|\OPT|$. Assume by contradiction that this is not the case. By applying Lemma \ref{lem:augmentingTrails}, there exist $|\OPT|-|\APX|$ edge-disjoint augmenting trails for $\APX$ in $\APX\triangle \OPT$. Thus there exists an augmenting trail $P$ of length
$
|P|\leq \frac{|\APX\triangle \OPT|}{|\OPT|-|\APX|}\leq \frac{2|\OPT|}{\eps|\OPT|}=\frac{2}{\eps}.
$
This $P$ satisfies the condition of the while loop, contradicting the fact that $\APX$ is the solution returned by the algorithm.
\end{proof}
Another simple corollary of Lemma \ref{lem:augmentingTrails} is that a \emph{maximal} triangle-free 2-matching is a 2-approximate solution, again something analogous to standard matchings.
\begin{lemma}\label{lem:maximalApproximation}
A maximal triangle-free 2-matching (under the insertion of one edge) is a 2-approximation for the triangle-free 2-matching problem. 
\end{lemma}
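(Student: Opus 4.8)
The plan is to derive Lemma~\ref{lem:maximalApproximation} directly from Lemma~\ref{lem:augmentingTrails} by a short edge-counting argument, mirroring the classical proof that a maximal matching is a $2$-approximation of maximum matching.

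First I would fix the setup: let $\APX$ be a triangle-free $2$-matching that is maximal under the insertion of a single edge, and let $\OPT$ be a maximum-cardinality triangle-free $2$-matching maximizing $|\APX\cap\OPT|$. Since $|\OPT|\ge |\APX|$, it suffices to show $|\OPT|-|\APX|\le |\APX|$. By Lemma~\ref{lem:augmentingTrails} there exist at least $|\OPT|-|\APX|$ pairwise edge-disjoint augmenting trails for $\APX$, all contained in $\APX\triangle\OPT$.

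Next I would record two elementary properties of any augmenting trail $P$ for $\APX$. Because $P$ is alternating and $|\APX\triangle P|=|\APX|+1$, the number of edges of $P$ outside $\APX$ exceeds the number of edges of $P$ inside $\APX$ by exactly one; in particular $|P|$ is odd, $P$ begins and ends with an edge of $E\setminus\APX$, and $P$ uses exactly $(|P|-1)/2$ edges of $\APX$. Moreover, a length-$1$ augmenting trail is precisely a single edge $e\in E\setminus\APX$ with $\APX\cup\{e\}$ a triangle-free $2$-matching, which cannot exist by maximality of $\APX$. Hence every augmenting trail has length at least $3$ and therefore uses at least one edge of $\APX$; since the trail is contained in $\APX\triangle\OPT$, this edge actually lies in $\APX\setminus\OPT$.

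Finally I would combine these facts: the $|\OPT|-|\APX|$ augmenting trails are pairwise edge-disjoint, so the $\APX$-edges they use are all distinct elements of $\APX\setminus\OPT$, and each trail uses at least one of them. Therefore $|\OPT|-|\APX|\le |\APX\setminus\OPT|\le |\APX|$, i.e.\ $|\APX|\ge |\OPT|/2$, giving the claimed $2$-approximation. There is essentially no obstacle here; the only point needing (mild) care is the parity observation on augmenting trails, which rules out length-$2$ ``augmenting'' trails and, once length $1$ is excluded by maximality, forces every trail supplied by Lemma~\ref{lem:augmentingTrails} to consume a private edge of $\APX\setminus\OPT$.
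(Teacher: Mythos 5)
Your proof is correct, and it proceeds from the same starting point as the paper: apply Lemma~\ref{lem:augmentingTrails} to obtain $|\OPT|-|\APX|$ edge-disjoint augmenting trails, observe that an augmenting trail has odd length and (by maximality of $\APX$) cannot have length $1$, hence has length at least $3$. Where you diverge slightly is in the final count. The paper argues by contradiction and an averaging/pigeonhole step: if $|\APX|<|\OPT|/2$, then more than $|\OPT|/2$ edge-disjoint trails must fit inside $|\APX\triangle\OPT|\le \tfrac{3}{2}|\OPT|$ edges, forcing some trail to have length $<3$, hence length $1$ by parity, contradicting maximality. You instead charge each trail a private edge of $\APX\setminus\OPT$ (which it must contain, having length $\ge 3$ and lying in $\APX\triangle\OPT$), giving the direct inequality $|\OPT|-|\APX|\le|\APX\setminus\OPT|\le|\APX|$. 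Both counts are sound; yours avoids the contradiction framing and yields the marginally cleaner intermediate bound that the number of augmenting trails is at most $|\APX\setminus\OPT|$, but the two arguments are otherwise the same in spirit and substance.
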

\begin{proof}
Let $\APX$ be a maximal triangle-free 2-matching and $\OPT$ be a maximum triangle-free 2-matching that maximizes $|\APX\cap \OPT|$. Assume by contradiction that $|\APX|<|\OPT|/2$. By Lemma \ref{lem:augmentingTrails}, there exist $|\OPT|-|\APX|>|\OPT|/2$ edge-disjoint augmenting trails in $\APX\triangle \OPT$. Thus there exists one augmenting trail $P$ with  
$
|P|<\frac{|\APX\triangle \OPT|}{|\OPT|/2}\leq \frac{3|\OPT|/2}{|\OPT|/2}=3.
$
In particular, $|P|\leq 2$. Since an augmenting trail contains an odd number of edges, $|P|=1$, which contradicts the maximality of $\APX$.
\end{proof}

Let us sketch the proof of Lemma \ref{lem:augmentingTrails}. Recall that our goal is to build a collection $P_1,\ldots,P_q$ of $q:=|\OPT|-|\APX|$ edge-disjoint augmenting trails for $\APX$ contained in $\APX\triangle \OPT$. Our construction is iterative: given the augmenting trails $P_1,\ldots,P_{k-1}$, $k\leq q$, we will construct $P_k$.

The construction of $P_k$ itself is iterative. Let $\calP_{j}:=\cup^{j}_{i=1}P_i\subseteq \APX\triangle\OPT$. We say that the edges of $\calP_{k-1}$ are \emph{used}, and the remaining edges of $(\APX\triangle \OPT)\setminus \calP_{k-1}$ are \emph{free}. The path $P_k$ is obtained by concatenating \emph{chunks} \fab{of free edges}, where a chunk is defined as follows (see also Figure \ref{fig:possibleChunks}).
\begin{definition}\label{def:chunk}
    A \emph{chunk} $c$ is an alternating trail $u_1u_2,\ldots,u_{k-1}u_k$, $2\leq k \leq 4$, with $c\subseteq \APX\triangle \OPT$ and $u_iu_{i+1}u_{i+2}$ is a triangle in $\APX\cup\OPT$ for all $i\in [1, k-2]$.
\end{definition}
\begin{figure}[t]
    \centering
    \includegraphics{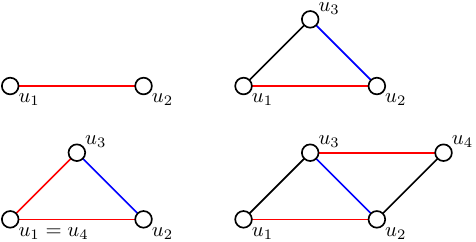}
    \caption{All possible chunks starting with an edge of $\APX\setminus\OPT$, the other case (\fab{i.e.,} the starting edge is in $\OPT\setminus\APX$) being symmetrical.} %The red edges represent edges in $\APX\setminus\OPT$, and the blue edges edges in $\OPT\setminus\APX$. The gray edges represent other edges in $G$.}
    \label{fig:possibleChunks}
\end{figure}
We keep the invariant that $P_k$ is an alternating trail starting with a free edge in $\OPT\setminus \APX$, and that $\APX\triangle P_k$ does not contain any triangle. The construction will guarantee that, while $P_k$ is not augmenting itself, it is possible to append another chunk to $P_k$.

For the sake of our analysis, we will need to build the collection of trails $P_1,\ldots,P_k$ in a carefully restricted way. First we identify some nodes as deficient.
\begin{definition}
    Given a subset of edges $E'\subseteq E$, we say that a node $u\in V$ is deficient w.r.t. $E'$ if $|N_{\APX\setminus E'}(u)| < |N_{\OPT\setminus E'}(u)|$. If $E'=\emptyset$, then we simply say that $u$ is deficient. 
\end{definition}

%\begin{definition}
%    Let $P_1, \dots, P_k$ be a collection of edge-disjoint augmenting trails. We say that a node $u\in V$ is deficient if $|N_{\APX}(u)|<|N_{\OPT}(u)|$. We say that a node $u\in V$ is deficient w.r.t. $P_1, \dots, P_k$ if $|N_{\APX\setminus\calP_k}(u)| < |N_{\OPT\setminus\calP_k}(u)|$.
%\end{definition}
Our collection of paths will satisfy the following invariant at any point of time.
\begin{invariant}\label{inv:main}
Let $P_1,\ldots,P_k$, $1\leq k\leq |\OPT|-|\APX|$, be the \fab{current collection of alternating trails at any point of our construction.}\fabr{I found previous version a bit confusing, also this one is not great}
%collection of alternating trails constructed by our procedure before any update step. 
Then for each $j=1,\ldots,k$ one has:
    \begin{enumerate}\itemsep0pt 
        \item $P_j=c^j_{1}\circ c^j_{2}\circ\dots\circ c^j_{l_j}$, where $c^j_{1},\dots, c^j_{l_j}$ are edge-disjoint chunks and $l_j\geq 1$.\label{prp:chunks}
        \item $P_j\subseteq (\APX\triangle \OPT)\setminus \calP_{j-1}$.\label{prp:disjoint}
        %$P_k\cap\calP_{k-1} = \emptyset$
        \item If $j<k$, $P_j$ is augmenting and the first and last nodes of $P_j$ are deficient w.r.t. $\calP_{j-1}$. Otherwise (i.e., $j=k$), the first node of $P_j$ is deficient w.r.t. $\calP_{j-1}$ and the first edge of $P_j$ is in $\OPT\setminus \APX$.\label{prp:startingEnding}
%        \item \fab{If $j<k$, then $P_j$ is augmenting and the first and last nodes of $P_j$ are deficient w.r.t. $\calP_{j-1}$.}\fabr{Is this fine?}\label{prp:ending}
        \item ($\APX$-triangle) $\APX\triangle P_j$ contains no triangle.\label{prp:redTriangles}
        \item ($\OPT$-triangle) $\OPT\triangle \calP_j$ contains no triangle.\label{prp:blueTriangles}
        %\item $c_1\circ\dots\circ c_i$ is a valid alternating trail w.r.t. $P_1, \dots, P_{k-1}$ that is not augmenting or the last node of $c_1\circ\dots\circ c_i$ is not deficient w.r.t. $P_1,\dots, P_{k-1}$, for all $i\in [1, l - 1]$.\label{prp:validInduction}
    \end{enumerate}
\end{invariant}
Now let us motivate the different properties. Property \ref{prp:chunks} simply states that the trails $P_j$ are obtained by appending chunks one after the other. This property will be trivially maintained by construction. Property \ref{prp:disjoint} states that each $P_j$ is edge-disjoint from the previous trails and contains only edges in $\APX\triangle \OPT$. Also this property will be trivially maintained by construction. The intuition behind Property is \ref{prp:startingEnding} is as follows. Our goal is to build a collection of augmenting trails, hence it makes sense to assume that the first $k-1$ trails satisfy this property, while $P_k$ is the initial part of one augmenting trail (hence its first edge must be in $\OPT$). For technical reasons, that will be clearer from the proofs, it is convenient to assume that each start and end node of each $P_j$ is deficient  w.r.t. $\calP_{k-1}$.
\begin{remark}
The request that the last node \fab{of $P_j$, $j<k$,} is deficient has a counter-intuitive effect: it might happen that at some point $P_k$ is already an augmenting trail, and nonetheless we have to extend it with an extra chunk since its last node is not deficient w.r.t. $\calP_{k-1}$.
%\mig{Should we add a figure here showing this? We might not have space though. \fab{For arxiv}}
\end{remark}
The $\APX$-triangle Property \ref{prp:redTriangles} is fairly natural: our goal is to build a collection of augmenting trails, hence $\APX\triangle P_j$ cannot contain triangles. Notice that for $P_k$ we enforce this property even when it is not yet an augmenting trail. The $\OPT$-triangle Property \ref{prp:blueTriangles} is due to technical reasons: though we do not care of creating triangles in $\OPT\triangle \calP_k$, this turns out to be a very helpful property in our construction. Figure \ref{fig:optTriProperty} contains an example motivating the helpfulness of this property.
\begin{figure}[t]
    \centering
    \includegraphics{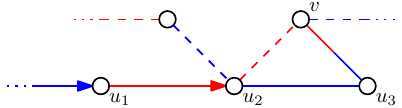}
    \caption{$u_2$ is the last node of $P_k$. The only way to keep extending $P_k$ is to add the trail $u_2u_3$. But then $u_2u_3v$ is a triangle in $\APX\triangle P_k$ and we cannot extend $P_k$ further, so this would be a dead end. If Invariant~\ref{inv:main}.\ref{prp:blueTriangles} was satisfied, this example would not exist since $u_2u_3v$ is a triangle in $\OPT\triangle\calP_k$.}
    \label{fig:optTriProperty}
\end{figure}

In order to make progress at each iteration, we exploit the following two lemmas. The first lemma (proof in Section \ref{sec:firstChunk}) is applied when the trail $P_k$ is augmenting and its last node is deficient w.r.t. $\calP_{k-1}$ (assuming $k<|\OPT|-|\APX|$, otherwise we are done). In this case we create a new alternating trail $P_{k+1}$ consisting of a single chunk and we add it to the current collection of alternating trails. Indeed, we apply this lemma also at the beginning of the algorithm to get the very first alternating trail $P_1$ in our construction.
\begin{lemma}\label{lem:firstChunk}
Let $P_1,\ldots,P_k$ be the current collection of alternating trails and $k<|\OPT|-|\APX|$. Suppose that $k=0$ (i.e., the collection is empty) or $P_k$ is augmenting and its last node is deficient w.r.t. $\calP_{k-1}$. Then there exists a chunk $P_{k+1}$ such that $P_1,\ldots,P_{k+1}$ satisfy Invariant \ref{inv:main}.
\end{lemma}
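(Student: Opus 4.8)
The plan is to find a single chunk $P_{k+1}$ that (a) lies entirely in free edges of $\APX\triangle \OPT$, (b) starts at a deficient node w.r.t.\ $\calP_k$ with a free edge in $\OPT\setminus\APX$, and (c) does not create triangles in either $\APX\triangle P_{k+1}$ or $\OPT\triangle \calP_{k+1}$. The starting point is a counting argument: since $k<|\OPT|-|\APX|$, we have $|\OPT\setminus\calP_k|>|\APX\setminus\calP_k|$. Indeed, each of $P_1,\ldots,P_k$ is augmenting (for $j<k$) and hence contains exactly one more edge of $\OPT$ than of $\APX$, so removing $\calP_k$ from both $\OPT$ and $\APX$ decreases the gap $|\OPT|-|\APX|$ by exactly $k$; this leaves a strictly positive gap. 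Summing $|N_{\OPT\setminus\calP_k}(u)|-|N_{\APX\setminus\calP_k}(u)|$ over all $u\in V$ gives $2(|\OPT\setminus\calP_k|-|\APX\setminus\calP_k|)>0$, so some node $u_1$ is deficient w.r.t.\ $\calP_k$. In particular $u_1$ has strictly more incident free $\OPT$-edges than free $\APX$-edges, so there is a free edge $u_1u_2\in\OPT\setminus\APX$; take this as the first edge of the chunk.

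Next I would try to certify that the length-1 trail $P_{k+1}=u_1u_2$ already works, i.e.\ that it is a legal chunk (it trivially is, with $k=2$ in Definition \ref{def:chunk}) and that it respects Properties \ref{prp:redTriangles} and \ref{prp:blueTriangles}. The $\APX$-triangle property for $\APX\triangle \{u_1u_2\}=\APX\cup\{u_1u_2\}$ fails exactly when $u_1u_2w$ is a triangle with $u_1w,u_2w\in\APX$; the $\OPT$-triangle property for $\OPT\triangle\calP_{k+1}=(\OPT\triangle\calP_k)\setminus\{u_1u_2\}$ is automatically fine (removing an edge cannot create a triangle) provided $\OPT\triangle\calP_k$ had none, which is Property \ref{prp:blueTriangles} for the old collection (or vacuous when $k=0$). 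So the only obstruction to using the single edge $u_1u_2$ is an $\APX$-triangle $u_1u_2w$. In that case one extends: since $u_1u_2w$ is a triangle in $\APX\cup\OPT$ with two $\APX$-edges, I would append to get a chunk $u_1u_2,u_2w$ or $u_1u_2,u_2w,wu_1$ (as in Figure \ref{fig:possibleChunks}); these extensions use edges $u_2w$ and possibly $wu_1$ which I need to argue are free, or else choose the other endpoint edge of the triangle, and one must recheck the $\APX$-triangle condition at the new chunk. Because a chunk has at most $4$ edges and each consecutive triple is forced to be a triangle in $\APX\cup\OPT$, this local repair terminates; the key point is that triangle-freeness of $\APX$ itself and of $\OPT$ itself (both given) prevents the configuration from propagating beyond length $4$.

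The main obstacle I anticipate is the triangle bookkeeping in the extension step: when the single edge $u_1u_2$ closes an $\APX$-triangle, I must show that one of the two natural chunk-extensions is simultaneously (i) made of free edges, (ii) still $\APX$-triangle-free at every new consecutive triple, and (iii) $\OPT$-triangle-free for the updated $\calP_{k+1}$ — and this last one is the genuinely new worry, since appending edges to $\calP_k$ can create $\OPT$-triangles. I expect this to be handled by a careful case analysis on how many edges of the triangle $u_1u_2w$ lie in $\OPT$ versus $\APX$ and whether $w$ is itself deficient; the alternation constraint on chunks plus the fact that $\APX$ and $\OPT$ are each individually triangle-free should pin down the few reachable cases. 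I would also verify Property \ref{prp:startingEnding} for $j=k+1$: the first node $u_1$ is deficient w.r.t.\ $\calP_k=\calP_{(k+1)-1}$ by construction, and the first edge is in $\OPT\setminus\APX$ by choice — both immediate. Properties \ref{prp:chunks} and \ref{prp:disjoint} hold since $P_{k+1}$ is by construction a single chunk built from free edges, and the trails $P_1,\ldots,P_k$ are untouched (they still satisfy the invariant because their deficiency conditions reference $\calP_{j-1}\subseteq\calP_{k-1}$, which did not change).
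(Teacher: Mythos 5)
Your opening steps are sound and match the paper: the counting argument correctly produces a node $u_1$ deficient w.r.t.\ $\calP_k$ with a free edge $u_1u_2\in\OPT\setminus\APX$; the observation that the $\OPT$-triangle property is automatic for a single added $\OPT$-edge is exactly right; and Properties \ref{prp:chunks}, \ref{prp:disjoint}, \ref{prp:startingEnding} are handled correctly. The gap is in the repair step.

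When $u_1u_2$ closes an $\APX$-triangle $u_1u_2w$, you propose only one strategy: extend the chunk. The paper, however, splits on whether $u_1w$ is a \emph{free} edge of $\APX\setminus\OPT$ or not (its Cases (1) and (2)), and in Case (1) it does \emph{not} extend -- it switches to a different free $\OPT$-edge $u_1u_1'$ emanating from the same deficient node $u_1$, which exists because $u_1$ is deficient w.r.t.\ $\calP_k$ and already carries a free $\APX$-edge. This is not a cosmetic choice: in Case (1) the would-be extension edge $u_2w$ can be in $\APX\cap\OPT$ or already in $\calP_k$, in which case it is simply not available to a chunk, and no form of ``local repair'' along the triangle can go anywhere. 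Your proposal as written has no fallback there. Moreover, the extended chunk $u_1u_2,u_2w,wu_1$ that you float is not an alternating trail at all: $u_2w$ and $wu_1$ are both $\APX$-edges of the $\APX$-triangle, so they cannot appear consecutively in a chunk, and $wu_1$ may also lie outside $\APX\triangle\OPT$.

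Finally, even in the case where extension \emph{is} the right move (the paper's Case (2)), your termination argument (``triangle-freeness of $\APX$ and $\OPT$ prevents the configuration from propagating beyond length $4$'') is not a proof. The paper has to establish, via the Parity Lemma and the Unique Triangle Corollary, that (i) the offending triangle is unique; (ii) $u_2w$ is in $\APX\setminus(\OPT\cup\calP_k)$ (Claim~\ref{claim:startingCase2u1u2}); (iii) if the two-edge chunk now violates the $\OPT$-triangle invariant, the new $\OPT$-triangle is unique and forces a third free edge $u_2u_3\in\OPT\setminus(\APX\cup\calP_k)$ (Claim~\ref{claim:startingCase2u2u3}); and (iv) the three-edge chunk then satisfies \emph{both} triangle invariants, which is checked by degree counting at $u_2$ and $u_0$ via the Parity Lemma. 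None of this follows merely from bounded chunk length; you would need to reproduce these verifications explicitly for the proposal to constitute a proof.
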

The second lemma (proof in Section \ref{sec:nextChunk}) is applied in any other case, namely when $P_k$ is not augmenting or it is augmenting but its last node is not deficient w.r.t. $\calP_{k-1}$. In this case we append one extra chunk at the end of $P_k$.
\begin{lemma}\label{lem:nextChunk}
Let $P_1,\ldots,P_k$ be the current collection of alternating trails, $k\leq |\OPT|-|\APX|$. Suppose that $P_k$ is not augmenting or it is augmenting but its last node is not deficient w.r.t. $\calP_{k-1}$. Then there exists a chunk $c$ such that $P_1,\ldots,P_{k-1},P'_{k}$ satisfy Invariant \ref{inv:main}, where $P'_k:=P_k\circ c$.
\end{lemma}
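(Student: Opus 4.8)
The plan is to prove Lemma~\ref{lem:nextChunk} by a careful case analysis on the local structure of $\APX\cup\OPT$ around the last node $u$ of $P_k$. First I would set up the situation: let $e$ be the last edge of $P_k$ (say $e\in\OPT\setminus\APX$; the case $e\in\APX\setminus\OPT$ is symmetric, modulo the asymmetry forced by Invariant~\ref{inv:main}.\ref{prp:startingEnding} and~\ref{prp:blueTriangles}) and let $u$ be its last endpoint. Because $P_k\subseteq\APX\triangle\OPT$ and $\calP_{k-1}$ uses further edges, at $u$ there are some free edges of $\APX\triangle\OPT$ not yet in $P_k$. The key observation to establish first is that, under the failure hypothesis ($P_k$ is not augmenting, or it is augmenting but $u$ is not deficient w.r.t.\ $\calP_{k-1}$), there must exist at least one free edge $f$ at $u$ with $f\in\APX\setminus\OPT$ that can legally extend $P_k$ as an alternating trail; the non-deficiency / non-augmenting hypothesis is exactly what forces such an $f$ to exist (if $u$ were deficient and $P_k$ augmenting we'd be in the other lemma). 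I would phrase this as a small preliminary claim: counting degrees of $u$ in $\APX\setminus\calP_{k-1}$ versus $\OPT\setminus\calP_{k-1}$ and using that $P_k$ enters $u$ via a $\OPT$-edge.

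Next, having picked such an $f=uw$, I would try to take the chunk $c$ to be just the single edge $f$ (a chunk of length~$1$, the $k=2$ case of Definition~\ref{def:chunk}). This works unless appending $f$ violates one of the invariant properties: (a) $\APX\triangle(P_k\circ f)$ creates a triangle (violating~\ref{prp:redTriangles}), or (b) $\OPT\triangle(\calP_{k-1}\cup P_k\cup f)$ creates a triangle (violating~\ref{prp:blueTriangles}). In case (a), the new triangle must use $f$; since $f\in\APX$ and $\APX\triangle(P_k\circ f)$ drops $f$ from $\APX$ but the triangle is created, the triangle must actually be on the $\OPT$-side — i.e.\ there is a triangle $uwx$ in $\APX\cup\OPT$ whose edges interact badly. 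Here is where chunks of length $2$ and $3$ come in: instead of stopping at $f$, I extend along this triangle, appending $wx$ and possibly $xu$-type edges, precisely matching the "$u_iu_{i+1}u_{i+2}$ is a triangle in $\APX\cup\OPT$" requirement in Definition~\ref{def:chunk}. The possible shapes are exactly the ones enumerated in Figure~\ref{fig:possibleChunks}, so the argument is: whenever the naive one-edge extension fails, the obstruction is a triangle, and walking around that triangle gives a longer (length $\le 4$) chunk that does extend $P_k$ legally; one then re-checks that the extended chunk doesn't itself re-create the same kind of obstruction, and the length bound $k\le 4$ in the chunk definition is what guarantees this recursion terminates after at most one more step.

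The main obstacle I anticipate is the bookkeeping needed to show that the extended chunk does not re-introduce a forbidden triangle — i.e.\ that the process of "walk around the triangle" cannot loop forever or close into a bad configuration — and, relatedly, handling the case where the relevant triangle shares edges with $\calP_{k-1}$ or with the already-built portion of $P_k$. The $\OPT$-triangle invariant~\ref{prp:blueTriangles} is exactly the tool that rules out the worst such scenario (as the authors illustrate in Figure~\ref{fig:optTriProperty}): because $\OPT\triangle\calP_k$ is triangle-free, the obstructing triangle cannot be "purely on the $\OPT$ side already accounted for," which pins down its structure enough to enumerate the finitely many chunk shapes. I would also need to verify the endpoint condition in~\ref{prp:startingEnding} is undisturbed: appending a chunk changes only the last node of $P_k$, and the first node is untouched, so the "first node deficient w.r.t.\ $\calP_{k-1}$, first edge in $\OPT\setminus\APX$" part is preserved automatically; the new last node need not be deficient, which is fine because~\ref{prp:startingEnding} imposes nothing on the last node of $P_k$ when $j=k$.

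Finally I would assemble the pieces: fix $e$, $u$ as above; invoke the preliminary claim to get a free $\APX$-edge $f$ at $u$; attempt $c=f$; if a triangle obstruction arises, replace $c$ by the length-$2$ or length-$3$ (or at most length-$4$) chunk obtained by traversing that triangle, using Invariant~\ref{inv:main}.\ref{prp:blueTriangles} to constrain which triangles can appear; and check properties~\ref{prp:chunks}--\ref{prp:blueTriangles} for $P'_k=P_k\circ c$ one by one, most of which are either inherited from $P_k$ or immediate from the construction. The only genuinely non-trivial verifications are~\ref{prp:redTriangles} and~\ref{prp:blueTriangles} for the new last segment, which the case analysis of Figure~\ref{fig:possibleChunks} is designed to handle exhaustively.
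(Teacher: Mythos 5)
Your high-level plan---pick a free extension edge at the last node of $P_k$, attempt a one-edge chunk, and if a triangle obstruction appears, enlarge the chunk to length $2$ or $3$ by walking around the offending triangle---does match the paper's proof in its easy cases, and your preliminary degree-counting claim about the existence of a legal extension edge essentially reproduces the paper's Lemma~\ref{lem:continueTrail}. However, there is a configuration that this purely local construction cannot handle, and it is precisely the one you flag as ``the main obstacle'' without actually resolving: the blocking triangle can share an edge with the already-built portion of $P_k$ itself, not merely with $\calP_{k-1}$. Concretely, this is Case (3.2) of both of the paper's extension lemmas (Lemmas~\ref{lem:extendingTrailBlue} and~\ref{lem:extendingTrailRed}): the naive one-edge chunk $u_1u_2$ and the two-edge chunk $u_1u_2,u_2u_3$ both create forbidden triangles, but the third edge $u_3u_4$ that a length-$3$ chunk would need is \emph{already an edge of $P_k$}. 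A chunk must be edge-disjoint from $\calP_k$, so you cannot append it; ``walking around the triangle'' is simply not an available move. The $\OPT$-triangle Invariant~\ref{inv:main}.\ref{prp:blueTriangles} alone does not exclude this situation, contrary to what you assert, and the bound $k\le 4$ in Definition~\ref{def:chunk} is merely the maximum chunk size, not a termination guarantee for any recursive repair.

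The paper's essential additional idea---absent from your proposal---is a second invariant, the tie-breaking Invariant~\ref{inv:tieBreak}, which constrains \emph{which} chunk is chosen at each extension step of the construction of $P_k$: prefer a chunk whose first edge shares an $\APX\cup\OPT$-triangle with the last edge of the current chunk, and otherwise minimize the number of offending triangles the new edge participates in. With this invariant maintained throughout, Case (3.2) is shown to be impossible, but the argument is non-local: one must rewind $P_k$ to the chunk boundary where the already-used edges $u_1u_3$ and $u_3u_4$ were added (Cases (3.2.1) and (3.2.2)), exhibit an alternative chunk that would have been preferred under the tie-break, and derive a contradiction with Invariant~\ref{inv:tieBreak}.\ref{inv:tieBreak:destroyTriangle} or~\ref{inv:tieBreak}.\ref{inv:tieBreak:avoidTriangleBlue}/\ref{inv:tieBreak:avoidTriangleRed}. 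Without introducing and maintaining some such global constraint on how $P_k$ was built, the lemma cannot be established by case analysis at the last node alone; your proposal as written has a genuine gap here.
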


When applying the above lemmas, it might happen that multiple chunks satisfy the claim. In that case, \fab{our analysis requires to break such ties in a careful way.} The details of the tie-breaking rule will be given in Section \ref{sec:nextChunk}.

The proof of Lemma \ref{lem:augmentingTrails} follows easily.
\begin{proof}[Proof of Lemma \ref{lem:augmentingTrails}]
Assume $|\OPT|>|\APX|$, otherwise the claim trivially holds. We initialize our collection of alternating trails with the chu\af{n}k $P_1$ implied by Lemma \ref{lem:firstChunk}. Given the current set of alternating trails $P_1,\ldots,P_k$, we get a new collection containing more edges by applying Lemma \ref{lem:firstChunk} or \ref{lem:nextChunk}, depending on the cases. The process ends when $k=|\OPT|-|\APX|$ and $P_k$ is augmenting. Clearly this condition must hold after a polynomial number of updates since in each update we increase $|\calP_{k}|$ by at least $1$ and the total number of edges that we can add is at most $|\APX\triangle \OPT|\leq 2n$.     
\end{proof}
In the following we will assume that $|\OPT|>|\APX|$, and we will focus on the edges of $\APX\cup \OPT$, neglecting any other edge of $G$. Notice that, under this assumption, the degree of any node is at most $4$.

\paragraph{\bf Comparison with \cite{HartvigsenPhdThesis}.} Harvitgsen\fabr{Added this paragraph, check. \af{Looks very nice}} also exploits the notion of augmenting trail (augmenting path in his work). His algorithm is based on the search of a single augmenting trail at each iteration. To find such a trail efficiently, he has to deal with a large number of special subgraphs which play a role similar to blossoms in standard matching theory. Notice that we do not need to deal with such structures. Harvitgsen's algorithm implies that a solution is not maximum iff it admits an augmenting trail: it is possible to prove the same in a much simpler way with our approach, via a massive simplification of Lemma \ref{lem:augmentingTrails} (omitted here). As far as we can see, Lemma \ref{lem:augmentingTrails} is not implied by Harvitgsen's work.

\section{Proof of Lemma \ref{lem:firstChunk}}
\label{sec:firstChunk}

In this section we prove Lemma  \ref{lem:firstChunk}. To that aim, we will need the following two technical lemmas. 
%Since the lemma is trivial for $|\OPT|=|\APX|$, we will assume that $|\OPT|>|\APX|$. 
% It remains to prove Lemmas \ref{lem:firstChunk} and \ref{lem:nextChunk}. 
 It is obvious that the average degree of the nodes w.r.t. $\OPT$  is not smaller than w.r.t. $\APX$. The following useful technical lemma shows that the same holds for every single node. 
\begin{lemma}[Degree Domination]\label{lem:blueMajority}
    Let $\APX$ be a triangle-free $2$-matching of a graph $G=(V, E)$, and $\OPT$ be a maximum-cardinality triangle-free $2$-matching of $G$ which maximizes $|\APX\cap \OPT|$. Then, for every $u\in V$, $|N_{\APX}(u)|\leq |N_{\OPT}(u)|$.
\end{lemma}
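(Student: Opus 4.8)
The plan is to prove the Degree Domination Lemma by contradiction, using the extremal choice of $\OPT$ (maximizing $|\APX\cap\OPT|$ among all maximum-cardinality triangle-free 2-matchings). Suppose there is a node $u$ with $|N_\APX(u)| > |N_\OPT(u)|$. Since degrees are at most $2$, this means either ($|N_\APX(u)|=2$, $|N_\OPT(u)|\le 1$) or ($|N_\APX(u)|=1$, $|N_\OPT(u)|=0$). The idea is to show that in either case we can modify $\OPT$ — by flipping along an alternating structure rooted at $u$ inside $\APX\triangle\OPT$ — to obtain another maximum-cardinality triangle-free 2-matching $\OPT'$ with $|\OPT'|=|\OPT|$ but $|\APX\cap\OPT'| > |\APX\cap\OPT|$, contradicting the extremal choice. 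First I would restrict attention to the graph $(V,\APX\triangle\OPT)$ (or $(V,\APX\cup\OPT)$), whose components are paths and cycles with edges alternating between $\APX\setminus\OPT$ and $\OPT\setminus\APX$.

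The key step is the local move. Since $u$ has strictly higher $\APX$-degree, $u$ is an endpoint of some maximal alternating path $Q$ in $\APX\triangle\OPT$ that starts with an edge of $\APX\setminus\OPT$. Consider the other endpoint $v$ of $Q$. Because $|\OPT|\ge|\APX|$ overall (this is the easy averaging fact, or rather it follows since $\OPT$ is maximum-cardinality), $Q$ cannot start and end with $\APX$-edges while being "deficient-to-deficient" — more carefully, one argues that $v$ cannot have strictly higher $\OPT$-degree forced in a way that makes $\OPT\triangle Q$ shrink. The candidate replacement is $\OPT' := \OPT\triangle Q$. We need: (i) $\OPT'$ is a 2-matching: since $Q$ is an alternating path ending appropriately, flipping it keeps every degree $\le 2$; (ii) $|\OPT'|\ge|\OPT|$, hence $=|\OPT|$ by maximality — this needs $Q$ to not be a path that starts and ends with $\OPT$-edges (it starts with an $\APX$-edge by choice, so it removes at least as many $\OPT$-edges' worth... actually we want it to add at least as many as it removes, which requires the far end to not be "worse"); (iii) $\OPT'$ is triangle-free; (iv) $|\APX\cap\OPT'| > |\APX\cap\OPT|$, which holds because near $u$ we gained an $\APX$-edge and lost none of $\APX\cap\OPT$ along $Q$ (the $\APX\setminus\OPT$ edges of $Q$ become part of $\OPT'$, and the $\OPT\setminus\APX$ edges of $Q$ leave, but none of those were in $\APX$).

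The main obstacle I expect is the triangle-freeness of $\OPT' = \OPT\triangle Q$: flipping a whole alternating path could create an induced triangle in $(V,\OPT')$ even though neither $\APX$ nor $\OPT$ had one. To handle this I would not necessarily flip the entire maximal path $Q$, but rather find the shortest initial segment of a suitable alternating walk starting at $u$ whose flip is safe, or argue that if flipping $Q$ creates a triangle, then that triangle's structure lets us shorten $Q$ to a sub-path $Q'$ still fixing the deficiency at $u$ (e.g., stopping at the triangle). One likely needs a short case analysis on how a new triangle $xyz$ in $\OPT'$ can arise: at most... edges of it come from $\APX\setminus\OPT$ (the newly added edges), and since $\APX$ is triangle-free not all three are in $\APX$, and since $\OPT$ is triangle-free not all three survive from $\OPT$; this constrains the triangle to straddle $Q$ in a way that can be locally repaired or avoided by truncating $Q$ at an earlier node. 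A secondary subtlety is ensuring the truncation still increases $|\APX\cap\OPT'|$ and does not decrease $|\OPT'|$ below $|\OPT|$; I would pick the truncation point to be at a node where the alternating path has just traversed an $\APX$-edge, so the prefix has a balanced or favorable count. Once the safe flip is established, the contradiction with the extremal choice of $\OPT$ is immediate.
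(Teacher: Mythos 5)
Your overall framing---contradict the extremal choice of $\OPT$ by flipping along an alternating structure rooted at $u$ inside $\APX\triangle\OPT$---is the right high-level idea, and the paper's argument can be read as a degenerate, very short version of it. But your proposal has a genuine gap exactly where you flagged it: triangle-freeness after the flip. Your remedy (``truncate $Q$ at an earlier node where the prefix is safe'') fails in the single critical case that the paper spends nearly its entire proof handling: when even the \emph{first} edge $uv\in\APX\setminus\OPT$ of $Q$ already closes a triangle in $\OPT\cup\{uv\}$. There is no shorter nontrivial prefix to truncate to, so the truncation strategy gives no move at all; and extending $Q$ past $v$ does not help, since the offending triangle involves an edge $vw$ that (as the paper shows) lies in $\APX\cap\OPT$ and hence is not part of the alternating path $Q$, so no amount of flipping along $Q$ removes it.

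What the paper does instead is a genuinely different and more local argument. It never considers long alternating paths. It shows (Claim 1) that for \emph{every} $uv\in\APX\setminus\OPT$, $\OPT\cup\{uv\}$ contains a unique triangle $uvw$, and (Claim 2) that $vw$ must be in $\APX\cap\OPT$ (otherwise a length-$2$ swap improves $|\APX\cap\OPT|$). These two claims are the structural fact your proposal is missing. Given them, the contradiction is purely combinatorial and makes no further reference to $\OPT$-modifications: if $u$ has one $\APX\setminus\OPT$ neighbor $v'$, then the second $\APX$-edge at $u$ must be $uw'\in\APX\cap\OPT$, forcing the triangle $uv'w'$ to lie inside $\APX$; if $u$ has two $\APX\setminus\OPT$ neighbors $v',v''$, the unique third vertices $w',w''$ of the corresponding triangles coincide because $|N_\OPT(u)|\le 1$, and then $w'$ has three $\OPT$-neighbors. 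In short: you need a second, non-swap mechanism to land the contradiction in the ``already-a-triangle-on-the-first-edge'' case, and that mechanism is absent from your plan.
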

\begin{proof}
    Assume to get a contradiction that there is a node $u\in V$ such that $|N_{\APX}(u)| > |N_{\OPT}(u)|$.\fabr{Use of v,v' was confusing, reprased a bit} 
%    \mi{Let $v\in V$ be a node such that $uv\in \APX\setminus \OPT$}. We will first show that $\OPT\cup\{uv\}$ must contain exactly one triangle $uvw$, with $vw\in \APX\cap \OPT$. Then we prove that this leads to a contradiction.

\begin{Claim}\label{lem:blueMajority:claim1}
\fab{Consider any $uv\in \APX\setminus \OPT$.} Then $\OPT\cup\{uv\}$ contains exactly one triangle $uvw$.
\end{Claim}
\begin{proof}
    Assume by contradiction that $\OPT\cup\{uv\}$ does not contain a triangle. Notice that the degree in $\OPT\cup\{uv\}$ of every node is the same as their degree in $\OPT$, except for $u$ and $v$, for which it is increased by $1$. Since $|N_{\OPT}(u)|\leq 1$, $|N_{\OPT\cup\{uv\}}(u)|\leq 2$. If $|N_{\OPT}(v)|<2$, then $\OPT\cup\{uv\}$ is a triangle-free $2$-matching of higher cardinality than $\OPT$, a contradiction. On the other hand, if $|N_{\OPT}(v)| = 2$, then since $uv\in \APX\setminus \OPT$, there must be an edge $e\in \OPT\setminus \APX$ incident to $v$. Thus $\OPT\cup\{uv\}\setminus\{e\}$ is clearly a simple $2$-matching, and it contains no triangle (because $\OPT\cup\{uv\}$ also contains no triangle). But then $\OPT\cup\{uv\}\setminus\{e\}$ is a maximum triangle-free $2$-matching with $|\APX\cap(\OPT\cup\{uv\}\setminus\{e\})|>|\APX\cap \OPT|$, a contradiction.

    Therefore, $\OPT\cup\{uv\}$ contains a triangle. Since $\OPT$ does not contain any triangle, every triangle in $\OPT\cup\{uv\}$ must contain the edge $uv$. Let $uvw$ be a triangle in $\OPT\cup\{uv\}$. $|N_{\OPT}(u)|\leq 1$, so $N_{\OPT\cup\{uv\}}(u)=\{v, w\}$. Thus, it must be that $uvw$ is the only triangle in $\OPT\cup\{uv\}$.     
\end{proof}
\begin{Claim}\label{lem:blueMajority:claim2}
\fab{Let $uvw$ be any triangle as in Claim \ref{lem:blueMajority:claim1}. Then}
$vw\in \APX\cap \OPT$.
\end{Claim}
\begin{proof}
    Since $vw\in \OPT\cup\{uv\}$, it must be that $vw\in\OPT$. Assume by contradiction that $vw\in \OPT\setminus \APX$. \fab{Using the fact that $|N_{\OPT}(u)|=1$}, $\OPT':=\OPT\cup\{uv\}\setminus\{vw\}$ is a $2$-matching. Furthermore, since $uvw$ is the only triangle in $\OPT\cup\{uv\}$, $\OPT'$ is also triangle-free, contradicting the definition of $\OPT$ \fab{since} $|\APX\cap \OPT'|>|\APX\cap \OPT|$. 
\end{proof}

Since $|N_{\APX}(u)| > |N_{\OPT}(u)|$, there exists at least one edge $u\fab{v'}\in\APX\setminus\OPT$. By Claims~\ref{lem:blueMajority:claim1} and~\ref{lem:blueMajority:claim2}, $\OPT\cup\{u\fab{v'}\}$ contains exactly one triangle $u\fab{v'}\mi{w'}$, with $\fab{v'}\mi{w'}\in\OPT\cap\APX$. One of the following two cases must happen:

\vspace{2.5mm}\noindent\textbf{(1) $\mathbf{N_{\APX\setminus \OPT}(u)=\{\fab{v'}\}}$.} Since $|N_{\APX\setminus \OPT}(u)|=1$ and $|N_{\APX}(u)| > |N_{\OPT}(u)| = |\{\mi{w'}\}|=1$, it must be that $u\mi{w'}\in \APX\cap \OPT$. But then $u\fab{v'}\mi{w'}$ is a triangle in $\APX$, a contradiction.

\vspace{2.5mm}\noindent\textbf{(2) $\mathbf{N_{\APX\setminus \OPT}(u)=\{\fab{v'}, \fab{v''}\}}$.} By Claims~\ref{lem:blueMajority:claim1} and~\ref{lem:blueMajority:claim2} \fab{applied to $uv''$}, $\OPT\cup\{u\fab{v''}\}$ must contain exactly one triangle $u\fab{v''}\mi{w''}$, with $\fab{v''}\mi{w''}\in \APX\cap \OPT$. Since $N_{\OPT}(u)=\{\mi{w'}\}$, it must be that $\mi{w'}=\mi{w''}$. But then $N_{\OPT}(\mi{w'})=\{u, \fab{v'}, \fab{v''}\}$, a contradiction to the fact that $\OPT$ is a $2$-matching.
\end{proof}
\begin{remark}
Lemma \ref{lem:augmentingTrails} would hold by replacing $\OPT$ with any triangle-free 2-matching $\OPT'$  satisfying the degree domination property as in Lemma \ref{lem:blueMajority}. It is easy to infer it from our proofs.
\end{remark}

The next lemma is very helpful to reduce the number of cases in our case analysis. \fab{Recall that $\calP_j:=\cup_{i=1}^{j}P_i$.}
\begin{lemma}[Parity]\label{lem:parityLemma}
Let $P_1,\ldots,P_{k-1}$ be a collection of edge-disjoint augmenting trails \fab{in $\APX\triangle \OPT$}. Let $P_k$ be an alternating trail contained in $(\APX\triangle\OPT)\setminus \calP_{k-1}$, starting with an edge in $\OPT\setminus \APX$ and whose first node is deficient w.r.t. $\calP_{k-1}$. Then, for every $u\in V$ excluding possibly the last node of $P_k$, one has 
\begin{enumerate}\itemsep0pt
    \item $|N_{\APX\triangle P_k}(u)|\leq 2$;\label{lem:parityLemma:1}
    \item $|N_{\OPT\triangle \calP_k}(u)|\leq 2$.\label{lem:parityLemma:2}
\end{enumerate}
\end{lemma}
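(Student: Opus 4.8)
The plan is to track, for each node $u$, the parity of how many edges incident to $u$ appear in each of the relevant alternating trails, and combine this with the deficiency hypothesis and the degree bounds $|N_\APX(u)|,|N_\OPT(u)|\le 2$. The key elementary observation is that for an alternating trail $P$ and a node $u$ that is \emph{not} an endpoint of $P$, the edges of $P$ incident to $u$ come in pairs, one in $\APX$ and one in $\OPT\setminus\APX$ (or more precisely one in each side of the partition $\APX$ vs.\ $E\setminus\APX$), because the trail passes through $u$ alternating sides; hence $|N_{P\cap\APX}(u)|=|N_{P\cap(\OPT\setminus\APX)}(u)|$. If $u$ \emph{is} an endpoint (but $P$ is not a closed trail at $u$), these two counts differ by exactly $1$. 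I would first state and prove this as a small sub-claim, handling carefully the case where $u$ occurs as an endpoint of the trail versus an interior vertex, and the subtlety that in a closed trail ($u_1=u_k$) the first/last node still behaves like an interior node parity-wise.

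\textbf{Part \ref{lem:parityLemma:1}.} Fix $u$ not equal to the last node of $P_k$. I want $|N_{\APX\triangle P_k}(u)|\le 2$. Write $a=|N_\APX(u)|\le 2$. Flipping along $P_k$ changes the $\APX$-degree at $u$ by $|N_{P_k\cap(\OPT\setminus\APX)}(u)|-|N_{P_k\cap(\APX\setminus\OPT)}(u)|$ — we \emph{add} the non-$\APX$ edges of $P_k$ at $u$ and \emph{remove} the $\APX$ edges of $P_k$ at $u$. By the sub-claim this difference is $0$ unless $u$ is the first node of $P_k$ (it cannot be the last by assumption, and interior nodes give $0$); and if $u$ is the first node, since $P_k$ starts with an edge of $\OPT\setminus\APX$, the difference is exactly $+1$, giving $|N_{\APX\triangle P_k}(u)| = a+1$. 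So I only need $a\le 1$ when $u$ is the first node of $P_k$. This is exactly where the deficiency hypothesis enters: the first node of $P_k$ is deficient w.r.t.\ $\calP_{k-1}$, i.e.\ $|N_{\APX\setminus\calP_{k-1}}(u)|<|N_{\OPT\setminus\calP_{k-1}}(u)|$, which forces $|N_{\APX\setminus\calP_{k-1}}(u)|\le 1$ (the right side is at most $2$), and since the edges of $\calP_{k-1}$ incident to $u$ split evenly between $\APX$ and $\OPT$ (the $P_i$, $i<k$, are augmenting trails, and $u$ — being the first node of the free-edge trail $P_k$ — need not be an endpoint of any $P_i$; here I need to be slightly careful, but even if it is an endpoint of some $P_i$, one can argue $|N_\APX(u)|\le |N_\OPT(u)|$ by Lemma \ref{lem:blueMajority} and bound accordingly). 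I'd then conclude $a\le 1$, hence $|N_{\APX\triangle P_k}(u)|\le 2$.

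\textbf{Part \ref{lem:parityLemma:2}.} Now consider $\OPT\triangle\calP_k = \OPT\triangle(\calP_{k-1}\cup P_k)$, the $P_i$ being edge-disjoint. Flipping $\OPT$ along all of $\calP_k$: each augmenting trail $P_i$ ($i<k$) has \emph{both} endpoints deficient w.r.t.\ $\calP_{i-1}$; I want to show that after flipping $\OPT$ along $\calP_{i}$ the $\OPT$-degree at every node stays $\le 2$. For interior nodes of each $P_i$ the parity argument gives net change $0$. For an endpoint $u$ of $P_i$: an augmenting trail for $\APX$ has odd length and starts and ends with $\OPT\setminus\APX$ edges, so flipping $\OPT$ along $P_i$ \emph{decreases} the $\OPT$-degree at each endpoint by $1$; so flips along the $P_i$ only help on the $\OPT$ side at their own endpoints. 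The only possible increase is at the last node of $P_k$ (excluded) or at the first node of $P_k$, where flipping $P_k$ on the $\OPT$ side removes the starting $\OPT$-edge, again a decrease. Wait — more carefully: flipping $\OPT$ along a trail \emph{removes} the $\OPT$-edges of the trail and \emph{adds} the $(\APX\setminus\OPT)$-edges; at an interior node net zero, at the first node of $P_k$ it removes one $\OPT$-edge net $-1$. So actually $|N_{\OPT\triangle\calP_k}(u)|\le|N_\OPT(u)|\le 2$ for every $u$ except possibly where some $P_i$ or $P_k$ contributes a net $+1$; I'd check that the only places a net $+1$ can occur are endpoints of the $P_i$'s — and there the starting/ending edge is in $\OPT$, so it's a net $-1$, not $+1$ — and the endpoints of $P_k$, of which the last is excluded and the first starts with an $\OPT$-edge (net $-1$). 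So \ref{lem:parityLemma:2} follows.

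\textbf{Main obstacle.} The delicate point is the bookkeeping of parity at nodes that are simultaneously endpoints of several of the trails $P_1,\dots,P_{k-1}$ and possibly the first node of $P_k$: I must verify the net degree change at such a node is still controlled, using that each $P_i$ contributes $-1$ to the $\OPT$-degree (resp.\ a controlled change to the $\APX$-degree) at its endpoints, and that the deficiency of the first node of $P_k$ is defined relative to $\calP_{k-1}$ precisely so these contributions telescope correctly. Making the "alternating trail through an interior vertex pairs up edges by side" claim fully rigorous, including closed trails and repeated vertices, is the other place to be careful, though it is essentially a counting argument on the sequence of edges of the trail.
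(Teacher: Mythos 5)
Your proposal takes the same route as the paper: a local pairing argument, at each node $u$, for the edges of the alternating trails incident to $u$, combined with the deficiency hypothesis at the first node of $P_k$. Part~\ref{lem:parityLemma:2} is correctly outlined. However, there is one genuine gap in Part~\ref{lem:parityLemma:1}.

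When $u$ is the first node of $P_k$, you correctly reduce the task to showing $|N_{\APX}(u)|\le 1$, and you observe that the hypothesis gives $|N_{\APX\setminus\calP_{k-1}}(u)|<|N_{\OPT\setminus\calP_{k-1}}(u)|$. To upgrade this ``relative'' deficiency to ``absolute'' deficiency $|N_{\APX}(u)|<|N_{\OPT}(u)|$ (which, with $|N_{\OPT}(u)|\le 2$, gives $|N_{\APX}(u)|\le 1$), you need the inequality $|N_{\APX\cap\calP_{k-1}}(u)|\le|N_{\OPT\cap\calP_{k-1}}(u)|$, which is exactly what the paper proves as a separate claim: each $P_i$, $i<k$, is an alternating trail whose first and last edges lie in $\OPT\setminus\APX$, so at any node its $\APX$-edges are matched injectively into its $\OPT$-edges. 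You gesture at this with ``the edges of $\calP_{k-1}$ split evenly between $\APX$ and $\OPT$'', but that is not literally correct when $u$ is an endpoint of some $P_i$ (there $\OPT$ has an excess), and the fallback you offer --- invoking the Degree Domination Lemma~\ref{lem:blueMajority} --- only yields $|N_{\APX}(u)|\le|N_{\OPT}(u)|\le 2$, not $\le 1$, so it does not close the gap. The fix is simply to replace ``split evenly'' with the one-sided domination $|N_{\APX\cap\calP_{k-1}}(u)|\le|N_{\OPT\cap\calP_{k-1}}(u)|$ and to state this as its own small lemma before using it (this is the content of the paper's auxiliary claim). Once that is in place, the rest of your argument goes through. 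A secondary, harmless imprecision: at the first/last node of a \emph{closed} augmenting trail $P_i$ the $\OPT$-minus-$\APX$ excess is $2$, not $0$ as ``behaves like an interior node'' suggests; this is still the right parity and still non-positive when you flip $\OPT$ along $P_i$, so it does not affect Part~\ref{lem:parityLemma:2}, but it is worth stating precisely.
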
   
%    Let $P_1,\dots, P_{k-1}$ be a valid collection of augmenting trails and let $P_k$ be an alternating trail satisfying Properties~\ref{prp:chunks},~\ref{prp:disjoint} and~\ref{prp:starting} of valid alternating trails. Then for every  $u\in V$ except the last node of $P_k$ one has $|N_{\APX\triangle P_k}(u)|\leq 2$ and $|N_{\OPT\triangle \calP_k}(u)|\leq 2$.
%\end{lemma}
\begin{proof}
We start by observing the following.
\begin{Claim}\label{clm:parity}
For each node $w$ one has:
\begin{enumerate}\itemsep0pt
    \item $|N_{\OPT\triangle \calP_{k-1}}(w)|\leq |N_{\OPT}(w)|$;\label{clm:parity:1}
    \item $|N_{\APX}(w)|-|N_{\APX\setminus \calP_{k-1}}(w)|\leq |N_{\OPT}(w)|-|N_{\OPT\setminus \calP_{k-1}}(w)|$.\label{clm:parity:2}
\end{enumerate}
%For each first node $w$ of $P_k$ is deficient. 
\end{Claim}
\begin{proof}
Let $\calP^w_{k-1}$ be the edges of 
$\calP_{k-1}$ incident to $w$. Notice that the inequalities do not change if we replace $\calP_{k-1}$ with $\calP^w_{k-1}$. Since the trails $P_1,\ldots,P_{k-1}$ are alternating, we can pair each edge $aw$ of $\APX\cap \calP^w_{k-1}$ with an edge $wb$ of $\OPT\cap \calP^w_{k-1}$ with the exception of the first and last edges of each $P_1,\ldots,P_{k-1}$ (which belong to $\OPT$) in case they \fab{have $w$ as an endpoint}. Each such pair of edges contributes with $+1$ to  both $|N_{\OPT}(w)|-|N_{\OPT\setminus \calP^w_{k-1}}(w)|$ and $|N_{\APX}(w)|-|N_{\APX\setminus \calP^w_{k-1}}(w)|$, and with $0$ to $|N_{\OPT}(w)|-|N_{\OPT\triangle \calP^w_{k-1}}(w)|$. The remaining edges of $\calP^w_{k-1}$ contribute with $+1$ to both $|N_{\OPT}(w)|-|N_{\OPT\setminus \calP^w_{k-1}}(w)|$ and $|N_{\OPT}(w)|-|N_{\OPT\triangle \calP^w_{k-1}}(w)|$, and with $0$ to $|N_{\APX}(w)|-|N_{\APX\setminus \calP^w_{k-1}}(w)|$. The claim follows. 
%Thus 
%$$
%1\leq |N_{\OPT\setminus \calP_{k-1}}(w)|-|N_{\APX\setminus \calP_{k-1}}(w)|\leq |N_{\OPT}(w)|-|N_{\APX}(w)|.
%$$
\end{proof}

%Claim~\ref{clm:parity}.\ref{clm:parity:2} easily implies:
\begin{Claim}\label{claim:deficient}
    \fab{Let $w$ be deficient} w.r.t. $\calP_{k-1}$. Then $w$ is deficient.
\end{Claim}
\begin{proof}
    \fab{By assumption,} $|N_{\OPT\setminus \calP_{k-1}}(w)|-|N_{\APX\setminus \calP_{k-1}}(w)| > 0$. By Claim~\ref{clm:parity}.\ref{clm:parity:2}, $|N_{\OPT}(w)|-|N_{\APX}(w)|\geq |N_{\OPT\setminus \calP_{k-1}}(w)|-|N_{\APX\setminus \calP_{k-1}}(w)| > 0$, which implies that $w$ is deficient.
\end{proof}

    \vspace{2.5mm}\noindent\textbf{(1)} Consider all the edges $P^u_k$ of $P_k$ incident to $u$. Notice that $|N_{\APX\triangle P_k}(u)| = |N_{\APX\triangle P^u_k}(u)|$. For each instance of $u$ which appears along $P_k$ not as an endpoint, there is a pair of edges $vu,uw$ along $P_k$, one belonging to $\OPT$ and the other to $\APX$. These pairs of edges contribute $0$ to $|N_{\APX\triangle P^u_k}(u)|-|N_{\APX}(u)|$. Since $u$ is not the last node of $P_k$, any remaining edge in $P_k$ incident to $u$ must be the first edge $uv$ of $P_k$, hence in particular $uv\in \OPT$. This edge contributes with $+1$ to $|N_{\APX\triangle P^u_k}(u)|-|N_{\APX}(u)|$. If $u$ is not the first node of $P_k$, one has $|N_{\APX\triangle P^u_k}(u)| = |N_{\APX}(u)|\leq 2$. Otherwise (namely $u$ is the first node of $P_k$), one has $|N_{\APX\triangle P^u_k}(u)| = |N_{\APX}(u)|+1\leq 2$. In the second inequality we used the fact that, since $u$ is the first node of $P_k$, $u$ is deficient w.r.t. $\calP_{k-1}$, and thus by Claim~\ref{claim:deficient}, $u$ is deficient (hence $|N_{\APX}(u)|\leq 1$).
% , since
% $$
% 1\leq |N_{\OPT\setminus \calP_{k-1}}(u)|-|N_{\APX\setminus \calP_{k-1}}(u)|\overset{\text{Claim \ref{clm:parity}.\ref{clm:parity:2}}}{\leq} |N_{\OPT}(u)|-|N_{\APX}(u)|.
% $$

    \vspace{2.5mm}\noindent\textbf{(2)} By Claim \ref{clm:parity}.\ref{clm:parity:1} one has $|N_{\OPT\triangle \calP_{k-1}}(u)|\leq |N_{\OPT}(u)|\leq 2$. It is sufficient to show that $|N_{\OPT\triangle \calP_{k}}(u)|\leq |N_{\OPT\triangle \calP_{k-1}}(u)|$. For each instance of $u$ which appears along $P_k$ not as an endpoint, there is a pair of edges $vu,uw$ along $P_k$, one belonging to $\OPT$ and the other not. These pairs of edges contribute $0$ to $|N_{\OPT\triangle \calP_{k-1}}(u)|-|N_{\OPT\triangle \calP_{k}}(u)|$. Since $u$ is not the last node of $P_k$, any remaining edge in $P_k$ incident to $u$ must be the first edge $uv$ of $P_k$, hence in particular $uv\in \OPT$. This edge contributes with $+1$ to $|N_{\OPT\triangle \calP_{k-1}}(u)|-|N_{\OPT\triangle \calP_{k}}(u)|$. The claim follows. 
\end{proof}

\begin{corollary}[Unique Triangle]\label{cor:uniqueTriangle}
    Let $P_1,\ldots,P_{k-1}$ be a collection of edge-disjoint augmenting trails \fab{in $\APX\triangle \OPT$}. Let $P_k$ be an alternating trail contained in $(\APX\triangle\OPT)\setminus \calP_{k-1}$, starting with an edge in $\OPT\setminus \APX$ and whose first node is deficient w.r.t. $\calP_{k-1}$. Then every edge $uv\in P_k$ belongs to at most one triangle in $\APX\triangle P_k$ and to at most one triangle in $\OPT\triangle\calP_k$.
\end{corollary}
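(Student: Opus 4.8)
The corollary should follow almost immediately from the Parity Lemma (Lemma~\ref{lem:parityLemma}) via a simple degree-counting observation. The point is that if an edge $uv$ lies in two distinct triangles of a graph $H$, say $uvw$ and $uvw'$ with $w\neq w'$, then $u$ is adjacent in $H$ to the three distinct vertices $v,w,w'$, so $|N_H(u)|\geq 3$, and symmetrically $|N_H(v)|\geq 3$. So ``$uv$ belongs to at least two triangles of $H$'' forces \emph{both} endpoints of $uv$ to have degree at least $3$ in $H$.

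\textbf{Main steps.} First I would note that $P_k$ satisfies exactly the hypotheses of Lemma~\ref{lem:parityLemma}, so both conclusions of that lemma apply: every node of $V$ \emph{except possibly the last node of $P_k$} has degree at most $2$ in $\APX\triangle P_k$ and at most $2$ in $\OPT\triangle\calP_k$. Now fix an edge $uv\in P_k$. Suppose for contradiction that $uv$ lies in two distinct triangles of $\APX\triangle P_k$. By the observation above, $|N_{\APX\triangle P_k}(u)|\geq 3$ and $|N_{\APX\triangle P_k}(v)|\geq 3$. By Lemma~\ref{lem:parityLemma}.\ref{lem:parityLemma:1}, the only node allowed to violate the degree-$2$ bound in $\APX\triangle P_k$ is the last node of $P_k$; hence $u$ and $v$ would both have to coincide with the last node of $P_k$, contradicting $u\neq v$. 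The argument for $\OPT\triangle\calP_k$ is identical, using Lemma~\ref{lem:parityLemma}.\ref{lem:parityLemma:2} in place of Lemma~\ref{lem:parityLemma}.\ref{lem:parityLemma:1}: two distinct triangles through $uv$ would force $|N_{\OPT\triangle\calP_k}(u)|\geq 3$ and $|N_{\OPT\triangle\calP_k}(v)|\geq 3$, again making both $u$ and $v$ equal to the last node of $P_k$, a contradiction.

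\textbf{Expected difficulty.} I do not anticipate any real obstacle here: the statement is a direct corollary of Lemma~\ref{lem:parityLemma}, and the only thing to get right is the elementary fact that an edge lying in two triangles pushes both of its endpoints above degree $2$. The one mild care point is bookkeeping the ``except possibly the last node of $P_k$'' exception and using $u\neq v$ to conclude that at most one of $u,v$ can be that exceptional node — which is exactly why having two triangles through a single edge is impossible.
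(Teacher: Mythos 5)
Your proof is correct and takes essentially the same approach as the paper: both invoke the Parity Lemma to cap the degree at $2$ for every node except possibly the last node of $P_k$, and both use $u\neq v$ to conclude at most one endpoint of $uv$ can be the exceptional node. The paper argues directly (pick the endpoint that is not the last node; its degree $\leq 2$ in $\APX\triangle P_k$, resp.\ $\OPT\triangle\calP_k$, forces the triangle through it to be unique), whereas you argue by contradiction, but the two are the same argument rephrased.
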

\begin{proof}
    For every edge $uv$ of $P_k$, at least one of the endpoints is not the last node of $P_k$, assume w.l.o.g. that $u$ is not the last node of $P_k$. By \fab{the Parity} Lemma \ref{lem:parityLemma}, $|N_{\APX\triangle P_k}(u)|\leq 2$. Since a triangle in $\APX\triangle P_k$ containing the node $u$ must contain two edges in $\APX\triangle P_k$ incident to $u$, this triangle must be unique. A similar argument holds for $\OPT\triangle\calP_k$.
\end{proof}

\begin{proof}[Proof of Lemma \ref{lem:firstChunk}]
    Since $k < |\OPT| - |\APX|$ and every trail $P_i, 1\leq i\leq k$, has exactly one more edge in $\OPT$ than in $\APX$, there must exist a node $u_0$ that is deficient w.r.t. $\calP_{k}$. By definition of deficient w.r.t. $\calP_{k}$, there exists an edge $u_0u_1\in \OPT\setminus (\APX\cup\calP_{k})$. Notice that $u_0u_1$ is a chunk. 

     If $P_{k+1}:=u_0u_1$ satisfies the claim we are done, hence assume this is not the case. Observe that $P_1,\ldots,P_{k+1}$ satisf\fab{y} Invariants~\ref{inv:main}.\ref{prp:chunks},~\ref{inv:main}.\ref{prp:disjoint} and~\ref{inv:main}.\ref{prp:startingEnding}. Since $u_0u_1\in \OPT\setminus \APX$ and $\OPT\triangle \calP_k$ is triangle-free, Invariant~\ref{inv:main}.\ref{prp:blueTriangles} ($\OPT$-triangle) is also satisfied. Thus Invariant~\ref{inv:main}.\ref{prp:redTriangles} ($\APX$-triangle) must be violated. In more detail, $\APX\cup \{u_0u_1\}$ must contain a triangle (and $u_0u_1$ must belong to that triangle since $\APX$ is triangle-free). By the Unique Triangle Corollary~\ref{cor:uniqueTriangle}, there is only one such triangle: let $u_0u_1u_2$ be the only triangle in $\APX\cup \{u_0u_1\}$. Notice that we must have $u_0u_2, u_1u_2\in \APX$. We distinguish two cases:

    \vspace{2.5mm}\noindent\textbf{(1) $\mathbf{u_0u_2\in \APX\setminus (\OPT\cup\calP_{k})}$.} We illustrate this case in Figure~\ref{fig:startingCase1}.
    \begin{figure}[ht]
        \centering
        \includegraphics{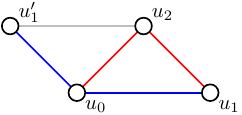}
        \caption{Proof of Lemma~\ref{lem:firstChunk}, Case \textbf{(1)}. We illustrate the case when $u_1u_2\in\APX\setminus\OPT$, the other case (when $u_1u_2\in\APX\cap\OPT$) being similar.}
        \label{fig:startingCase1}
    \end{figure}
    Since $u_0$ is deficient w.r.t. $\calP_{k}$ and $u_0u_2\in\APX\setminus (\OPT\cup\calP_{k})$, there must exist an edge $u_0u_1'\in \OPT\setminus (\APX\cup\calP_{k}), u_1'\neq u_1$. We claim that $P_{k+1}:=u_0u_1'$ satisfies the claim. Indeed, $P_1,\ldots,P_{k+1}$ satisfies Invariants~\ref{inv:main}.\ref{prp:chunks},~\ref{inv:main}.\ref{prp:disjoint} and~\ref{inv:main}.\ref{prp:startingEnding}. Since $u_0u_1'\in \OPT\setminus \APX$, Invariant~\ref{inv:main}.\ref{prp:blueTriangles} ($\OPT$-triangle) is also satisfied. 

    Assume to get a contradiction that Invariant~\ref{inv:main}.\ref{prp:redTriangles} ($\APX$-triangle) is not satisfied. Then, there must be a triangle in $\APX\cup \{u_0u_1'\}$ containing $u_0u_1'$. Since $u_0$ is deficient w.r.t. $\calP_k$, $u_0$ is deficient by Claim~\ref{claim:deficient}. Thus $u_0u_2$ is the only edge of $\APX$ incident to $u_0$. Thus, the only triangle in $\APX\cup \{u_0u_1'\}$ is $u_0u_1'u_2$. But then $N_{\APX}(u_2) = \{u_0, u_1, u_1'\}$, which is a contradiction to the fact that $\APX$ is a $2$-matching. 
    
    \vspace{2.5mm}\noindent\textbf{(2) $\mathbf{u_0u_2\in \APX\cap (\OPT\cup\calP_{k})}$.} We illustrate this case in Figure~\ref{fig:startingCase2}

    \begin{figure}[ht]
        \centering
        \includegraphics{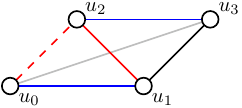}
        \caption{Proof of Lemma~\ref{lem:firstChunk}, Case \textbf{(2)}. We illustrate the case when $u_0u_2\in \APX\cap\calP_{\af{k}}$, the case when $u_0u_2\in \APX\cap\OPT$ being similar.}
        \label{fig:startingCase2}
    \end{figure}
    \begin{Claim}\label{claim:startingCase2u1u2}
        $u_1u_2\in\APX\setminus(\OPT\cup\calP_k)$
    \end{Claim}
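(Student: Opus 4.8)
The plan is to argue by contradiction. Since we already know that $u_0u_1u_2$ is a triangle with $u_0u_1\in\OPT\setminus(\APX\cup\calP_k)$ and $u_0u_2,u_1u_2\in\APX$, the content of the claim is exactly that $u_1u_2\notin\OPT\cup\calP_k$; so suppose instead that $u_1u_2\in\OPT\cup\calP_k$. I will show that then all three edges $u_0u_1$, $u_0u_2$, $u_1u_2$ lie in $\OPT\triangle\calP_k$, i.e.\ $u_0u_1u_2$ is a (genuine) triangle in $\OPT\triangle\calP_k$. This contradicts Invariant~\ref{inv:main}.\ref{prp:blueTriangles} for the current collection $P_1,\dots,P_k$ (which holds by the hypothesis of Lemma~\ref{lem:firstChunk}, the case $k=0$ being trivial since $\OPT$ is triangle-free), finishing the proof.

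The one elementary fact I would invoke is that $\calP_k\subseteq\APX\triangle\OPT$, so in particular no edge of $\APX\cap\OPT$ belongs to $\calP_k$. Consequently, any edge $e$ with $e\in\APX\cap(\OPT\cup\calP_k)$ automatically lies in $\OPT\triangle\calP_k$: if $e\in\APX\cap\OPT$ then $e\notin\calP_k$, so $e\in\OPT\setminus\calP_k\subseteq\OPT\triangle\calP_k$; and if instead $e\in\calP_k$ then, since $e\in\APX$, we must have $e\notin\OPT$, so $e\in\calP_k\setminus\OPT\subseteq\OPT\triangle\calP_k$.

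Applying this to the three edges of the triangle: the edge $u_0u_1$ lies in $\OPT\setminus(\APX\cup\calP_k)\subseteq\OPT\setminus\calP_k\subseteq\OPT\triangle\calP_k$; the edge $u_0u_2$ lies in $\APX\cap(\OPT\cup\calP_k)$ by the Case \textbf{(2)} hypothesis, hence $u_0u_2\in\OPT\triangle\calP_k$ by the fact above; and the edge $u_1u_2$ lies in $\APX$ with $u_1u_2\in\OPT\cup\calP_k$ by our contradiction hypothesis, hence $u_1u_2\in\OPT\triangle\calP_k$ as well. Thus $u_0u_1u_2$ is a triangle in $\OPT\triangle\calP_k$, the desired contradiction.

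I do not expect any real obstacle here: the $\OPT$-triangle Invariant~\ref{inv:main}.\ref{prp:blueTriangles} was introduced precisely so that arguments of this shape go through (cf.\ Figure~\ref{fig:optTriProperty}), and the only thing that needs care is the membership bookkeeping among $\APX$, $\OPT$ and $\calP_k$, which is handled uniformly via $\calP_k\subseteq\APX\triangle\OPT$. In particular, note that this argument does not rely on $N_{\APX}(u_0)=\{u_2\}$ (which will be needed elsewhere in Case \textbf{(2)}), only on $u_0u_2\in\APX\cap(\OPT\cup\calP_k)$.
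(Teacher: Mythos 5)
Your proof is correct and rests on the same key idea as the paper's: exhibit the triangle $u_0u_1u_2$ inside $\OPT\triangle\calP_k$ and invoke Invariant~\ref{inv:main}.\ref{prp:blueTriangles}. The difference is one of packaging: you observe once that $\APX\cap(\OPT\cup\calP_k)\subseteq\OPT\triangle\calP_k$ (a consequence of $\calP_k\subseteq\APX\triangle\OPT$) and apply it uniformly to both $u_0u_2$ and $u_1u_2$, whereas the paper proves $u_1u_2\notin\OPT$ and $u_1u_2\notin\calP_k$ as two separate steps, with the first step itself split into sub-cases according to whether $u_0u_2\in\APX\cap\OPT$ (where plain triangle-freeness of $\OPT$ is used) or $u_0u_2\in\APX\cap\calP_k$ (where the invariant is used). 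Your unification is clean and arguably the preferable write-up; it buys a shorter case analysis and makes plain that the only structural input is the $\OPT$-triangle invariant, with $\OPT$'s own triangle-freeness absorbed as the $k=0$ instance $\OPT\triangle\calP_0=\OPT$.
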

    \begin{proof}
        Recall that $u_1u_2\in \APX$. Assume to get a contradiction that $u_1u_2\in \APX\cap\OPT$. If $u_0u_2\in \APX\cap \OPT$ then $u_0u_1u_2$ is a triangle in $\OPT$, a contradiction. Then, since by assumption $u_0u_2\in \APX\cap (\OPT\cup\calP_{k})$, one has $u_0u_2\in \APX\cap\calP_{k}$. Since $u_1u_2\in\APX\cap\OPT$, one has $u_1u_2\in\OPT\setminus\calP_k$. Recall that $u_0u_1\in\OPT\setminus\calP_{k}$. Therefore, $u_0u_1u_2$ is a triangle in $\OPT\triangle\calP_{k}$, a violation of Invariant~\ref{inv:main}.\ref{prp:blueTriangles} ($\OPT$-triangle). Thus $u_1u_2\in \APX\setminus \OPT$.

        Moreover, since either $u_0u_2\in\APX\cap\OPT$ or $u_0u_2\in\APX\cap\calP_k$, and $\calP_k\subseteq\APX\triangle\OPT$, we have that $u_0u_2\in\OPT\triangle\calP_k$. Thus, one has $u_1u_2\notin \calP_{k}$, otherwise $u_0u_1u_2$ would be a triangle in $\OPT\triangle\calP_{k}$, a violation of Invariant~\ref{inv:main}.\ref{prp:blueTriangles} ($\OPT$-triangle).
    \end{proof}

    Consider $c := u_0u_1, u_1u_2$. Notice that $c$ is a chunk. Indeed, $u_0u_1u_2$ is a triangle in $\APX\cup\OPT$, $u_0u_1\in\OPT\setminus\APX$ and by Claim~\ref{claim:startingCase2u1u2}, $u_1u_2\in\APX\setminus\OPT$. 
    
    Suppose that $\OPT\triangle(\calP_{k}\cup\{u_0u_1, u_1u_2\})$ contains no triangle. In this case, $P_{k+1}:=c$ satisfies the claim of the lemma. \fab{Indeed,} every edge of $c$ is free (i.e., not contained in $\calP_k$), so $P_1,\dots, P_{k+1}$ satisfies Invariants~\ref{inv:main}.\ref{prp:chunks},~\ref{inv:main}.\ref{prp:disjoint} and~\ref{inv:main}.\ref{prp:startingEnding}. Invariant~\ref{inv:main}.\ref{prp:blueTriangles} ($\OPT$-triangle) holds by assumption. Recall that $u_0u_1u_2$ is the only triangle in $\APX\cup\{u_0u_1\}$. Then $\APX\triangle(\{u_0u_1, u_1u_2\})=\APX\cup\{u_0u_1\}\setminus\{u_1u_2\}$ contains no triangle, and  Invariant~\ref{inv:main}.\ref{prp:redTriangles} ($\APX$-triangle) is also satisfied.
            
    From now on we assume that there is a triangle in $\OPT\triangle(\calP_{k}\cup\{u_0u_1, u_1u_2\})$. By Invariant~\ref{inv:main}, $\OPT\triangle\calP_{k}$ contains no triangle, and $u_0u_1\in\OPT$, so it must be that such triangle contains the edge $u_1u_2$. By the Unique Triangle Corollary~\ref{cor:uniqueTriangle}, there is only one such triangle. Let $u_1u_2u_3$ be the only triangle in $\OPT\triangle(\calP_{k}\cup\{u_0u_1, u_1u_2\})$. Since $u_0u_1\notin\OPT\triangle(\calP_{k}\cup\{u_0u_1, u_1u_2\})$, then $u_3\neq u_0$.
    %, so $u_3\notin\{u_0, u_1, u_2\}$.

    \begin{Claim}\label{claim:startingCase2u2u3}
        $u_2u_3\in\OPT\setminus(\APX\cup \calP_k)$
    \end{Claim}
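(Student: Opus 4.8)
We are in Case (2) of Lemma \ref{lem:firstChunk}, so $u_0u_2 \in \APX \cap (\OPT \cup \calP_k)$, and by Claim \ref{claim:startingCase2u1u2} we know $u_1u_2 \in \APX \setminus (\OPT \cup \calP_k)$. We have a triangle $u_1u_2u_3$ which is the unique triangle in $\OPT \triangle (\calP_k \cup \{u_0u_1, u_1u_2\})$, with $u_3 \neq u_0$. Since $u_1u_2 \notin \OPT$, the edge $u_1u_2$ is present in $\OPT \triangle (\calP_k \cup \{u_0u_1,u_1u_2\})$ precisely because $u_1u_2 \notin \calP_k$; and the other two triangle edges $u_1u_3$, $u_2u_3$ must also lie in this symmetric-difference set. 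I want to conclude $u_2u_3 \in \OPT \setminus (\APX \cup \calP_k)$.

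**Plan.** First I would establish that $u_2u_3 \in \OPT \triangle \calP_k$: since $u_2u_3 \neq u_0u_1$ and $u_2u_3 \neq u_1u_2$, membership of $u_2u_3$ in the set $\OPT \triangle (\calP_k \cup \{u_0u_1, u_1u_2\})$ is equivalent to membership in $\OPT \triangle \calP_k$. Next, I would split on whether $u_2u_3 \in \OPT$ or not. If $u_2u_3 \notin \OPT$, then $u_2u_3 \in \OPT \triangle \calP_k$ forces $u_2u_3 \in \calP_k \setminus \OPT \subseteq \APX \setminus \OPT$; but then $u_2$ is incident to the two $\APX$-edges $u_1u_2$ and $u_2u_3$ (both outside $\OPT \cup \calP_k$? — careful, $u_2u_3 \in \calP_k$ here), plus $u_0u_2 \in \APX$, so $u_2$ would have $\APX$-degree at least $3$ unless two of $u_0,u_1,u_3$ coincide. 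Since $u_3 \neq u_0$, and $u_3 \neq u_1$ (as $u_1u_2u_3$ is a genuine triangle), this is a contradiction with $\APX$ being a 2-matching. Hence $u_2u_3 \in \OPT$. Then $u_2u_3 \in \OPT \triangle \calP_k$ together with $u_2u_3 \in \OPT$ gives $u_2u_3 \notin \calP_k$; and it remains to rule out $u_2u_3 \in \APX \cap \OPT$.

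**The case $u_2u_3 \in \APX \cap \OPT$.** Here I would argue at the node $u_2$: its $\APX$-edges include $u_1u_2$, $u_2u_3$, and $u_0u_2$. Since $\APX$ is a 2-matching and these must collapse, but $u_0 \neq u_3$ (given) and $u_1 \neq u_3$ (triangle) and $u_1 \neq u_0$ (since $u_0u_1 \in \OPT$ while $u_1u_2 \notin \OPT$, or rather from the chunk structure), we get three distinct $\APX$-neighbors of $u_2$ — a contradiction. Alternatively, if $u_0u_2 \in \APX \cap \OPT$ were the case one could instead find a triangle directly in $\OPT$; but the bookkeeping at $u_2$ is cleaner. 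Combining, $u_2u_3 \in \OPT$, $u_2u_3 \notin \calP_k$, and $u_2u_3 \notin \APX$, i.e. $u_2u_3 \in \OPT \setminus (\APX \cup \calP_k)$, as claimed.

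**Main obstacle.** The delicate point is keeping straight which of the six possible status labels (in/out of $\APX$, $\OPT$, $\calP_k$) each of $u_0u_1, u_0u_2, u_1u_2, u_1u_3, u_2u_3$ carries, and making sure the degree arguments at $u_2$ are applied to a genuinely over-loaded vertex; in particular one must use $u_3 \notin \{u_0, u_1\}$ (the first from the hypothesis that $u_0u_1$ dropped out of the symmetric difference, the second from $u_1u_2u_3$ being a triangle) every time a "three distinct neighbors" contradiction is invoked. The symmetric-difference identity $e \in A \triangle (B \cup \{e_1, e_2\}) \iff e \in A \triangle B$ for $e \notin \{e_1, e_2\}$ is the workhorse and should be invoked explicitly.
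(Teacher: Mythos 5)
Your proof is correct and follows essentially the same route as the paper. The paper simply observes $N_\APX(u_2)=\{u_0,u_1\}$ to get $u_2u_3\in\OPT\setminus\APX$ in one step, and then reads off $u_2u_3\notin\calP_k$ from its membership in $\OPT\triangle(\calP_k\cup\{u_0u_1,u_1u_2\})$; your case split on whether $u_2u_3\in\OPT$ collapses to the same degree argument at $u_2$ in both branches, so it is just a more verbose packaging of the same two ingredients.
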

    \begin{proof}
        Since $N_\APX(u_2)=\{u_0, u_1\}$, $u_2u_3\in\OPT\setminus\APX$. Since $u_2u_3\in \OPT\triangle(\calP_{k}\cup\{u_0u_1, u_1u_2\})$, $u_2u_3\notin\calP_k$.
    \end{proof}
    Observe that $c:=u_0u_1,u_1u_2,u_2u_3$ is a chunk: indeed, $u_0u_1u_2$ and $u_1u_2u_3$ are triangles in $\APX\cup\OPT$, $u_0u_1\in \OPT\setminus \APX$, 
    $u_1u_2\in\APX\setminus\OPT$ (by Claim~\ref{claim:startingCase2u1u2}), and $u_2u_3\in\OPT\setminus\APX$ (by Claim ~\ref{claim:startingCase2u2u3}).    

    We \fab{next show} that $P_{k+1}:=c$ satisfies the claim \fab{of the lemma}. Every edge of $c$ is free, so $P_1,\dots, P_{k+1}$ satisfies Invariants~\ref{inv:main}.\ref{prp:chunks},~\ref{inv:main}.\ref{prp:disjoint} and~\ref{inv:main}.\ref{prp:startingEnding}. Recall that $u_1u_2u_3$ is the only triangle in $\OPT\triangle(\calP_{k}\cup\{u_0u_1, u_1u_2\})$. Since $u_2u_3\in\OPT\setminus\APX$, $\OPT\triangle(\calP_{k}\cup c)$ contains no triangle. Thus  Invariant~\ref{inv:main}.\ref{prp:blueTriangles} ($\OPT$-triangle) is also satisfied.
    
    $u_0u_1u_2$ is the only triangle present in $\APX\cup\{u_0u_1\}$, and that triangle is not present in $\APX\triangle c$. Thus, Invariant~\ref{inv:main}.\ref{prp:redTriangles} ($\APX$-triangle) can be violated only if there is a triangle in $\APX\triangle c$ containing $u_2u_3$. Since $\mi{u_0u_2\in\APX}$ and $u_2u_3\in\OPT\setminus\APX$, by the Parity Lemma~\ref{lem:parityLemma} one has $N_{\APX\triangle c}(u_2) = \{u_0, u_3\}$, so the only triangle in $\APX\triangle c$ must be $u_0u_2u_3$. But since $u_0u_1\in\OPT\setminus\APX$, again by the Parity Lemma~\ref{lem:parityLemma}, one has $N_{\APX\triangle c}(u_0) = \{u_1, u_2\}$, so $u_0u_2\notin\APX\triangle c$. Therefore, there is no triangle in $\APX\triangle c$ and  Invariant~\ref{inv:main}.\ref{prp:redTriangles} ($\APX$-triangle) is also satisfied.
\end{proof}
The proof of Lemma \ref{lem:nextChunk} is similar in spirit to the proof of Lemma \ref{lem:firstChunk}, however more involved. We give it in the appendix.

\newpage

\bibliographystyle{plain}
\bibliography{refs}

%\newpage

\appendix

\section{Proof of Lemma \ref{lem:nextChunk}} 
\label{sec:nextChunk}

In this section we prove Lemma \ref{lem:nextChunk}. In order to prove Lemma \ref{lem:firstChunk} it was sufficient to assume Invariant \ref{inv:main}. For the proof of Lemma \ref{lem:nextChunk} we need to choose more carefully the chunk that we add each time. In more detail, we will use the obvious tie breaking rule that enforces the following invariant. 

\begin{invariant}\label{inv:tieBreak}
    Let $P_1,\ldots,P_k$, $1\leq k\leq |\OPT|-|\APX|$, be the collection of alternating trails constructed by our procedure before any update step. Let $P_k = c_1\circ\dots\circ c_l$, where $c_i$ is a chunk for all $i\in[1, l]$. Then for every $i\in[1, l - 1]$ one has:
    \begin{enumerate}
%        \item $c_1\circ\dots\circ c_i$ is not an augmenting trail or the last node of $c_1\circ\dots\circ c_i$ is not deficient w.r.t. $\calP_{k-1}$.\label{inv:tieBreak:notAugmenting}\fabr{The first property is not part of the tie breaking rule: can we keep it separate somehow?}
        \item If there exists a chunk $c$ such that $c_1\circ\dots\circ c_i\circ c$ satisfies Invariant~\ref{inv:main} and the first edge of $c$ shares a triangle in $\APX\cup\OPT$ with the last edge of $c_i$, then $c_{i+1}$ shares a triangle in $\APX\cup\OPT$ with the last edge of $c_i$.\label{inv:tieBreak:destroyTriangle}
        \item Suppose there is no chunk satisfying the condition of Invariant~\ref{inv:tieBreak}.\ref{inv:tieBreak:destroyTriangle} and the last edge of $c_i$ is in $\APX$. For every chunk $c$ such that $c_1\circ\dots\circ c_i\circ c$ satisfies Invariant~\ref{inv:main}, $c_{i+1}$ is contained in at most the same number of triangles in $\APX\cup\OPT$ with $2$ edges in $\OPT$ as $c$. \label{inv:tieBreak:avoidTriangleBlue} 
        \item Suppose there is no chunk satisfying the condition of Invariant~\ref{inv:tieBreak}.\ref{inv:tieBreak:destroyTriangle} and the last edge of $c_i$ is in $\OPT$. For every chunk $c$ such that $c_1\circ\dots\circ c_i\circ c$ satisfies Invariant~\ref{inv:main}, $c_{i+1}$ is contained in at most the same number of triangles in $\APX\cup\OPT$ with $2$ edges in $\APX$ as $c$. \label{inv:tieBreak:avoidTriangleRed} 
    \end{enumerate}
\end{invariant}
We will next assume that we are given a collection of trails $P_1,\ldots,P_k$ that satisfy Invariants \ref{inv:main} and \ref{inv:tieBreak}, and we wish to show that we can increase $|\calP_k|$ while maintaining both invariants. Notice that, whenever we start a new trail $P_{k+1}$ by means of Lemma \ref{lem:firstChunk}, then Invariant \ref{inv:tieBreak} is obviously satisfied (since $l=1$ in that case). Therefore it is sufficient to show that we can prove Lemma \ref{lem:nextChunk} under the assumption that initially both invariants hold for the current set of alternating trails. We also observe the following
\begin{remark}\label{rem:notAugmenting}
By construction, given any trail $P_j=c_1\circ\ldots\circ c_l$ in the current set of alternating trails and any $i\in [1,l-1]$, $P':= c_1\circ \ldots\circ c_i$ is not augmenting or it is augmenting but its last node is not deficient. Furthermore, $P_1,\ldots,P_{j-1},P'$ satisf\fab{y} Invariant \ref{inv:main}.
\end{remark}

We will need the following second corollary of the Parity Lemma \ref{lem:parityLemma}.
\begin{corollary}\label{cor:parityCorollary}
    Let $P_1,\ldots,P_{k}$ be a collection of edge-disjoint augmenting trails such that the endpoints of $P_j$ are deficient w.r.t. $\calP_{j-1}$. Then for every  $u\in V$, $j\in[1, k]$, one has (1) $|N_{\APX\triangle P_j}(u)|\leq 2$ and (2) $|N_{\OPT\triangle\calP_j}(u)|\leq 2$.
\end{corollary}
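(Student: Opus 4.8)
\textbf{Proof plan for Corollary~\ref{cor:parityCorollary}.} The statement is essentially a cleaned-up reformulation of the Parity Lemma~\ref{lem:parityLemma}, removing the exception for the last node of the trail at the cost of the stronger hypothesis that \emph{both} endpoints of each $P_j$ are deficient w.r.t.\ $\calP_{j-1}$. So the plan is to reduce to Lemma~\ref{lem:parityLemma} by symmetry. Fix $j\in[1,k]$ and a node $u\in V$. Apply Lemma~\ref{lem:parityLemma} with the collection $P_1,\dots,P_{j-1}$ (edge-disjoint augmenting trails in $\APX\triangle\OPT$, as required) and the alternating trail $P_j$: since $P_j$ is augmenting its first edge lies in $\OPT\setminus\APX$ and, by hypothesis, its first node is deficient w.r.t.\ $\calP_{j-1}$, so the hypotheses of Lemma~\ref{lem:parityLemma} are met. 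This immediately gives $|N_{\APX\triangle P_j}(u)|\leq 2$ and $|N_{\OPT\triangle\calP_j}(u)|\leq 2$ for every $u$ that is not the last node of $P_j$.

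It remains to handle the single node $u^*$ that is the last node of $P_j$. Here I would use the fact that an augmenting trail is ``reversible'': reading $P_j$ from its last edge to its first edge yields another trail $P_j^{\mathrm{rev}}$ on the same edge set, which is still alternating, and whose first edge is now the last edge of $P_j$. Since $P_j$ is augmenting, it has one more edge in $\OPT$ than in $\APX$, hence both its first and last edges lie in $\OPT\setminus\APX$; in particular the first edge of $P_j^{\mathrm{rev}}$ is in $\OPT\setminus\APX$, and its first node $u^*$ is deficient w.r.t.\ $\calP_{j-1}$ by the hypothesis that \emph{both} endpoints of $P_j$ are deficient. Now apply Lemma~\ref{lem:parityLemma} a second time, to $P_1,\dots,P_{j-1}$ together with the alternating trail $P_j^{\mathrm{rev}}$. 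Since $\APX\triangle P_j^{\mathrm{rev}}=\APX\triangle P_j$ and $\calP_{j-1}\cup P_j^{\mathrm{rev}}=\calP_j$ as edge sets, the conclusion of that lemma holds for every node except possibly the last node of $P_j^{\mathrm{rev}}$, i.e.\ except possibly the first node of $P_j$ — in particular it holds at $u^*$. Combining the two applications covers every node, proving both (1) and (2).

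\textbf{Main obstacle.} There is no deep obstacle here; the only points requiring care are the bookkeeping observations that (a) $\APX\triangle P_j$, $\OPT\triangle\calP_j$, and the deficiency status of a node depend on $P_j$ only as an edge set, so that reversing the trail changes nothing relevant, and (b) an augmenting trail genuinely has both endpoints incident to an $\OPT\setminus\APX$ edge — this uses that the trail alternates and has odd length with the $\OPT$-edges in the majority, so the first and last edges are both in $\OPT\setminus\APX$ (and one should double-check the degenerate case where $P_j$ is a single edge, or a closed trail with $u_1=u_k$, where the two applications coincide or the claim is immediate). Once these are noted, the corollary follows from two invocations of Lemma~\ref{lem:parityLemma}.
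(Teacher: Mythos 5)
Your argument is correct and a nice use of symmetry, but it takes a different route from the paper's. You dispatch the ``last node'' exception in Lemma~\ref{lem:parityLemma} by reversing $P_j$ and applying the Parity Lemma a second time, noting that $\APX\triangle P_j$ and $\OPT\triangle\calP_j$ only depend on $P_j$ as an edge set; this forces a separate argument for the degenerate case $u_1=u_k$, which you flag but do not fully write out (it does go through: for (1), $P_j$ augmenting makes $\APX\triangle P_j$ a $2$-matching directly; for (2), both the first and last edge of $P_j$ lie in $\OPT\cap\calP_j$ and are incident to $u_1=u_k$, hence are absent from $\OPT\triangle\calP_j$, leaving at most $4-2=2$ remaining edges at $u_1$). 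The paper instead never reverses: it proves (1) for \emph{all} nodes at once from the augmenting property, and proves (2) at the last node $u^*$ by invoking Claim~\ref{claim:deficient} to get that $u^*$ is deficient (hence $|N(u^*)|\le 3$), then observing that the last edge of $P_j$ lies in $\OPT\cap\calP_j$ and so cannot contribute to $\OPT\triangle\calP_j$. The paper's route is a bit shorter and treats closed and open trails uniformly, while your reversal argument is conceptually pleasant and avoids appealing to Claim~\ref{claim:deficient} in the generic case; both are valid.
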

\begin{proof}
    Assume by contradiction that (1) does not hold, i.e. $|N_{\APX\triangle P_j}(u)| > 2$ \mi{for some $u\in V$}. Then $\APX\triangle P_j$ is not a $2$-matching, \af{thus} $P_j$ is not augmenting. \fab{Assume by \af{contradiction that (2)} does not hold, i.e. $|N_{\OPT\triangle \af{\calP}_j}(u)| > 2$} \mi{for some $u\in V$}. $P_j$ satisfies the conditions of Lemma \ref{lem:parityLemma} w.r.t. $P_1,\ldots,P_{j-1}$, \mi{so $u$ is the last node of $P_j$}. Since $u$ is deficient w.r.t. $\calP_{j-1}$, by Claim~\ref{claim:deficient}, it is deficient, so $|N(u)|\leq 3$. If $|N(u)|\leq 2$ the claim is trivial, so assume $|N(u)| = 3$.  Then every edge incident to $u$ belongs to $\OPT\triangle\calP_j$. However the last edge of $P_j$ is in $\OPT$, so it is not in $\OPT\triangle\calP_j$, a contradiction.
\end{proof}
The next lemma shows that there exists a candidate edge to extend the last trail $P_k$.  
\begin{lemma}\label{lem:continueTrail}
 Let $P_1,\ldots,P_{k}$ be a collection of alternating trails satisfying the conditions of Lemma \ref{lem:nextChunk}. Let $u_0u_1$ be the last edge of $P_k$. Then there exist an edge $u_1u_2$ such that $u_1u_2\in(\APX\triangle \OPT)\setminus \calP_k$ and $u_0u_1u_2$ is an alternating trail.
%     %    Let $P_1,\ldots,P_{k-1}$ be a collection of edge-disjoint augmenting trails such that the endpoints of $P_j, j\in[1, k-1]$ are deficient w.r.t. $\calP_{j-1}$. Let $P_k$ be an alternating trail contained in $(\APX\triangle\OPT)\setminus \calP_{k-1}$, starting with an edge in $\OPT\setminus \APX$ and whose first node is deficient w.r.t. $\calP_{k-1}$. If $P_k$ is not augmenting or the last node of $P_k$ is not deficient w.r.t. $P_1,\dots, P_{k-1}$ then there exist an edge $u_1u_2$ such that $u_1u_2\in(\APX\triangle \OPT)\setminus \calP_k$ and $u_0u_1u_2$ is an alternating trail.
\end{lemma}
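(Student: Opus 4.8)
The plan is to prove Lemma~\ref{lem:continueTrail} by contradiction: we assume that $u_1$ has no incident free edge $u_1u_2 \in (\APX\triangle\OPT)\setminus\calP_k$ such that $u_0u_1,u_1u_2$ is alternating, and derive a violation of feasibility of $\APX$, of feasibility of $\OPT$, or of the maximality of $|\APX\cap\OPT|$ (via an exchange argument as in Lemma~\ref{lem:blueMajority}). The first step is a parity/degree bookkeeping at $u_1$. Since $u_0u_1$ is the last edge of $P_k$, it lies in exactly one of $\APX,\OPT$; say w.l.o.g.\ $u_0u_1\in\APX\setminus\OPT$ (the case $u_0u_1\in\OPT\setminus\APX$ being symmetric, using the degree-domination Lemma~\ref{lem:blueMajority} in place of the trivial direction). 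Because $P_k$ alternates and we want to continue with an edge of $\OPT$, we need a free edge $u_1u_2\in\OPT\setminus\APX$ that is not already in $\calP_k$; a chunk of length $1$ suffices unless it induces a triangle, but Lemma~\ref{lem:continueTrail} only asks for an alternating trail $u_0u_1,u_1u_2$, not yet for a chunk, so the only way to fail is that every $\OPT\setminus\APX$ edge at $u_1$ is already used (i.e.\ lies in $\calP_k$), or there is no such edge at all.

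The second step is to turn "no free $\OPT$-edge at $u_1$" into a counting contradiction. I would apply the Parity Lemma~\ref{lem:parityLemma} (and its Corollary~\ref{cor:parityCorollary}) to control the degrees $|N_{\APX\triangle P_k}(u_1)|$ and $|N_{\OPT\triangle\calP_k}(u_1)|$. The key point: along $P_k$ the node $u_1$ is incident to the last edge $u_0u_1\in\APX$, and possibly to earlier edges of $P_k$ that come in $\APX$-$\OPT$ consecutive pairs. The deficiency of the first node of $P_k$ w.r.t.\ $\calP_{k-1}$, combined with Claim~\ref{claim:deficient} and Claim~\ref{clm:parity}, gives $|N_{\OPT\triangle\calP_k}(u_1)|\le 2$ and $|N_{\APX\triangle P_k}(u_1)|\le 2$ (provided $u_1$ is not a problematic last node — but here $u_1$ is becoming an \emph{internal} node of the extended trail, so the Parity Lemma applies cleanly). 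Now if $u_1$ has $\OPT$-degree $2$ in $G$ but both $\OPT$-edges at $u_1$ are in $\calP_k$, then since $\calP_k\subseteq\APX\triangle\OPT$ those edges are in $\OPT\setminus\APX$, and one checks this forces $|N_{\OPT\triangle\calP_k}(u_1)|$ to behave in a way incompatible with the chunk structure ending at $u_0u_1$; if instead $|N_\OPT(u_1)|\le 1$, I use $|N_\APX(u_1)|\le|N_\OPT(u_1)|\le 1$ from degree domination together with $u_0u_1\in\APX$ to conclude $u_1$ has no other $\APX$-edge, and then the absence of a usable $\OPT$-edge at $u_1$ means $u_1$ has degree $\le 1$ overall in $\APX\cup\OPT$ after accounting for $u_0u_1$ — but then $P_k$ could only have been formed as a trail ending at $u_1$ if $u_1$ were a legitimate endpoint, and we can show this contradicts either Invariant~\ref{inv:main} or Remark~\ref{rem:notAugmenting}.

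The third step is just to confirm that the edge $u_1u_2$ produced this way makes $u_0u_1,u_1u_2$ an \emph{alternating} trail: since $u_0u_1\in\APX$ and $u_1u_2\in\OPT$ (or the reverse), consecutive edges alternate, and $u_1u_2\notin\{u_0u_1\}$ because $u_1u_2\notin\APX$ while $u_0u_1\in\APX$ (resp.\ the symmetric statement); distinctness of the two edges is all that is needed for a length-$2$ trail. Thus the statement follows.

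The main obstacle I anticipate is the case split on $|N_\OPT(u_1)|$ versus $|N_\APX(u_1)|$ together with how many of those edges are already frozen into $\calP_k$: one must carefully use that $P_k$ was built by the restricted procedure (Invariant~\ref{inv:main}, Property~\ref{prp:startingEnding}, and the deficiency of endpoints) to rule out the degenerate configuration where $u_1$ is saturated by used $\OPT$-edges. The degree-domination Lemma~\ref{lem:blueMajority} handles the symmetric $u_0u_1\in\OPT$ case but requires care because there the "surplus" direction of the inequality is the non-trivial one. I would organize the argument so that in every subcase the failure to find $u_1u_2$ yields either $|N_\APX(v)|>|N_\OPT(v)|$ for some $v$ (contradicting Lemma~\ref{lem:blueMajority}) or an alternative maximum triangle-free $2$-matching with larger intersection with $\APX$ (contradicting the choice of $\OPT$), exactly mirroring the structure of the proof of Lemma~\ref{lem:firstChunk}.
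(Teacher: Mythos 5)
Your proposal has real gaps, and the core of the argument is structured differently from what actually works.

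The most serious issue is the appeal to the Parity Lemma (Lemma~\ref{lem:parityLemma}, Corollary~\ref{cor:parityCorollary}) to bound $|N_{\APX\triangle P_k}(u_1)|$ and $|N_{\OPT\triangle\calP_k}(u_1)|$. The Parity Lemma explicitly excludes the last node of $P_k$, and $u_1$ \emph{is} the last node of $P_k$. Your justification --- ``$u_1$ is becoming an internal node of the extended trail, so the Parity Lemma applies cleanly'' --- is circular: you would need to have already extended the trail past $u_1$, which is exactly what you are trying to prove you can do. The paper's proof is careful about exactly this point; it never invokes the Parity Lemma's conclusions at $u_1$, and instead performs a direct bookkeeping of the edges of $\calP_k$ incident to $u_1$, pairing consecutive $\APX$/$\OPT$ pairs along each $P_j$ and isolating the unmatched contributions.

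A second issue is the proposed exchange argument (``an alternative maximum triangle-free $2$-matching with larger intersection with $\APX$''). That exchange argument belongs to the proof of Lemma~\ref{lem:blueMajority}, where one is free to perturb $\OPT$; inside Lemma~\ref{lem:continueTrail} the pair $(\APX,\OPT)$ is fixed and all the trails $P_1,\dots,P_k$ were built relative to that fixed pair, so swapping out $\OPT$ is not an available move. The paper's proof never does an exchange; the only place degree domination enters is implicitly, to upgrade ``$u_1$ not deficient'' from $|N_\APX(u_1)|\geq|N_\OPT(u_1)|$ to equality.

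Finally, the plan never surfaces the structural facts that actually close the argument in the deficient case: (i) no two consecutive edges of any $P_j$ can be incident to $u_1$ (because, given that $u_1$ has exactly one $\APX\setminus\OPT$ edge and $u_0u_1$ is the last edge of $P_k$, a consecutive pair at $u_1$ would force $P_k$ to be augmenting and $u_1$ deficient w.r.t.\ $\calP_{k-1}$, contradicting the hypothesis of Lemma~\ref{lem:nextChunk}); and (ii) excluding its role as last node of $P_k$, $u_1$ appears at most once as an endpoint of $P_1,\dots,P_k$ (because once it is used as an endpoint, it is no longer deficient w.r.t.\ the grown $\calP_j$). These two facts together show that at most one edge other than $u_0u_1$ incident to $u_1$ lies in $\calP_k$, and that it is of the wrong type, leaving a free edge of the right type. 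Your write-up gestures at ``the degenerate configuration where $u_1$ is saturated by used $\OPT$-edges'' but does not show why it cannot happen; that is precisely the content of (i) and (ii) and it is not automatic.
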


\begin{proof}
     We distinguish two cases:

     \vspace{2.5mm}\noindent\textbf{(1) $\mathbf{u_1}$ is not deficient.} Since $u_1$ is not deficient, we have $|N_{\OPT\setminus\APX}(u_1)| = |N_{\APX\setminus\OPT}(u_1)|$. Notice that $|N_{\OPT\setminus\APX}(u_1)| = |N_{\OPT\setminus(\APX\cup \calP_{k})}(u_1)| + |N_{\OPT\cap\calP_k}(u_1)|$ and $|N_{\APX\setminus\OPT}(u_1)| = |N_{\APX\setminus(\OPT\cup \calP_{k})}(u_1)| + |N_{\APX\cap\calP_k}(u_1)|$. Therefore, 
     \begin{align*}
     & |N_{\OPT\setminus(\APX\cup \calP_{k})}(u_1)| - |N_{\APX\setminus(\OPT\cup\calP_k)}(u_1)|\\
     = & |N_{\OPT\setminus\APX}(u_1)| - |N_{\APX\setminus\OPT}(u_1)| - |N_{\OPT\cap\calP_k}(u_1)| + |N_{\APX\cap\calP_k}(u_1)|\\
     = & |N_{\APX\cap\calP_k}(u_1)| - |N_{\OPT\cap\calP_k}(u_1)|.
     \end{align*}
    
     By Claim~\ref{claim:deficient}, $u_1$ is not deficient w.r.t. $\calP_j$ for any $j\in[1, k]$, so $u_1$ is not an endpoint of any trail $P_1,\dots, P_{k-1}$ nor the starting node of $P_k$. Any pair of edges of type $vu_1,u_1w$ along any trail $P_j$, $j\in[1, k]$, contributes with $0$ to $|N_{\APX\cap\calP_k}(u_1)| - |N_{\OPT\cap\calP_k}(u_1)|$. The last edge of $P_k$ contributes with $+1$ to $|N_{\APX\cap\calP_k}(u_1)| - |N_{\OPT\cap\calP_k}(u_1)|$ if $u_0u_1\in\APX\setminus\OPT$ and with $-1$ if $u_0u_1\in\OPT\setminus\APX$. The remaining nodes of $N(u_1)$ contribute $0$ to $|N_{\APX\cap\calP_k}(u_1)| - |N_{\OPT\cap\calP_k}(u_1)|$. Therefore, if $u_0u_1\in\APX\setminus\OPT$, then $|N_{\OPT\setminus(\APX\cup \calP_{k})}(u_1)| - |N_{\APX\setminus(\OPT\cup\calP_k)}(u_1)| = 1$ and one has $|N_{\OPT\setminus(\APX\cup \calP_{k})}(u)|\geq 1$. If $u_0u_1\in\OPT\setminus\APX$, then $|N_{\OPT\setminus(\APX\cup \calP_{k})}(u_1)| - |N_{\APX\setminus(\OPT\cup\calP_k)}(u_1)| = -1$ and one has\newline $|N_{\APX\setminus(\OPT\cup\calP_k)}(u)| \geq 1$. The claim of the lemma follows.

    \vspace{2.5mm}\noindent\textbf{(2) $\mathbf{u_1}$ is deficient.} If $|N_{\APX\setminus\OPT}(u_1)| = 0$, then $u_0u_1\in\OPT\cap P_k$, $u_1$ is deficient w.r.t. $\calP_{k-1}$ and $P_k$ is augmenting, a contradiction to the assumption of the lemma. Therefore, $|N_{\APX\setminus\OPT}(u_1)| \geq 1$. Since $u_1$ is deficient one has $|N_{\APX\setminus\OPT}(u_1)| = 1$ and $|N_{\OPT\setminus\APX}(u_1)| = 2$.
 %   We say that $u_1$ appears as an intermediate node\fabr{Better in intro part} of $P_j$ \af{if} $P_j$ contains two consecutive edges of type $vu_1,u_1w$.
    \begin{Claim}\label{claim:continueTrailConsecutive}
    \mi{For every $j\in[1, k]$, there are \af{no}  $2$ consecutive edges of $P_j$ incident to $u_1$.}   
    \end{Claim}
    \begin{proof}
    \mi{By contradiction, assume there are two edges $vu_1, u_1w$ in $P_j$ for some $j\in[1, k]$. Assume w.l.o.g. $vu_1\in\APX\setminus\OPT, u_1w\in\OPT\setminus\APX$. Since $u_0u_1$ is the last edge of $P_k$, it must be that $u_0\notin\{v, w\}$. Thus, using that $|N_{\OPT\setminus\APX}(u)| = 2, |N_{\APX\setminus\OPT}(u)| = 1$, one has $u_0u_1\in\OPT\cap P_k$, $u_1$ is deficient w.r.t. $\calP_{k-1}$, and $P_k$ is augmenting, a contradiction to the assumptions of the lemma.}\mig{Changed this from intermediate now to a more precise claim. The proof is almost the same as before, but more clear now.} 
    \end{proof}
    
%    If $u_1$ is the intermediate\fabr{Ambiguity of intermediate: explanation needed here} node of any trail $P_j$, $j\in[1, k]$, then, it uses exactly one edge of $\APX\setminus\OPT$ incident to $u_1$ and one edge of $\OPT\setminus\APX$ incident to $u_1$. Since $|N(u)| = 3$, one has  $u_0u_1\in\OPT\cap P_k$, $u_1$ is deficient w.r.t. $\calP_{k-1}$ and $P_k$ is augmenting, a contradiction to the assumption of the lemma. Therefore, $u_1$ is not the intermediate node of any trail $P_1,\dots, P_k$.

     \begin{Claim}
        Excluding as the last node of $P_k$, $u_1$ appears at most once as the endpoint of any $P_1,\dots, P_k$.
     \end{Claim}
     \begin{proof}
     \mi{The first and last edges of every trail $P_1,\dots, P_{k-1}$ are in $\OPT\setminus\APX$. Thus, by Claim~\ref{claim:continueTrailConsecutive}, the edge of $\APX\setminus\OPT$ incident to $u_1$ is not used by any trail $P_1,\dots, P_{k-1}$.}\mig{Changed this to match the updated claim above.} Therefore, $u_1$ is not simultaneously the first and last node of any $P_j$, $j\in[1, k-1]$, because otherwise $|N_{\APX\triangle P_j}(u_1)| = 3$, implying that $\APX\triangle P_j$ is not a $2$-matching: this contradicts the fact that $P_j$ is an augmenting trail.

     Let $j$ be the smallest index such that $u_1$ is an endpoint of $P_j$. If $j=k$ the claim trivially holds. Otherwise, $u_1$ is not deficient w.r.t $\calP_j$: indeed, $|\OPT\triangle \calP_j|\leq 1$ and $|\APX\triangle \calP_j|=1$. Thus $u_1$ cannot be the endpoint of any $P_{j+1},\ldots,P_{k-1}$, nor the first node of $P_k$.
    \end{proof}
    By the above two claims, excluding $u_0u_1$, at most one edge incident to $u_1$ is used by the trails $P_1,\dots, P_k$, and that edge, if any, is in $\OPT\setminus\APX$. We conclude that if $u_0u_1\in\APX\setminus\OPT$, then $|N_{\OPT\setminus(\APX\cup\calP_k)}(u_1)|\geq 1$, and if $u_0u_1\in\OPT\setminus\APX$, then $|N_{\APX\setminus(\OPT\cup\calP_k)}(u_1)| = 1$. The claim of the lemma follows.
\end{proof}

The proof of Lemma \ref{lem:nextChunk} is split into two parts which use very similar arguments: in Lemma \ref{lem:extendingTrailBlue} we assume that $P_k$ ends with an edge of $\APX$, while in Lemma \ref{lem:extendingTrailRed} that $P_k$ ends with an edge in $\OPT$.

\begin{lemma}\label{lem:extendingTrailBlue}
    Let $P_1,\dots, P_k$ be the current collection of alternating trails (satisfying Invariants \ref{inv:main} and \ref{inv:tieBreak}), where $P_k$ is not augmenting or the last node of $P_k$ is not deficient. Suppose that the last edge $u_0u_1$ of $P_k$ is in $\APX$. Then there exist a chunk \fab{$c$} such that $P_1,\ldots,P_{k-1},P_k\circ c$ satisfy Invariants~\ref{inv:main} and \ref{inv:tieBreak}.
\end{lemma}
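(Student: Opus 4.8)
The plan is to start from the candidate edge produced by Lemma~\ref{lem:continueTrail}. Since the last edge $u_0u_1$ of $P_k$ is in $\APX$, and $P_k$ is alternating, we need to append a free edge in $\OPT\setminus\APX$ incident to $u_1$; Lemma~\ref{lem:continueTrail} (case distinction on whether $u_1$ is deficient, but here we are in the subcase where $u_0u_1 \in \APX\setminus\OPT$, so the ``$\geq 1$'' conclusion gives a free edge in $\OPT\setminus(\APX\cup\calP_k)$) guarantees an edge $u_1u_2 \in \OPT\setminus(\APX\cup\calP_k)$ with $u_0u_1u_2$ alternating. I would first try the single-edge chunk $c:=u_1u_2$. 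As in the proof of Lemma~\ref{lem:firstChunk}, Invariants~\ref{inv:main}.\ref{prp:chunks}, \ref{inv:main}.\ref{prp:disjoint}, \ref{inv:main}.\ref{prp:startingEnding} are immediate (we are only extending $P_k$, whose first edge and first node are untouched, and $c$ consists of free edges). So the work is entirely in the two triangle invariants \ref{inv:main}.\ref{prp:redTriangles} and \ref{inv:main}.\ref{prp:blueTriangles}, and then the tie-breaking Invariant~\ref{inv:tieBreak}.

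The key structural tool is the Unique Triangle Corollary~\ref{cor:uniqueTriangle}: the new edge $u_1u_2$ lies in at most one triangle of $\APX\triangle(P_k\circ c)$ and at most one triangle of $\OPT\triangle(\calP_{k-1}\cup(P_k\circ c))$. If neither of these triangles is actually created, $c:=u_1u_2$ works and we are done. Otherwise I would do a case analysis mirroring Lemma~\ref{lem:firstChunk}: if a triangle $u_1u_2u_3$ appears in $\APX\triangle(P_k\circ c)$, then its two other edges $u_1u_3$ and $u_2u_3$ are forced (one is the $\APX$-edge $u_0u_1$ with $u_3=u_0$, which would be excluded by checking degrees via the Parity Lemma~\ref{lem:parityLemma} / Corollary~\ref{cor:parityCorollary}, or $u_2u_3\in\APX$ and $u_1u_3\in\APX\setminus P_k$). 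The resolution is to absorb the offending triangle into the chunk: replace $c$ by $u_1u_2,u_2u_3$ (or $u_1u_2,u_2u_3,u_3u_4$ if a further $\OPT$-triangle on $u_2u_3$ appears), using the chunk definition's requirement that consecutive triples form triangles in $\APX\cup\OPT$. At each extension one uses: (i) $|N_{\APX\cup\OPT}|\le 4$ and Parity to bound how many triangles a node can be in; (ii) the $\OPT$-triangle invariant on $\calP_{k-1}$ to rule out pre-existing blue triangles; (iii) the fact that each new edge is free. Because chunks have at most $4$ edges (equivalently at most two stacked triangles $u_1u_2u_3$, $u_2u_3u_4$), the recursion terminates: after at most two absorptions the next forced edge would have to close a configuration contradicting either $\APX$ or $\OPT$ being a $2$-matching or the $\OPT$-triangle invariant.

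The main obstacle I expect is verifying Invariant~\ref{inv:tieBreak}, not Invariant~\ref{inv:main}. The chunk $c$ produced by the argument above is essentially canonical (it is the minimal chunk destroying the forced triangles), but one has to argue it is consistent with the tie-breaking rule: if \emph{some} valid chunk shares a triangle in $\APX\cup\OPT$ with $u_0u_1$, then the chosen $c$ must too (Invariant~\ref{inv:tieBreak}.\ref{inv:tieBreak:destroyTriangle}); and otherwise, among valid chunks, $c$ must be minimal in the number of $\APX\cup\OPT$-triangles with two $\OPT$-edges (Invariant~\ref{inv:tieBreak}.\ref{inv:tieBreak:avoidTriangleBlue}, since $u_0u_1\in\APX$). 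I would handle this by first enumerating, using Corollary~\ref{cor:uniqueTriangle} and the degree bounds, the (very limited) list of possible chunks $c$ with first edge $u_1u_2$ or $u_1u_2'$ keeping Invariant~\ref{inv:main}, then explicitly picking the one preferred by Invariant~\ref{inv:tieBreak}; the point is that whenever a triangle-sharing chunk exists and is \ref{inv:main}-valid, the construction above already yields one, and when none exists, the single-edge chunk (which contains zero triangles with two $\OPT$-edges, hence is minimal) is \ref{inv:main}-valid by the first part of the analysis. The bookkeeping of which free $\OPT$-edge at $u_1$ to pick when two are available (the ``$u_1'\neq u_1$'' type branching in Lemma~\ref{lem:firstChunk}) is where the proof gets long but not deep.
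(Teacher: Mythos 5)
Your overall framework is right: start from Lemma~\ref{lem:continueTrail}, try the single-edge chunk $u_1u_2\in\OPT\setminus(\APX\cup\calP_k)$, observe via the Unique Triangle Corollary that only one offending $\APX$-triangle $u_1u_2u_3$ can appear, and then try to absorb it by extending the chunk. But the proposal has the role of Invariant~\ref{inv:tieBreak} exactly backwards, and this masks the real difficulty. The paper dispatches Invariant~\ref{inv:tieBreak} in one line at the top of the proof: since the procedure is \emph{defined} to break ties according to~\ref{inv:tieBreak}, producing \emph{any} chunk $c$ that maintains Invariant~\ref{inv:main} suffices, and the tie-break then picks some (possibly different) chunk that also maintains~\ref{inv:main} and by construction satisfies~\ref{inv:tieBreak}. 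So verifying~\ref{inv:tieBreak} for your chosen $c$ is not where the work is.

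Instead, Invariant~\ref{inv:tieBreak} is a \emph{hypothesis} on the already-built $P_k = c_1\circ\dots\circ c_l$, and it is indispensable for closing the case your absorption/recursion sketch does not cover. After the triangle $u_1u_2u_3$ appears, one must split on where $u_1u_3$ lives. The cases $u_1u_3\in\APX\setminus(\OPT\cup\calP_k)$ (pick a different free edge $u_1u_2'$ via Degree Domination) and $u_1u_3\in\APX\cap(\OPT\cup\calP_{k-1})$ (absorb into a $2$- or $3$-edge chunk) go as you expect. The problem is $u_1u_3\in\OPT\cap P_k$, i.e.\ the third edge of the triangle was already consumed earlier in $P_k$. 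There you form $c=u_1u_2,u_2u_3$, and if an $\OPT$-triangle $u_2u_3u_4$ then appears with $u_3u_4\in\APX\setminus\OPT$, further absorption is blocked: the candidate $3$-edge chunk would have to reuse or clash with edges already in $P_k$. This is not ruled out by $2$-matching degree bounds or the $\OPT$-triangle invariant, as your termination sentence claims. The paper instead derives a contradiction by locating the earlier pair of consecutive chunks $c_i,c_{i+1}$ of $P_k$ that straddle $u_1u_3$ and $u_3u_4$ at $u_3$, and showing (Claims \ref{claim:blueCase3u3v}--\ref{claim:blueCase3u2v}) that a \emph{different} chunk would have been available there that either shares a triangle with the last edge of $c_i$ or lies in fewer two-$\OPT$-edge triangles — violating Invariant~\ref{inv:tieBreak}.\ref{inv:tieBreak:destroyTriangle} or~\ref{inv:tieBreak:avoidTriangleBlue}. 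That retrospective use of the tie-break is the one genuinely new idea in this lemma relative to Lemma~\ref{lem:firstChunk}, and it is absent from your plan.
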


\begin{proof}      
It is sufficient to show that there exists one such chunk $c$ so that Invariant \ref{inv:main} holds. Invariant \ref{inv:tieBreak} can then be trivially enforced by breaking ties properly.

By Lemma~\ref{lem:continueTrail}, there exists an edge $u_1u_2\in \OPT\setminus(\APX\cup\calP_k)$. Notice that $u_1u_2$ is a chunk. If $c := u_1u_2$ satisfies the claim we are done, hence assume this is not the case. Observe that $P_1,\dots, P_{k-1},P_k\circ c$ satisfies Invariant~\ref{inv:main}.\ref{prp:chunks},~\ref{inv:main}.\ref{prp:disjoint} and~\ref{inv:main}.\ref{prp:startingEnding}. Since $u_1u_2\in \OPT\setminus \APX$ and $\OPT\triangle\calP_k$ is triangle-free, Invariant~\ref{inv:main}.\ref{prp:blueTriangles} ($\OPT$-triangle) is also satisfied. Thus Invariant~\ref{inv:main}.\ref{prp:redTriangles} ($\APX$-triangle) must be violated. In more detail, $\APX\triangle (P_k\cup\{u_1u_2\}) = (\APX\triangle P_k)\cup \{u_1u_2\}$ must contain a triangle (and $u_1u_2$ must belong to that triangle since $\APX\triangle P_k$ is triangle-free). By the Unique Triangle Corollary~\ref{cor:uniqueTriangle}, there is only one such triangle: let $u_1u_2u_3$ be the only triangle in $\APX\triangle (P_k\cup\{u_1u_2\})$. Since $u_0u_1\in \APX\cap P_k$, $u_0u_1\notin \APX\triangle (P_k\cup\{u_1u_2\})$. Thus, $u_0, u_1, u_2, u_3$ are all distinct.

    We consider the following cases depending on the edge $u_1u_3$. Notice that, since $u_1u_3\in \APX\triangle (P_k\cup\{u_1u_2\})$ then $u_1u_3$ is in either $\APX\setminus P_k$ or in $P_k\setminus \APX=\OPT\cap P_k$. In the first case we further distinguish between $u_1u_3\in \APX\setminus(\OPT\cup\calP_k)$ and $u_1u_3\in \APX\cap(\OPT\cup\calP_{k-1})$, leading to $3$ cases in total.
        
    \vspace{2.5mm}\noindent\textbf{(1) $\mathbf{u_1u_3\in \APX\setminus (\OPT\cup\calP_k)}$.} This case is illustrated in Figure~\ref{fig:redCase1}.
    \begin{figure}[ht]
        \centering
        \includegraphics{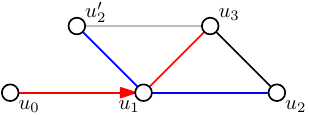}
        \caption{Proof of Lemma~\ref{lem:extendingTrailBlue}, Case \textbf{(1)}.}
        \label{fig:redCase1}
    \end{figure}
    \begin{Claim}\label{claim:blueCase1u1u2'}
        There exists an edge $u_1u_2'\in\OPT\setminus(\APX\cup\calP_k), u_2'\neq u_2$.
    \end{Claim}
    \begin{proof}
        By the Degree Domination Lemma~\ref{lem:blueMajority}, $|N_\OPT(u_1)|\geq |N_\APX(u_1)|$. Since $N_{\APX\setminus \OPT}(u_1) = \{u_0, u_3\}$, there must exist an edge $u_1u_2'\in \OPT\setminus\APX, u_2'\neq u_2$. 
        
        Recall that $u_0u_1\in \APX\cap P_k$, so, by Invariant~\ref{inv:main}.\ref{prp:disjoint}, $u_0u_1\in\APX\setminus\calP_{k-1}$.\fabr{Rephrased a bit, check} Assume by contradiction that $u_1u_2'\in P_j$ for some $j\in [1,k-1]$. Since $u_0u_1\in\APX\setminus \calP_{k-1}$ and $u_1u_3\in\APX\setminus\calP_k$, one has $N_{\APX\triangle P_j}(u_1)\supseteq \{u_0, u_2', u_3\}$, a contradiction to Corollary~\ref{cor:parityCorollary}. Therefore $u_1u_2'\notin\fab{\calP_{k-1}}$.     
        %Otherwise, since $u_0u_1\in\APX\setminus \calP_{k-1}, u_1u_3\in\APX\setminus\calP_k$ and $j\in[1, k-1]$, $N_{\APX\triangle P_j}(u_1)\supseteq \{u_0, u_2', u_3\}$, a contradiction to Corollary~\ref{cor:parityCorollary} of the Parity Lemma~\ref{lem:parityLemma}. 
        Also, $u_1u_2'\notin P_k$, otherwise, since $u_1u_2\in\OPT\setminus\APX, u_1u_3\in\APX\setminus\calP_k$, $N_{\APX\triangle(P_k\cup\{u_1u_2\})}(u_1)\supseteq \{u_2, u_2', u_3\}$, a contradiction to the Parity Lemma~\ref{lem:parityLemma}. The claim follows.
    \end{proof}
    
    By Claim~\ref{claim:blueCase1u1u2'}, $c:=u_1u_2'$ is a chunk. We claim that $P_1,\dots, P_k\circ c$ satisf\fab{y} Invariant~\ref{inv:main}. Observe that  Invariant\fab{s}~\ref{inv:main}.\ref{prp:chunks},~\ref{inv:main}.\ref{prp:disjoint} and~\ref{inv:main}.\ref{prp:startingEnding} are satisfied. Since $u_1u_2\in \OPT\setminus \APX$ and $\OPT\triangle\calP_k$ is triangle-free, Invariant~\ref{inv:main}.\ref{prp:blueTriangles} ($\OPT$-triangle) is also satisfied. It is only left to prove that Invariant~\ref{inv:main}.\ref{prp:redTriangles} ($\APX$-triangle) holds. Assume to get a contradiction that there is a triangle in $\APX\triangle (P_k\cup\{u_1u_2'\})$.
    
    Since $\APX\triangle P_k$ contains no triangle, every triangle in $\APX\triangle (P_k\cup\{u_1u_2'\})$ must contain the edge $u_1u_2'$. Since $u_1u_3\in\APX\setminus\calP_k$ and, by Claim~\ref{claim:blueCase1u1u2'}, $u_1u_2'\in\OPT\setminus(\APX\cup\calP_k)$, by the Parity Lemma~\ref{lem:parityLemma} \fab{($u_1$ is not the last node of the trail $P_k\circ u_1u_2'$)}\mig{I moved this remainder to the first time the parity lemma is used (and modified it accordingly)} one has $N_{\APX\triangle(P_k\cup \{u_1u_2'\})}(u_1) = \{u_2', u_3\}$. Thus, the only triangle in $\APX\triangle(P_k\cup\{u_1u_2'\})$ is $u_1u_2'u_3$. 
    
    Since $u_1u_2u_3$ is a triangle in $\APX\triangle (P_k\cup\{u_1u_2\})$, by the Parity Lemma~\ref{lem:parityLemma}, one has $N_{\APX\triangle (P_k\cup\{u_1u_2\})}(u_3) = \{u_1, u_2\}$. Observing that $N_{\APX\triangle (P_k\cup\{u_1u_2\})}(u_3) = N_{\APX\triangle (P_k\cup\{u_1u_2'\})}(u_3)$ we see that $u_2'u_3\notin \APX\triangle (P_k\cup\{\af{u_1u_2'}\})$, a contradiction (since $u_1u_2'u_3$ is a triangle in $\APX\triangle(P_k\cup\{u_1u_2'\}$).

    \vspace{2.5mm}\noindent\textbf{(2) $\mathbf{u_1u_3\in \APX\cap (\OPT\cup\calP_{k-1})}$.} This case is illustrated in Figure~\ref{fig:redCase2}.
    \begin{figure}[ht]
        \centering
        \includegraphics{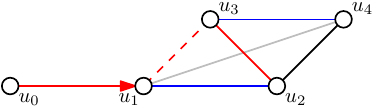}
        \caption{Proof of Lemma~\ref{lem:extendingTrailBlue}, Case \textbf{(2)}. We illustrate the case when $u_1u_3\in \APX\cap\calP_{k-1}$, the other case (when $u_1u_3\in\APX\cap\OPT$) being similar.}
        \label{fig:redCase2}
    \end{figure}
    \begin{Claim}\label{claim:blueCase2u2u3}
        $u_2u_3\in\APX\setminus(\OPT\cup\calP_k)$.
    \end{Claim}
    \begin{proof}
        Recall that $u_1u_2\in\OPT\setminus(\APX\af{\cup}\calP_k)$. Assume to get a contradiction that $u_2u_3\in \OPT$. It must be that $u_1u_3\notin \OPT$, otherwise $u_1u_2u_3$ would be a triangle in $\OPT$, so by Case~\textbf{(2)} assumption $u_1u_3\in\APX\cap\calP_{k-1}$. Also, $u_2u_3\in \OPT\cap P_k$, because $u_1u_2u_3$ is a triangle in $\APX\triangle(P_k\cup\{u_1u_2\})$. One has $u_1u_2\notin\calP_k$, and since $u_2u_3\in P_k$, $u_2u_3\notin\calP_{k-1}$. Then, $u_1u_2u_3$ is a triangle in $\OPT\triangle\calP_{k-1}$, a violation of Invariant~\ref{inv:main}.\ref{prp:blueTriangles} ($\OPT$-triangle) w.r.t. $\calP_{k-1}$. 

         \afr{It took me a while to  understand this paragraph. I Expanded the argument. Please Check.} So far we proved that $u_2u_3\in\APX\setminus\OPT$. %Notice that $u_1u_2, u_1u_3\in \OPT\triangle\calP_k$.
        \af{Clearly $u_1u_2\in \OPT\triangle\calP_k$. By assumption of Case~\textbf{(2)}, $u_1u_3\in \APX\cap (\OPT\cup \calP_{k-1})$, so either $u_1u_3\in\APX\cap\OPT$ or $u_1u_3\in\APX\cap\calP_{k-1}$. Since $\calP_{k-1}\subset\calP_k\subseteq \APX\triangle \OPT$, $u_1u_3\in \OPT\triangle\calP_k$. Note that if $u_2u_3\in\APX\cap\calP_k$, then }$u_1u_2u_3$ is a triangle in $\OPT\triangle\calP_k$, a violation of Invariant~\ref{inv:main}.\ref{prp:blueTriangles} ($\OPT$-triangle), so $u_2u_3\notin\calP_k$. The claim follows.
    \end{proof}

    Recall that $u_1u_2\in\OPT\setminus(\APX\cup\calP_k)$. By Claim~\ref{claim:blueCase2u2u3}, $u_2u_3\in\APX\setminus(\OPT\cup\calP_k)$. Also, $u_1u_2u_3$ is a triangle in $\APX\cup\OPT$, so $c:=u_1u_2, u_2u_3$ is a chunk.

    Suppose that $\OPT\triangle(\calP_k\cup\{u_1u_2, u_2u_3\})$ contains no triangle. In this case $c$ satisfies the claim of the Lemma. Indeed $P_1,\dots, P_k\circ c$ satisfy Invariants~\ref{inv:main}.\ref{prp:chunks},~\ref{inv:main}.\ref{prp:disjoint} and~\ref{inv:main}.\ref{prp:startingEnding}. Invariant~\ref{inv:main}.\ref{prp:blueTriangles} ($\OPT$-triangle) holds by assumption. Recall that $u_1u_2u_3$ is the only triangle in $\APX\triangle (P_k\cup\{u_1u_2\})$. Since $u_2u_3\in\APX$, $\APX\triangle (P_k\cup\{u_1u_2, u_2u_3\})$ contains no triangle and Invariant~\ref{inv:main}.\ref{prp:redTriangles} ($\APX$-triangle) is also satisfied.

 %   \begin{Claim}\label{claim:blueCase2u2u3u4}
 %       If there is no triangle in $\OPT\triangle(\calP_k\cup\{u_1u_2, u_2u_3\})$ then $c:=u_1u_2, u_2u_3$ satisfies the claim. 
 %   \end{Claim}
 %   \begin{proof}
 %       Observe $P_1,\dots, P_k\circ c$ satisfies Invariant~\ref{inv:main}.\ref{prp:chunks},~\ref{inv:main}.\ref{prp:disjoint} and~\ref{inv:main}.\ref{prp:startingEnding}. Recall that $u_1u_2u_3$ is the only triangle in $\APX\triangle (P_k\cup\{u_1u_2\})$. Since $u_2u_3\in\APX$, $\APX\triangle (P_k\cup\{u_1u_2, u_2u_3\})$ contains no triangle and Invariant~\ref{inv:main}.\ref{prp:redTriangles} ($\APX$-triangle) is also satisfied. Invariant~\ref{inv:main}.\ref{prp:blueTriangles} ($\OPT$-triangle) is also satisfied by assumption of this claim. The claim follows.
 %   \end{proof}

    From now on we assume that there is a triangle in $\OPT\triangle(\calP_k\cup\{u_1u_2, u_2u_3\})$. By Invariant~\ref{inv:main}.\ref{prp:blueTriangles} ($\OPT$-triangle), $\OPT\triangle\calP_k$ contains no triangles. Since $u_1u_2\in\OPT$, every triangle in $\OPT\triangle(\calP_k\cup\{u_1u_2, u_2u_3\})$ must contain the edge $u_2u_3$. By the Unique Triangle Corollary~\ref{cor:uniqueTriangle}, there is only one such triangle: let $u_2u_3u_4$ be the only triangle in $\OPT\triangle(\calP_k\cup\{u_1u_2, u_2u_3\})$. Since $u_1u_2\in \OPT$, $u_1u_2\notin \OPT\triangle(\calP_k\cup\{u_1u_2, u_2u_3\})$, so $u_4\neq u_1$ and thus $u_1, u_2, u_3, u_4$ are all distinct. 

    \begin{Claim}\label{claim:blueCase2u3u4}
        $u_3u_4\in \OPT\setminus(\APX\cup\calP_k).$
    \end{Claim}
    \begin{proof}
         As $N_{\APX}(u_3) = \{u_1, u_2\}$, we have $u_3u_4\in \OPT\setminus\APX$. Furthermore, since $u_3u_4\in\OPT\triangle(\calP_k\cup  \{u_1u_2, u_2u_3\})$, then $u_3u_4\notin\calP_k$. 
    \end{proof}
    
    Recall that $u_1u_2\in\OPT\setminus(\APX\cup\calP_k)$. By Claims~\ref{claim:blueCase2u2u3} and~\ref{claim:blueCase2u3u4}, $u_2u_3\in\APX\setminus(\OPT\cup\calP_k)$ and $u_3u_4\in\OPT\setminus(\APX\cup\calP_k)$. Furthermore, $u_1u_2u_3$ and $u_2u_3u_4$ are triangles in $\APX\cup\OPT$, so $c := u_1u_2, u_2u_3, u_3u_4$ is a chunk. We next show that $c$ satisfies the claim of the Lemma. 
    
    $P_1,\dots, P_k\circ c$ satisfy Invariants~\ref{inv:main}.\ref{prp:chunks},~\ref{inv:main}.\ref{prp:disjoint} and~\ref{inv:main}.\ref{prp:startingEnding}. Recall that $u_2u_3u_4$ is the only triangle in $\OPT\triangle(\calP_k\cup\{u_1u_2, u_2u_3\})$. Since $u_3u_4\in\OPT$, $\OPT\triangle(\calP_k\cup c)$ contains no triangle, so Invariant~\ref{inv:main}.\ref{prp:blueTriangles} ($\OPT$-triangle) is also satisfied. It remains to prove that  Invariant~\ref{inv:main}.\ref{prp:redTriangles} ($\APX$-triangle) holds. Assume to get a contradiction that there is a triangle in $\APX\triangle (P_k\cup c)$. 
    
    Recall that $u_1u_2u_3$ is the only triangle in $\APX\triangle (P_k\cup\{u_1u_2\})$, and, since $u_2u_3\in\APX$, it is not present in $\APX\triangle (P_k\cup c)$. Thus, every triangle in $\APX\triangle (P_k\cup c)$ must contain the edge $u_3u_4$. 

    We have $u_3u_4\in \OPT\setminus\APX$ and \af{ $u_1u_3\in \APX\triangle P_k$,} so by the Parity Lemma~\ref{lem:parityLemma}, $N_{\APX\triangle (P_k\cup c)}(u_3) = \{u_1, u_4\}$. Thus, the only possible triangle in $\APX\triangle (P_k\cup c)$ is $u_1u_3u_4$. %$u_1u_2\in\OPT\setminus\APX$, by the Parity Lemma~\ref{lem:parityLemma}
    Since \af{$u_1u_2u_3$ is a triangle in $\APX\triangle (P_k\cup \{u_1u_2\})$}, one has $N_{\APX\triangle (P_k\cup c)}(u_1) = \{u_2, u_3\}$, implying $u_1u_4\notin \APX\triangle (P_k\cup c)$, a contradiction. 
    
    \vspace{2.5mm}\noindent\textbf{(3) $\mathbf{u_1u_3\in \OPT\cap P_k}$.}

    \begin{Claim}\label{claim:blueCase3u2u3}
        $u_2u_3\in \APX\setminus(\OPT\cup \calP_k)$. 
    \end{Claim}
    \begin{proof}
        Recall that $u_1u_2\in\OPT\setminus(\APX\cup\calP_k)$. Since $u_1u_2, u_1u_3\in \OPT\setminus \APX$, it must be that $u_2u_3\in \APX\setminus \OPT$, otherwise $u_1u_2u_3$ would be a triangle in $\OPT$. Since $u_1u_2u_3$ is a triangle in $\APX\triangle(P_k\cup\{u_1u_2\})$, then $u_2u_3\notin P_k$. By Invariant~\ref{inv:main}.\ref{prp:disjoint} and since $u_1u_3\in P_k$, $u_1u_3\notin\calP_{k-1}$. One has $u_1u_2, u_1u_3\in\OPT\setminus\calP_{k-1}$. If $u_2u_3\in\calP_{k-1}$, then $u_1u_2u_3$ is a triangle in $\OPT\triangle\calP_{k-1}$ a violation of Invariant~\ref{inv:main}.\ref{prp:blueTriangles} ($\OPT$-triangle) w.r.t. $\calP_{k-1}$. The claim follows.
    \end{proof}
    
    \af{As $u_1u_2u_3$ is a triangle in $\APX\cup\OPT$ , $u_1u_2\in\OPT\setminus(\APX\cup\calP_k)$ and since by Claim~\ref{claim:blueCase3u2u3} $u_2u_3\in\APX\setminus(\OPT\cup\calP_k)$ then $\fab{c:=}u_1u_2, u_2u_3$ is a chunk.}

    Suppose that $\OPT\triangle(\calP_k\cup\{u_1u_2, u_2u_3\})$ contains no triangle. Then $c$ satisfies the claim of the lemma by the same argument used in case (2) (with Claim~\ref{claim:blueCase3u2u3} in place of Claim~\ref{claim:blueCase2u2u3}).
    
 %   \begin{Claim}\label{claim:blueCase3u2u3u4}
 %       If there is no triangle in $\OPT\triangle(\calP_k\cup\{u_1u_2, u_2u_3\})$ then $c:=u_1u_2, u_2u_3$ satisfies the claim. 
 %   \end{Claim}
 %   \begin{proof}
 %       The arguments are exactly the same as in Claim~\ref{claim:blueCase2u2u3u4} of previous case, with Claim~\ref{claim:blueCase3u2u3} in place of Claim~\ref{claim:blueCase2u2u3}.
 %   \end{proof}
    
    From now on we assume that there is a triangle in $\OPT\triangle(\calP_k\cup\{u_1u_2, u_2u_3\})$. By Invariant~\ref{inv:main}.\ref{prp:blueTriangles} ($\OPT$-triangle), $\OPT\triangle\calP_k$ contains no triangles. Since $u_1u_2\in\OPT$, every triangle in $\OPT\triangle(\calP_k\cup\{u_1u_2, u_2u_3\})$ must contain the edge $u_2u_3$. By the Unique Triangle Corollary~\ref{cor:uniqueTriangle}, there is only one such triangle: let $u_2u_3u_4$ be the only triangle in $\OPT\triangle(\calP_k\cup\{u_1u_2, u_2u_3\})$. Since $u_1u_2\in \OPT$, $u_1u_2\notin \OPT\triangle(\calP_k\cup\{u_1u_2, u_2u_3\})$, so $u_4\neq u_1$ and thus $u_1, u_2, u_3, u_4$ are all distinct. We consider two cases:
    
    \vspace{2.5mm}\noindent\textbf{(3.1) $\mathbf{u_3u_4\in \OPT}$.} We illustrate this case in Figure~\ref{fig:redCase31}.  \begin{figure}[ht]
        \centering
        \includegraphics{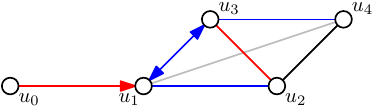}
        \caption{Proof of Lemma~\ref{lem:extendingTrailBlue}, Case \textbf{(3.1)}.}
        \label{fig:redCase31}
    \end{figure}
    \begin{Claim}\label{claim:blueCase3u3u4}
        $u_3u_4\in \OPT\setminus (\APX\cup\calP_k)$. 
    \end{Claim}
    \begin{proof}
        Since $u_3u_4\in \OPT\triangle(\calP_k\cup\{u_1u_2, u_2u_3\})$, it must be that $u_3u_4\notin\calP_k$. Assume by contradiction that $u_3u_4\in \APX\cap \OPT$. By assumption of this case, $u_1u_3\in\OPT\cap P_k$ and by Claim~\ref{claim:blueCase3u2u3}, $u_2u_3\in\APX\setminus(\OPT\cup\calP_k)$. Thus, one has $N_{\APX\triangle P_k}(u_3) \supseteq \{u_1, u_2, u_4\}$, a contradiction by the Parity Lemma~\ref{lem:parityLemma}. The claim follows.
    \end{proof}
    
    Recall that $u_1u_2\in\OPT\setminus(\APX\cup\calP_k)$, and by Claims~\ref{claim:blueCase3u2u3} and~\ref{claim:blueCase3u3u4}, $u_2u_3\in\APX\setminus(\OPT\cup\calP_k)$ and $u_3u_4\in\OPT\setminus(\APX\cup\calP_k)$. Furthermore, $u_1u_2u_3$ and $u_2u_3u_4$ are triangles in $\APX\cup\OPT$, so $c:=u_1u_2,u_2u_3,u_3u_4$ is a chunk. We claim that $c$ satisfies the claim of the lemma. 

    Observe that $P_1,\dots, P_k\circ c$ satisfy Invariants~\ref{inv:main}.\ref{prp:chunks},~\ref{inv:main}.\ref{prp:disjoint} and~\ref{inv:main}.\ref{prp:startingEnding}. Recall that $u_2u_3u_4$ is the only triangle in $\OPT\triangle(\calP_k\cup\{u_1u_2, u_2u_3\})$. Since $u_3u_4\in \OPT$, $\OPT\triangle(\calP_k\cup c)$ contains no triangle, so Invariant~\ref{inv:main}.\ref{prp:blueTriangles} ($\OPT$-triangle) is also satisfied. It remains to show that  Invariant~\ref{inv:main}.\ref{prp:redTriangles} ($\APX$-triangle) holds. Assume to get a contradiction that there is a triangle in $\APX\triangle(P_k\cup c)$.

    Recall that $u_1u_2u_3$ is the only triangle in $\APX\triangle (P_k\cup\{u_1u_2\})$. Since $u_2u_3\in \APX$, $u_2u_3\notin \APX\triangle (P_k\cup c)$. Thus, every triangle in $\APX\triangle (P_k\cup c)$ must contain the edge $u_3u_4$. 

    We have $u_3u_4\in \OPT\setminus\APX$ and $u_1u_3\in \OPT\cap P_k$, so by the Parity Lemma~\ref{lem:parityLemma}, $N_{\APX\triangle (P_k\cup c)}(u_3) = \{u_1, u_4\}$. Thus, the only possible triangle in $\APX\triangle (P_k\cup c)$ is $u_1u_3u_4$. Since $u_1u_2\in\OPT\setminus\APX$, by the Parity Lemma~\ref{lem:parityLemma}, one has $N_{\APX\triangle (P_k\cup c)}(u_1) = \{u_2, u_3\}$, implying $u_1u_4\notin \APX\triangle (P_k\cup c)$, a contradiction. 

    \vspace{2.5mm}\noindent\textbf{(3.2) $\mathbf{u_3u_4\in \APX\setminus \OPT}$.} We will show that this case is not possible. See Figure~\ref{fig:redCase32} for an illustration.
    \begin{figure}
        \centering
        \includegraphics{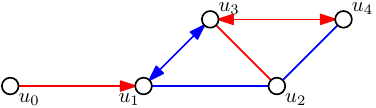}
        \caption{Proof of Lemma~\ref{lem:extendingTrailBlue}, Case \textbf{(3.2)}.}
        \label{fig:redCase32}
    \end{figure}
    
    \begin{Claim}\label{claim:blueCase3u2u4}
        $u_2u_4\in \OPT\setminus(\APX\cup\calP_k)$.
    \end{Claim}
    \begin{proof}
        By Claim~\ref{claim:blueCase3u2u3} and by assumption of this case, $u_2u_3, u_3u_4\in\APX$, so $u_2u_4\in\OPT\setminus\APX$, otherwise $u_2u_3u_4$ would be a triangle in $\APX$. Recall that $u_2u_3u_4$ is a triangle in $\OPT\triangle(\calP_k\cup\{u_1u_2, u_2u_3\})$. Thus $u_2u_4\notin\calP_k$. The claim follows.
    \end{proof}

    \begin{Claim}\label{claim:blueCase3onlyPk}
        $N_{P_k}(u_3) = \{u_1, u_4\}$.
    \end{Claim}
    \begin{proof}
        Since, $u_1u_2u_3$ is a triangle in $\APX\triangle P_k$, by the Parity Lemma~\ref{lem:parityLemma} one has $N_{\APX\triangle P_k}(u_3)=\{u_1, u_2\}$. Therefore, $u_3u_4\notin\APX\triangle P_k$. By assumption of this case $u_3u_4\in\APX\setminus\OPT$, so it must be that $u_3u_4\in P_k$. Recall that $u_1u_3\in\OPT\cap P_k$. Thus, $N_{P_k}(u_3) \supseteq \{u_1, u_4\}$. By Claim~\ref{claim:blueCase3u2u3}, $u_2u_3\notin P_k$. If $u_3$ has another neighbor $v\notin\{u_1, u_2, u_4\}$ in $\APX\cup \OPT$ it must be that $u_3v\in \OPT\setminus \APX$, because $N_\APX(u) = \{u_2, u_4\}$. Moreover, since $N_{\APX\triangle P_k}(u_3) = \{u_1, u_2\}$, it must be that $u_3v\notin P_k$. The claim follows.
    \end{proof}

    \begin{Claim}
        $u_4\neq u_0$.
    \end{Claim}
    \begin{proof}
        By Claim~\ref{claim:blueCase3u2u4} $u_2u_4\in\OPT\setminus\calP_k$, and by assumption Case~\textbf{(3.2)} and Claim~\ref{claim:blueCase3onlyPk}, $u_3u_4\in\APX\cap P_k$. Assume to get a contradiction $u_0=u_4$. There is an edge $u_0u_3\in \APX\cap P_k$ and an edge $u_0u_2\in \OPT\setminus\calP_k$. But then, since $u_0u_1\in\APX\cap P_k$, $N_{\OPT\triangle \calP_k}(u_0) \supseteq \{u_1, u_2, u_3\}$, a contradiction by the Parity Lemma~\ref{lem:parityLemma}.
    \end{proof}

    \begin{Claim}\label{claim:blueCase3u1u4}
        There is no edge $u_1u_4$ in $\APX\cup \OPT$.
    \end{Claim}
    \begin{proof}
        Assume to get a contradiction that $u_1u_4\in \APX\cup \OPT$. Then since $N_\OPT(u_1)=\{u_2, u_3\}$ it must be that $u_1u_4\in\APX\setminus\OPT$. By Claim~\ref{claim:blueCase3u2u4} $u_2u_4\in\OPT\setminus\calP_k$, and by assumption of this case and Claim~\ref{claim:blueCase3onlyPk}, $u_3u_4\in\APX\cap P_k$. If $u_1u_4\in \calP_k$ then $N_{\OPT\triangle\calP_k}(u_4) \supseteq \{u_1, u_2, u_3\}$, a contradiction by the Parity Lemma~\ref{lem:parityLemma}. 
        
        Recall that $u_1u_2\in\OPT\setminus(\APX\cup\calP_k)$. By assumption of this case, $u_1u_3\in\OPT\cap P_k$. If $u_1u_4\notin\calP_k$ then, $N_{\APX\triangle (P_k\cup \{u_1u_2\})}(u_1) \supseteq \{u_2, u_3, u_4\}$, a contradiction by the Parity Lemma~\ref{lem:parityLemma}. The claim follows.
    \end{proof}
    
    By Invariant~\ref{inv:main}.\ref{prp:chunks}, $P_k = c_1\circ\dots\circ c_l$, where $c_j$ is a chunk for all $j\in[1, l]$. By Claim~\ref{claim:blueCase3u1u4}, there is no edge $u_1u_4$ in $\APX\cup\OPT$, so there is no triangle $u_1u_3u_4$ in $\APX\cup\OPT$. By Claim~\ref{claim:blueCase3onlyPk}, $N_{P_k}(u_3)=\{u_1, u_4\}$. Therefore, by Definition~\ref{def:chunk} of chunk, the edges $u_1u_3$ and $u_3u_4$ must belong to consecutive chunks in $P_k$. Thus, we have $P_k = c_1\circ\dots\circ c_i\circ c_{i+1}\circ\dots c_l$, where $c_i$ is either a chunk whose last edge is $u_1u_3$ (and the first edge of $c_{i+1}$ is $u_3u_4$) or $c_i$ is a chunk whose last edge is $u_3u_4$ (and the first edge of $c_{i+1}$ is $u_1u_3$). We next consider these two cases separately.

    \vspace{2.5mm}\noindent\textbf{(3.2.1) $\mathbf{u_1u_3}$ is the last edge of $\mathbf{c_i}$.} This case is illustrated in Figure~\ref{fig:redCase321}.
    \begin{figure}[ht]
        \centering
        \includegraphics{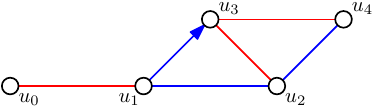}
        \caption{Proof of Lemma~\ref{lem:extendingTrailBlue}, Case \textbf{(3.2.1)}. Here the arrows are used for edges of $P$ instead of edges of $P_k$.}
        \label{fig:redCase321}
    \end{figure}    
    \fab{By Remark~\ref{rem:notAugmenting}, $P_1,\ldots,P_{k-1},P$, where $P:=c_1\circ\dots\circ c_i$}, satisfy Invariant~\ref{inv:main}. By Claim~\ref{claim:blueCase3u2u3}, $u_2u_3\in\APX\setminus(\OPT\cup\calP_k)$, so $c:=u_2u_3$ is a chunk. We claim that\fabr{I corrected this in several places: the invariant is about a set of paths} \fab{$P_1,\ldots,P_{k-1},P'$, where $P':=c_1\circ\dots\circ c_i\circ c$,} satisfy Invariant~\ref{inv:main}. That would lead to a contradiction by Invariant~\ref{inv:tieBreak}.\ref{inv:tieBreak:destroyTriangle}. Indeed, $u_1u_2u_3$ is a triangle in $\APX\cup\OPT$, so the first edge of $c$ (namely, $u_2u_3$) shares a triangle in $\APX\cup\OPT$ with the last edge of $c_i$ (namely, $u_1u_3$). On the other hand, since by Claim~\ref{claim:blueCase3u1u4}, $u_1u_4\notin \fab{\APX\cup \OPT}$, the first edge of $c_{i+1}$ (namely, $u_3u_4$), does not share a triangle in $\APX\cup\OPT$ with the last edge of $c_i$ (namely, $u_1u_3$).

    Observe that \fab{$P_1,\ldots,P_{k-1},P'$} satisfy Invariants~\ref{inv:main}.\ref{prp:chunks},~\ref{inv:main}.\ref{prp:disjoint} and~\ref{inv:main}.\ref{prp:startingEnding}. Since $u_2u_3\in \APX\setminus \OPT$, and $\APX\triangle P$ is triangle-free, Invariant~\ref{inv:main}.\ref{prp:redTriangles} ($\APX$-triangle) is also satisfied. It remains to prove that Invariant~\ref{inv:main}.\ref{prp:blueTriangles} ($\OPT$-triangle) holds. Assume to get a contradiction that $\OPT\triangle(P\cup\calP_{k-1}\cup\{u_2u_3\})$ contains a triangle.

    Since \fab{$P_1,\ldots,P_{k-1},P$} satisfy Invariant~\ref{inv:main}.\ref{prp:blueTriangles} ($\OPT$-triangle), $\OPT\triangle(P\cup\calP_{k-1})$ contains no triangle. Therefore, every triangle in $\OPT\triangle(P\cup\calP_{k-1}\cup\{u_2u_3\})$ must contain the edge $u_2u_3$. Recall that $u_1u_2\in\OPT\setminus \calP_k$. By Claim~\ref{claim:blueCase3u2u4}, $u_2u_4\in\OPT\setminus \calP_k$. By the Parity Lemma~\ref{lem:parityLemma} one has $N_{\OPT\triangle (P\cup\calP_{k-1})}(u_2) = \{u_1, u_4\}$. Then $N_{\OPT\triangle(P\cup\calP_{k-1}\cup\{u_2u_3\})}(u_2) = \{u_1, u_3, u_4\}$. Therefore the only possible triangles in $\OPT\triangle(P\cup\calP_{k-1}\cup\{u_2u_3\})$ are $u_1u_2u_3$ and $u_2u_3u_4$. 
    
    By Claim~\ref{claim:blueCase3u3u4}, $u_3u_4\in\APX\cap P_k$, so $u_3u_4\in\APX\setminus(\OPT\cup\calP_{k-1})$. Recall that $u_1u_3\in\OPT\cap P_k$. By assumption of this case, $u_1u_3\in P$ and $u_3u_4\notin P$, so $u_1u_3, u_3u_4\notin \OPT\triangle(P\cup\calP_{k-1}\cup\{u_2u_3\})$, a contradiction. The claim follows. 

    \vspace{2.5mm}\noindent\textbf{(3.2.2) $\mathbf{u_3u_4}$ is the last edge of $\mathbf{c_i}$.} This case is illustrated in Figure~\ref{fig:redCase322}.
    \begin{figure}[ht]
        \centering
        \includegraphics{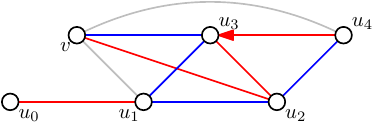}
        \caption{Proof of Lemma~\ref{lem:extendingTrailBlue}, Case \textbf{(3.2.2)}. Here the arrows are used for edges of $P$ instead of edges of $P_k$.}
        \label{fig:redCase322}
    \end{figure}
    \fab{By Remark~\ref{rem:notAugmenting} $P_1,\ldots,P_{k-1},P$, where $P=c_1\circ\dots\circ c_i$}, satisfy Invariant~\ref{inv:main}.     
    \begin{Claim}\label{claim:blueCase3u3v}
        There exists an edge $u_3v\in \OPT\setminus(\APX\cup \calP_k)$.
    \end{Claim}
    \begin{proof}
        By Claim~\ref{claim:blueCase3u2u3}, $u_2u_3\in\APX\setminus(\OPT\cup\calP_k)$. By assumption of Case~\textbf{(3.2)} and Claim~\ref{claim:blueCase3onlyPk}, $u_3u_4\in\APX\cap P_k$. Since $N_{\APX\setminus\OPT}(u_3)=\{u_2, u_4\}$ and $|N_{\OPT}(u_3)|\geq |N_{\APX}(u_3)|$ \fab{by the Degree Domination Lemma \ref{lem:blueMajority}}, there exist an edge $u_3v\in \OPT\setminus \APX, v\neq u_1$. By Claim~\ref{claim:blueCase3onlyPk}, one has $N_{P_k}(u_3)\fab{=}\{u_1, u_4\}$, so $u_3v\notin P_k$. If $u_3v\in P_j$ for some $j\in[1, k-1]$, since $u_2u_3, u_3u_4\in\APX\setminus\calP_{k-1}$, $N_{\APX\triangle P_j}(u_3)\supseteq \{u_2, u_4, v\}$, a contradiction by Corollary~\ref{cor:parityCorollary}. The claim follows.
    \end{proof}

    \begin{Claim}\label{claim:Case3contr}
        There is no chunk $c$ such that $P_1,\dots, P_{k-1}, P\circ c$ satisf\fab{y} Invariant~\ref{inv:main}, and the first edge of $c$ is $u_3v$. 
    \end{Claim}
    \begin{proof}
        Assume by contradiction that such $c$ exists. By Claim~\ref{claim:blueCase3u3v}, $u_3v\in\OPT\setminus(\APX\cup\calP_k)$. It must be \fab{the case} that there is no edge $u_4v\in \fab{\APX\cup \OPT}$. Indeed, otherwise $u_3v$ would share a triangle with the last edge of $c_i$, while the first edge of $c_{i+1}$ (\fab{namely,} $u_1u_3$) does not share a triangle with the last edge of $c_i$ (by Claim~\ref{claim:blueCase3u1u4}). This would be a \fab{violation of} Invariant~\ref{inv:tieBreak:destroyTriangle}. Recall that $u_1u_2\in\OPT\setminus(\APX\cup \calP_k)$. By Claim~\ref{claim:blueCase3u2u4}, $u_2u_4\in\OPT\setminus(\APX\cup \calP_k)$. Thus, there is no edge \fab{$vu_2$} in $\OPT\setminus \APX$ because $N_{\OPT}(u_2) = \{u_1, u_4\}$. 
        
        By assumption, $u_1u_3\in\OPT\cap P_k, u_3u_4\in\APX\setminus\OPT$. One has $N_\OPT(u_3)=\{u_1, v\}$, $N_\APX(u_3) = \{u_2, u_4\}$, there is no edge $u_4v$ in $\APX\cup\OPT$ and no edge $u_2v$ in $\OPT$. Therefore, every triangle of $\APX\cup\OPT$ that contains $u_3v$ and \fab{an}other edge of $\OPT$ must also contain the edge $u_1u_3$. Since $u_1u_2u_3$ is a triangle in $\APX\cup\OPT$ with $2$ edges from $\OPT$, this means that $u_3v$ is in \fab{fewer} such triangles than $u_1u_3$. This is a \fab{violation of} Invariant~\ref{inv:tieBreak}.\ref{inv:tieBreak:avoidTriangleBlue}.
    \end{proof}

    We will next show that there exists a chunk $c$ as described in Claim \ref{claim:Case3contr}, hence obtaining a contradiction.
%    now show that the condition of Claim~\ref{claim:Case3contr} holds, i.e. there is a chunk $c$ such that $P\circ c$ satisfies Invariant~\ref{inv:main}, and the first edge of $c$ is $u_3v$. 
By Claim~\ref{claim:blueCase3u3v}, $u_3v\in \OPT\setminus(\APX\cup \calP_k)$, so $c := u_3v$ is a chunk. Observe that \fab{$P_1,\ldots,P_{k-1},P\circ c$} satisfy Invariant~\ref{inv:main}.\ref{prp:chunks},~\ref{inv:main}.\ref{prp:disjoint} and~\ref{inv:main}.\ref{prp:startingEnding}. Since $u_3v\in\OPT$, and $\OPT\triangle(P\cup\calP_k)$ is triangle-free, Invariant~\ref{inv:main}.\ref{prp:blueTriangles} ($\OPT$-triangle) \fab{alsp holds}. If there is no triangle in $\APX\triangle (P\cup\{u_3v\})$ then Invariant~\ref{inv:main}.\ref{prp:redTriangles} ($\APX$-triangle) is also satisfied. Therefore, we can next assume that there is a triangle in $\APX\triangle (P\cup\{u_3v\})$. 
    
    Since $\APX\triangle P$ contains no triangle, every triangle in $\APX\triangle (P\cup\{u_3v\})$ must contain the edge $u_3v$. One has $u_3v\in\OPT\setminus(\APX\cup\calP_k)$, and by Claim~\ref{claim:blueCase3u2u3}, $u_2u_3\in\APX\setminus(\OPT\cup\calP_k)$, so by the Parity Lemma~\ref{lem:parityLemma} one has $N_{\APX\triangle (P\cup\{u_3v\})}(u_3) = \{u_2, v\}$. Therefore, the only triangle in $\APX\triangle (P\cup\{u_3v\})$ is $u_2u_3v$. 

    \begin{Claim}\label{claim:blueCase3u2v}
        $u_2v\in\APX\setminus(\OPT\cup\calP_k)$
    \end{Claim}
    \begin{proof}
        Recall that $u_1u_2\in\OPT\setminus(\APX\cup\calP_k)$. By Claim~\ref{claim:blueCase3u2u4}, $u_2u_4\in \OPT\setminus(\APX\cup\calP_k)$. Therefore, $u_2v\in\APX\setminus\OPT$, otherwise $N_{\OPT}(u_2) = \{u_1, u_4, v\}$, a contradiction to the fact that $\OPT$ is a $2$-matching. Moreover, by the Parity Lemma~\ref{lem:parityLemma}, $N_{\OPT\triangle\calP_k}(u_2) = \{u_1, u_4\}$, so $u_2v\notin\calP_k$. The claim follows.
    \end{proof}

    By Claims~\ref{claim:blueCase3u3v} and~\ref{claim:blueCase3u2v}, $u_3v\in\OPT\setminus(\APX\cup\calP_k)$ and $u_2v\in\APX\setminus(\OPT\cup\calP_k)$. Since $u_2u_3v$ is a triangle in $\APX\cup\OPT$, $c:=u_3v, vu_2$ is a chunk. We claim \fab{that $P_1,\ldots,P_{k-1},P\circ c$} satisfy Invariant~\ref{inv:main}. Observe that Invariants~\ref{inv:main}.\ref{prp:chunks},~\ref{inv:main}.\ref{prp:disjoint} and~\ref{inv:main}.\ref{prp:startingEnding} are satisfied. Recall that $u_2u_3v$ is the only triangle in $\APX\triangle(P\cup\{u_3v\})$. Since $u_2v\in\APX$, $\fab{u_2v\notin}\APX\triangle(P\cup\{u_2v, u_3v\})$, thus  Invariant~\ref{inv:main}.\ref{prp:redTriangles} ($\APX$-triangle) also holds. It remains to check  Invariant~\ref{inv:main}.\ref{prp:blueTriangles} ($\OPT$-triangle). By contradiction, assume that there is a triangle in $\OPT\triangle(P\cup\calP_{k-1}\cup\{u_2v, u_3v\})$.
    
    Since $\OPT\triangle(P\cup\calP_{k-1})$ contains no triangle and $u_3v\in\OPT$, every triangle in $\OPT\triangle(P\cup\calP_{k-1}\cup\{u_2v, u_3v\})$ must contain the edge $u_2v$. Recall that $u_1u_2\in\OPT\setminus(\APX\cup\calP_{k})$, and, by Claim~\ref{claim:blueCase3u2u4}, $u_2u_4\in\OPT\setminus(\APX\cup\calP_{k})$, implying $u_1u_2, u_2u_4\in\OPT\setminus(P\cup\calP_{k-1})$. By the Parity Lemma~\ref{lem:parityLemma}, $N_{\OPT\triangle(P\cup\calP_{k-1})}(u_2) = \{u_1, u_4\}$. Thus, one has $N_{\OPT\triangle(P\cup\calP_{k-1}\cup\{u_2v, u_3v\})}(u_2) = \{u_1, u_4, v\}$. Therefore, the only possible triangles in $\OPT\triangle(P\cup\calP_{k-1}\cup\{u_2v, u_3v\})$ are $u_1u_2v$ and $u_2u_4v$. 
        
    Since $u_1u_2\in\OPT\setminus(\APX\cup\calP_k)$ and, by assumption, $u_1u_3\in\OPT\cap P_k$, one has $u_1u_2, u_1u_3\in\OPT\setminus(P\cup\calP_{k-1})$. By the Parity Lemma~\ref{lem:parityLemma}, one has $N_{\OPT\triangle(P\cup\calP_{k-1})}(u_1) = \{u_2, u_3\}$. Thus, $N_{\OPT\triangle(P\cup\calP_{k-1}\cup\{u_2v, u_3v\})}(u_1) = \{u_2, u_3\}$, so $u_1v\notin\OPT\triangle(P\cup\calP_{k-1}\cup\{u_2v, u_3v\})$. Therefore $u_2u_4v$ must be a triangle in $\OPT\triangle(P\cup\calP_{k-1}\cup\{u_2v, u_3v\})$. 
    
    By assumptions of Case~\textbf{(3.2.2)}, $u_3u_4\in\APX\cap P$, and, by Claim~\ref{claim:blueCase3u2u4}, $u_2u_4\in\OPT\setminus(\APX\cup\calP_{k})$. By the Parity Lemma~\ref{lem:parityLemma}, $N_{\OPT\triangle(P\cup\{u_2v, u_3v\}\cup\calP_{k-1})}(u_4) = \{u_2, u_3\}$. Therefore, one has that $u_4v\notin\OPT\triangle(P\cup\{u_3v, vu_2\}\cup\calP_{k-1})$, a contradiction. 
\end{proof}

\begin{lemma}\label{lem:extendingTrailRed}
    Let $P_1,\dots, P_k$ be the current collection of alternating trails \fab{(satisfying Invariants \ref{inv:main} and \ref{inv:tieBreak})}, where $P_k$ is not augmenting or or the last node of $P_k$ is not deficient w.r.t. $\calP_{k-1}$. Suppose that the last edge \fab{$u_0u_1$} of $P_k$ is in $\OPT\setminus \APX$. Then there exist a chunk $c$ such that $P_1,\dots,P_k\circ c$ satisfy Invariants~\ref{inv:main} \fab{and \ref{inv:tieBreak}}.
\end{lemma}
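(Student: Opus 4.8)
The plan is to mirror the proof of Lemma~\ref{lem:extendingTrailBlue}, exchanging the roles of $\APX$ and $\OPT$, while keeping track of one genuine asymmetry: Invariant~\ref{inv:main}.\ref{prp:redTriangles} ($\APX$-triangle) is phrased through $\APX\triangle P_j$ (only the last trail), whereas Invariant~\ref{inv:main}.\ref{prp:blueTriangles} ($\OPT$-triangle) is phrased through $\OPT\triangle\calP_j$ (all trails so far). As in the blue case it suffices to produce a chunk $c$ with $P_1,\dots,P_{k-1},P_k\circ c$ satisfying Invariant~\ref{inv:main}; Invariant~\ref{inv:tieBreak} is then restored by choosing, among all admissible $c$, the one prescribed by the tie-breaking rule, and — exactly as in Case~\textbf{(3.2)} of Lemma~\ref{lem:extendingTrailBlue} — that same rule, applied to the already-built trails, is what powers the contradiction in the deepest sub-case.

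Concretely, I would first apply Lemma~\ref{lem:continueTrail} to the last edge $u_0u_1\in\OPT\setminus\APX$ of $P_k$ to obtain $u_1u_2\in(\APX\triangle\OPT)\setminus\calP_k$ with $u_0u_1u_2$ alternating; since $u_0u_1\in\OPT\setminus\APX$ this forces $u_1u_2\in\APX\setminus(\OPT\cup\calP_k)$, so $u_1u_2$ is a chunk. If $c:=u_1u_2$ works we are done. Otherwise, note that adjoining the $\APX$-edge $u_1u_2$ deletes it from $\APX\triangle P_k$, so the $\APX$-triangle property is automatic and only the $\OPT$-triangle property can fail: $\OPT\triangle(\calP_k\cup\{u_1u_2\})$ contains a triangle, necessarily through $u_1u_2$, and by the Unique Triangle Corollary~\ref{cor:uniqueTriangle} applied to the alternating trail $P_k\circ(u_1u_2)$ it is the \emph{unique} such triangle $u_1u_2u_3$, with $u_0,u_1,u_2,u_3$ pairwise distinct (as $u_0u_1\notin\OPT\triangle(\calP_k\cup\{u_1u_2\})$). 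Since $u_1u_3\in\OPT\triangle\calP_k$, exactly one of the following holds, mirroring the blue proof: \textbf{(1)} $u_1u_3\in\OPT\setminus(\APX\cup\calP_k)$; \textbf{(2)} $u_1u_3\in\APX\cap(\OPT\cup\calP_{k-1})$; \textbf{(3)} $u_1u_3\in\APX\cap P_k$. In each case I would attempt to append a one-, two-, or three-edge chunk $c$, constructing it edge by edge: at each step the Parity Lemma~\ref{lem:parityLemma} pins the relevant degrees $|N_{\APX\triangle(P_k\cup c)}(u_i)|$ and $|N_{\OPT\triangle(\calP_k\cup c)}(u_i)|$ down to two (so each newly touched edge lies in at most one triangle), and when a fresh alternative edge at $u_1$ or $u_3$ is needed I would produce it from the Degree Domination Lemma~\ref{lem:blueMajority} together with the count of which incident edges of $\APX\cup\OPT$ are already used by $\calP_k$; then I would verify the four structural invariants. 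In the sub-case of~\textbf{(3)} that should be impossible, one argues as in Case~\textbf{(3.2)} of Lemma~\ref{lem:extendingTrailBlue}: the edges $u_1u_3$ and the would-be continuation $u_3u_4$ lie in consecutive chunks of $P_k$, and the chunk one is then forced to have picked (a destroy-triangle chunk, or one contained in fewer triangles with two $\APX$-edges) already satisfies Invariant~\ref{inv:main}, contradicting Invariant~\ref{inv:tieBreak}.\ref{inv:tieBreak:destroyTriangle} or~\ref{inv:tieBreak}.\ref{inv:tieBreak:avoidTriangleRed}.

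The main obstacle is precisely that the exchange $\APX\leftrightarrow\OPT$ is not literal. Because the $\OPT$-triangle invariant tracks $\calP_j$ rather than $P_j$, several ``this edge is unused'' steps that in the blue proof only had to avoid $P_k$ must now avoid all of $\calP_k$, while symmetrically the $\APX$-side steps become easier; and since the Degree Domination Lemma only yields $|N_\APX(u)|\le|N_\OPT(u)|$ — never the reverse — every place the blue proof used it to conjure an extra $\OPT$-edge incident to $u_1$ must instead be handled through the Parity Lemma and the edge-usage bookkeeping at $u_1$. Thus the real labor is re-deriving, case by case, the analogues of Claims~\ref{claim:blueCase1u1u2'}--\ref{claim:blueCase3u2v} under these substitutions; the combinatorial skeleton (the split on $u_1u_3$, the one/two/three-edge chunks, and the tie-breaking contradiction) carries over unchanged.
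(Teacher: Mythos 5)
Your plan matches the paper's proof at every structural level: extend by an edge $u_1u_2\in\APX\setminus(\OPT\cup\calP_k)$ produced via Lemma~\ref{lem:continueTrail}, observe that only the $\OPT$-triangle invariant can break, split on the location of $u_1u_3$ (three cases, with the third subdivided via consecutive chunks and the tie-breaking Invariant~\ref{inv:tieBreak}.\ref{inv:tieBreak:destroyTriangle}/\ref{inv:tieBreak:avoidTriangleRed}), and build the chunk edge by edge with the Parity Lemma and Unique Triangle Corollary controlling degrees. You have also correctly spotted the two genuine asymmetries: the $\OPT$-triangle invariant runs through $\calP_j$ rather than $P_j$, and Degree Domination is one-sided.

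However, your proposed replacement for the Degree Domination step is not quite right. You write that where the blue proof used $|N_\OPT(u)|\ge|N_\APX(u)|$ to conjure a fresh $\OPT$-edge, the red proof should get its fresh $\APX$-edge ``through the Parity Lemma and the edge-usage bookkeeping at $u_1$.'' But the Parity Lemma only \emph{bounds} degrees from above; it cannot certify the \emph{existence} of an unused $\APX$-edge. What the paper actually does in the two corresponding places (Claim~\ref{claim:redCase1u1u2'} in Case~\textbf{(1)}, and Claim~\ref{claim:redCase3u3v} in Case~\textbf{(3.2.2)}) is argue by contradiction with the lemma's own hypothesis: if no such $\APX$-edge existed, then $u_1$ (resp.\ $u_3$) would be deficient w.r.t.\ $\calP_{k-1}$; since $P_k$ (resp.\ the prefix $P$) ends in an $\OPT$-edge, the trail would then be an augmenting trail with deficient last node, contradicting ``$P_k$ is not augmenting or its last node is not deficient'' (resp.\ Remark~\ref{rem:notAugmenting}). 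This is a different mechanism from what you describe, and without it those two claims do not go through. Once you substitute this ``would-be-augmenting'' contradiction for the Degree Domination appeals, the rest of your plan carries through exactly as in Lemma~\ref{lem:extendingTrailBlue}.
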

\begin{proof}
\mi{It is sufficient to show that there exists one such chunk $c$ so that Invariant \ref{inv:main} holds. Invariant \ref{inv:tieBreak} can then be trivially enforced by breaking ties properly.}

By Lemma~\ref{lem:continueTrail}, there exist an edge $u_1u_2\in \APX\setminus (\OPT\cup\calP_k)$. Notice that $u_1u_2$ is a chunk. If $c := u_1u_2$ satisfies the claim we are done, hence assume this is not the case. Observe that $P_1,\dots, P_k\circ c$ satisfy Invariant~\ref{inv:main}.\ref{prp:chunks},~\ref{inv:main}.\ref{prp:disjoint} and~\ref{inv:main}.\ref{prp:startingEnding}. Since $u_1u_2\in \APX\setminus \OPT$ and $\APX\triangle P_k$ is triangle-free, Invariant~\ref{inv:main}.\ref{prp:redTriangles} ($\APX$-triangle) is also satisfied. Thus Invariant~\ref{inv:main}.\ref{prp:blueTriangles} ($\OPT$-triangle) must be violated. In more detail, $\OPT\triangle (\calP_k\cup\{u_1u_2\}) = (\OPT\triangle \calP_k)\cup \{u_1u_2\}$ must contain a triangle (and $u_1u_2$ must belong to that triangle since \fab{$\OPT\triangle \calP_k$} is triangle-free).\fabr{It was $\APX\triangle P_k$, check} By the Unique Triangle Corollary~\ref{cor:uniqueTriangle}, there is only one such triangle: let $u_1u_2u_3$ be the only triangle in $\OPT\triangle (\calP_k\cup\{u_1u_2\})$. Since $u_0u_1\in \OPT\cap P_k$, $u_0u_1\notin \OPT\triangle (\calP_k\cup \{u_1u_2\})$. Thus \fab{$u_0\neq u_3$, hence} $u_0, u_1, u_2, u_3$ are all distinct.

    We consider the following cases depending on the edge $u_1u_3$. Notice that, since $u_1u_3\in \OPT\triangle (\calP_k\cup \{u_1u_2\})$ then $u_1u_3$ is in either $\OPT\setminus \calP_k$ or in $\calP_k\setminus \OPT=\APX\cap \calP_k$. \mi{In the first case we further distinguish between the case $u_1u_3\in \OPT\setminus(\APX\cup\calP_k)$ and $u_1u_3\in \OPT\cap \APX$. In the second case we further distinguish between the case $u_1u_3\in\APX\cap\calP_{k-1}$ and $u_1u_3\in\APX\cap P_k$. The cases $u_1u_3\in \OPT\cap \APX$ and $u_1u_3\in\APX\cap\calP_{k-1}$ can be dealt with together (in Case~\textbf{(2)}), leading to the next $3$ cases.}\fabr{Case 2 requires an explanation}\mig{Added explaination}
        
    \vspace{2.5mm}\noindent\textbf{(1) $\mathbf{u_1u_3\in \OPT\setminus (\APX\cup\calP_k)}$.} We illustrate this case in Figure~\ref{fig:blueCase1}.
    \begin{figure}[ht]
        \centering
        \includegraphics{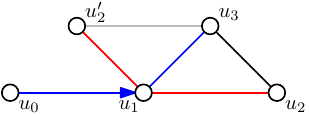}
        \caption{Proof of Lemma~\ref{lem:extendingTrailRed}, Case \textbf{(1)}.}
        \label{fig:blueCase1}
    \end{figure}
    \begin{Claim}\label{claim:redCase1u1u2'}
        There exists an edge $u_1u_2'\in\APX\setminus(\OPT\cup \calP_k), u_2'\neq u_2$.
    \end{Claim}
    \begin{proof}
        \fab{One has} $u_0u_1\in\OPT\cap P_k$, so $u_0u_1\in\OPT\setminus(\APX\cup\calP_{k-1})$. Also, $u_1u_2\in\APX\setminus(\OPT\cup\calP_k)$ and $u_1u_3\in \OPT\setminus(\APX\cup\calP_k)$. Therefore, if $N(u_1)=\{u_0, u_2, u_3\}$, then $u_1$ is is deficient w.r.t. $\calP_{k-1}$ and $P_k$ is augmenting, a contradiction to the assumptions of this lemma. Thus, there must exist an edge $u_1u_2'\in\APX\setminus\OPT, u_2'\neq u_2$. 
        
        By contradiction, assume that $u_1u_2'\in\calP_k$. Then, since $u_1u_2u_3$ is a triangle in $\OPT\triangle(\calP_k\cup\{u_1u_2\})$ and $u_1u_2'\in\APX\setminus\OPT$, $N_{\OPT\triangle(\calP_k\cup\{u_1u_2\})}(u_1) \supseteq\{u_2, u_2', u_3\}$, a contradiction to the Parity Lemma~\ref{lem:parityLemma}. The claim follows.
    \end{proof}

    By Claim~\ref{claim:redCase1u1u2'}, $c:=u_1u_2'$ is a chunk. We claim that $P_1,\dots, P_k\circ c$ satisfy Invariant~\ref{inv:main}. Observe that Invariants~\ref{inv:main}.\ref{prp:chunks},~\ref{inv:main}.\ref{prp:disjoint} and~\ref{inv:main}.\ref{prp:startingEnding} are satisfied. Since $u_1u_2'\in \APX\setminus \OPT$ and $\APX\triangle P_k$ is triangle-free, Invariant~\ref{inv:main}.\ref{prp:redTriangles} ($\APX$-triangle) is also satisfied. We next prove that Invariant~\ref{inv:main}.\ref{prp:blueTriangles} ($\OPT$-triangle) holds. By contradiction, assume that there is a triangle in $\OPT\triangle(\calP_k\cup\{u_1u_2'\})$.
    
    Since $\OPT\triangle \calP_k$ contains no triangle, every triangle in $\OPT\triangle (\calP_k\cup \{u_1u_2'\})$ must contain the edge $u_1u_2'$. By the Unique Triangle Corollary~\ref{cor:uniqueTriangle}, there is at most one such triangle. Since $u_1u_3\in\OPT\setminus(\APX\cup\calP_k)$ and, by Claim~\ref{claim:redCase1u1u2'}, $u_1u_2'\in\APX\setminus(\OPT\cup\calP_k)$, by the Parity Lemma~\ref{lem:parityLemma} one has $N_{\OPT\triangle(\calP_k\cup\{u_1u_2'\})}\fab{(u_1)} = \{\fab{u_2',u_3}\}$. Thus, the only triangle in $\OPT\triangle(\calP_k\cup\{u_1u_2'\})$ is $u_1u_2'u_3$. 
    
    Since $u_1u_2u_3$ is a triangle in $\OPT\triangle (\calP_k\cup\{u_1u_2\})$, by the Parity Lemma~\ref{lem:parityLemma}, one has $N_{\OPT\triangle (\calP_k\cup\{u_1u_2\})}(u_3) = \{u_1, u_2\}$. Observing that $N_{\OPT\triangle (\calP_k\cup\{u_1u_2'\})}(u_3) = N_{\OPT\triangle (\calP_k\cup\{u_1u_2\})}(u_3)$, we infer that $u_2'u_3\notin \OPT\triangle (\calP_k\cup\{u_1u_2'\})$, a contradiction \fab{(since $u_1u_2'u_3$ is a triangle in $\OPT\triangle(\calP_k\cup\{u_1u_2'\})$)}. 

    \vspace{2.5mm}\noindent\textbf{(2) $\mathbf{u_1u_3\in \APX\cap (\OPT\cup\calP_{k-1})}$.} We illustrate this case in Figure~\ref{fig:blueCase2}.
    \begin{figure}[ht]
        \centering
        \includegraphics{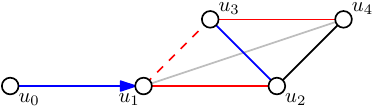}
        \caption{Proof of Lemma~\ref{lem:extendingTrailRed}, Case \textbf{(2)}. We illustrate the case when $u_1u_3\in \APX\cap\calP_{k-1}$, the other case (when $u_1u_3\in\APX\cap\OPT$) being identical.}
        \label{fig:blueCase2}
    \end{figure}

    \begin{Claim}\label{claim:redCase2u2u3}
        $u_2u_3\in\OPT\setminus(\APX\cup\calP_k)$.
    \end{Claim}
    \begin{proof}
        Recall that $u_1u_2\in\APX\setminus(\OPT\cup\calP_k)$ and, by Case~\textbf{(2)} assumption $u_1u_3\in\APX\cap(\OPT\cup\calP_{k-1})$, so $u_1u_3\in \APX\triangle P_k$. Notice that $u_2u_3\in\OPT\setminus\APX$, otherwise $u_1u_2u_3$ would be a triangle in $\APX$. Recall that $u_1u_2u_3$ is a triangle in $\OPT\triangle(\calP_k\cup\{u_1u_2\})$. Thus one has $u_2u_3\notin\calP_k$, otherwise $u_2u_3\notin\OPT\triangle(\calP_k\cup\{u_1u_2\})$, a contradiction. The claim follows.
    \end{proof}

Recall that $u_1u_2\in\APX\setminus(\OPT\cup\calP_k)$.\fabr{I removed 2 cases for uniformity with previous lemma (I changed my mind Miguel)} By Claim~\ref{claim:redCase2u2u3}, $u_2u_3\in\OPT\setminus(\APX\cup\calP_k)$. Also, $u_1u_2u_3$ is a triangle in $\APX\cup\OPT$, so $c:= u_1u_2, u_2u_3$ is a chunk. 

If there is no triangle in $\APX\triangle(P_k\cup\{u_1u_2, u_2u_3\})$, then $c$ satisfies the claim of the lemma. Indeed, $P_1,\dots, P_k\circ c$ satisf\fab{y} Invariants~\ref{inv:main}.\ref{prp:chunks},~\ref{inv:main}.\ref{prp:disjoint} and~\ref{inv:main}.\ref{prp:startingEnding}. Invariant~\ref{inv:main}.\ref{prp:redTriangles} ($\APX$-triangle) is also trivially satisfied. Recall that $u_1u_2u_3$ is the only triangle in $\OPT\triangle (\calP_k\cup\{u_1u_2\})$. Since $u_2u_3\in\OPT$, $\OPT\triangle (\calP_k\cup\{u_1u_2, u_2u_3\})$ contains no triangle and Invariant~\ref{inv:main}.\ref{prp:blueTriangles} ($\OPT$-triangle) holds. 

Hence we next assume that there is a triangle in $\APX\triangle(P_k\cup\{u_1u_2, u_2u_3\})$. By Invariant~\ref{inv:main}.\ref{prp:redTriangles} ($\APX$-triangle), $\APX\triangle P_k$ contains no triangles. Since $u_1u_2\in\APX$, every triangle in $\APX\triangle(P_k\cup\{u_1u_2, u_2u_3\})$ must contain the edge $u_2u_3$. By the Unique Triangle Corollary~\ref{cor:uniqueTriangle}, there is only one such triangle: let $u_2u_3u_4$ be the only triangle in $\APX\triangle(P_k\cup\{u_1u_2, u_2u_3\})$. Since $u_1u_2\in \APX$, $u_1u_2\notin \APX\triangle(P_k\cup\{u_1u_2, u_2u_3\})$. So $u_4\neq u_1$, and thus $u_1, u_2, u_3, u_4$ are all distinct.
      
    \begin{Claim}\label{claim:redCase2u3u4}
        $u_3u_4\in\APX\setminus(\OPT\cup\calP_k)$
    \end{Claim}
    \begin{proof}
        By Claim~\ref{claim:redCase2u2u3}, $u_2u_3\in\OPT\setminus(\APX\cup\calP_k)$. By contradiction, assume that $u_3u_4\in\OPT$. It must be the case that $u_1u_3\notin\OPT$ because otherwise $N_\OPT(u_3) \fab{\supseteq} \{u_1, u_2, u_4\}$, a contradiction \fab{since $\OPT$ is a 2-matching.} Thus by case \fab{\textbf{(2)}} assumption $u_1u_3\in\APX\cap\calP_{k-1}$. Since $u_2u_3u_4$ is a triangle in $\APX\triangle(P_k\cup\{u_1u_2, u_2u_3\})$ and $u_3u_4\in\OPT$, either $u_3u_4\in P_k$ or $u_3u_4\in\APX\cap\OPT$. In both cases one has $u_3u_4\notin\calP_{k-1}$. Therefore, $N_{\OPT\triangle \calP_{k-1}}(u_3) \fab{\supseteq} \{u_1, u_2, u_4\}$, a contradiction by Corollary~\ref{cor:parityCorollary}. Thus $u_3u_4\in\APX\setminus\OPT$. 
        
        Notice that by the assumption of Case~\textbf{(2)}, and since $u_2u_3\in\OPT\setminus\calP_k$, $u_1u_3, u_2u_3\in \OPT\triangle\calP_k$.\fabr{I think that $u_1u_3\in \OPT\triangle\calP_k$ needs some explanation. \af{agree!}} By the Parity Lemma~\ref{lem:parityLemma}, $N_{\OPT\triangle\calP_k}(u_3)=\{u_1, u_2\}$. Therefore, since $u_3u_4\in\APX\setminus\OPT$, it must be that $u_3u_4\notin\calP_k$. The claim follows.
    \end{proof}
    
    Recall that $u_1u_2\in\APX\setminus(\OPT\cup\calP_k)$. By Claims~\ref{claim:redCase2u2u3} and~\ref{claim:redCase2u3u4}, $u_2u_3\in\OPT\setminus(\APX\cup\calP_k)$ and $u_3u_4\in\APX\setminus(\OPT\cup\calP_k)$. Furthermore, $u_1u_2u_3$ and $u_2u_3u_4$ are triangles in $\APX\cup\OPT$, so $c := u_1u_2, u_2u_3, u_3u_4$ is a chunk. We claim that $P_1,\dots, P_k\circ c$ satisfies Invariant~\ref{inv:main}. 
        
    Observe that Invariants~\ref{inv:main}.\ref{prp:chunks},~\ref{inv:main}.\ref{prp:disjoint} and~\ref{inv:main}.\ref{prp:startingEnding} are satisfied. Recall that $u_2u_3u_4$ is the only triangle in $\APX\triangle(P_k\cup\{u_1u_2, u_2u_3\})$. Since $u_3u_4\in\APX$, $\APX\triangle(P_k\cup \af{c})$ contains no triangle, so Invariant~\ref{inv:main}.\ref{prp:redTriangles} ($\APX$-triangle) is also satisfied. It is only left to prove that Invariant~\ref{inv:main}.\ref{prp:blueTriangles} ($\OPT$-triangle) is also satisfied. By contradiction, assume that there is a triangle in $\OPT\triangle (\calP_k\cup\af{c})$. 
    
    Recall that $u_1u_2u_3$ is the only triangle in $\OPT\triangle (\calP_k\cup\{u_1u_2\})$, and, since $u_2u_3\in\OPT$, $u_2u_3\notin\OPT\triangle (\calP_k\cup\af{c})$. Thus, every triangle in $\OPT\triangle (\calP_k\cup\af{c})$ must contain the edge $u_3u_4$.

    We have $u_3u_4\in \APX\setminus\OPT$ and, by assumption of Case~\textbf{(2)}, $u_1u_3\in \APX\cap (\OPT\cup\calP_{k-1})$, so by the Parity Lemma~\ref{lem:parityLemma}, $N_{\OPT\triangle (\calP_k\cup\af{c})}(u_3) = \{u_1, u_4\}$. Thus, the only possible triangle in $\OPT\triangle (\calP_k\cup c)$ is $u_1u_3u_4$. Since $u_1u_2\in\APX\setminus\OPT$, by the Parity Lemma~\ref{lem:parityLemma}, $N_{\OPT\triangle (\calP_k\cup c)}(u_1) = \{u_2, u_3\}$, implying $u_1u_4\notin \OPT\triangle (\calP_k\cup c)$, a contradiction.

    \vspace{2.5mm}\noindent\textbf{(3) $\mathbf{u_1u_3\in \APX\cap P_k}$.}

    \begin{Claim}\label{claim:redCase3u2u3}
        $u_2u_3\in \OPT\setminus(\APX\cup \calP_k)$.
    \end{Claim}
    \begin{proof}
        Recall $u_1u_2\in\APX\setminus(\OPT\cup\calP_k)$. Since $u_1u_3\in \fab{\APX\cap P_k\subseteq}\APX\setminus \OPT$, one has $u_2u_3\in \OPT\setminus \APX$ (otherwise $u_1u_2u_3$ would be a triangle in $\APX$). Since $u_1u_2u_3$ is a triangle in $\OPT\triangle (\calP_k\cup\{u_1u_2\})$, $u_2u_3\notin \calP_k$. The claim follows.
    \end{proof}

\fab{Since $u_1u_2\in \APX\setminus \OPT$, $u_2u_3\in \OPT\setminus \APX$ and $u_1u_2u_3$ is a triangle, $c:=u_1u_2,u_2u_3$ is a chunk.} If\fabr{Removed 2 subcases like in (2)} $\APX\triangle(P_k\cup\{u_1u_2, u_2u_3\})$ contains no triangle, then $c$ satisfies the claim of the lemma by the same argument as in case $\mathbf{(2)}$ (with Claim~\ref{claim:redCase3u2u3} in place of Claim~\ref{claim:redCase2u2u3}).

%    \vspace{2.5mm}\noindent\textbf{(3.1) There is no triangle in $\mathbf{\APX\triangle(P_k\cup\{u_1u_2, u_2u_3\})}$.}
    
%    The arguments are exactly the same as in Case~\textbf{(2.1)}, with Claim~\ref{claim:redCase3u2u3} in place of Claim~\ref{claim:redCase2u2u3}.

%    \vspace{2.5mm}\noindent\textbf{(3.2) There is a triangle in $\mathbf{\APX\triangle(P_k\cup\{u_1u_2, u_2u_3\})}$.}

Thus we next assume that $\APX\triangle(P_k\cup\{u_1u_2, u_2u_3\})$ contains a triangle. By Invariant~\ref{inv:main}.\ref{prp:redTriangles} ($\APX$-triangle), $\APX\triangle P_k$ contains no triangles. Since $u_1u_2\in\APX$, every triangle in $\APX\triangle(P_k\cup\{u_1u_2, u_2u_3\})$ must contain the edge $u_2u_3$. By the Unique Triangle Corollary~\ref{cor:uniqueTriangle}, there is only one such triangle: let $u_2u_3u_4$ be the only triangle in $\APX\triangle(P_k\cup\{u_1u_2, u_2u_3\})$. Since $u_1u_2\in \APX$, $u_1u_2\notin \APX\triangle(P_k\cup\{u_1u_2, u_2u_3\})$. So $u_4\neq u_1$ and thus $u_1, u_2, u_3, u_4$ are all distinct. We distinguish two cases:
        
    \vspace{2.5mm}\noindent\textbf{(3.1) $\mathbf{u_3u_4\in \APX}$.} We illustrate this case in Figure~\ref{fig:blueCase31}.
    \begin{figure}[ht]
        \centering
        \includegraphics{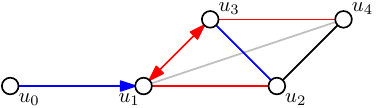}
        \caption{Proof of Lemma~\ref{lem:extendingTrailRed}, Case \textbf{(3.1)}.}
        \label{fig:blueCase31}
    \end{figure}
    \begin{Claim}\label{claim:redCase3u3u4}
        $u_3u_4\in \APX\setminus (\OPT\cup\calP_k)$.
    \end{Claim}
    \begin{proof}
        By contradiction, assume that $u_3u_4\in \OPT\cup\calP_k$. Then, since $u_3u_4\in\APX$, $u_3u_4\notin\OPT\cap\calP_k$ and thus  $u_3u_4\in\OPT\triangle\calP_k$. One has $u_1u_3\in\APX\cap\calP_k$ by the assumption of Case~\textbf{(3)}, and $ u_2u_3\in\OPT\setminus\calP_k$ by Claim~\ref{claim:redCase3u2u3}. Thus $N_{\OPT\triangle\calP_k}(u_3) \supseteq \{u_1, u_2, u_4\}$, a contradiction to the Parity Lemma~\ref{lem:parityLemma}. 
    \end{proof}    
    
    Recall that $u_1u_2\in\APX\setminus(\OPT\cup\calP_k)$, and by Claims~\ref{claim:redCase3u2u3} and~\ref{claim:redCase3u3u4}, $u_2u_3\in\OPT\setminus(\APX\cup\calP_k)$ and $u_3u_4\in\APX\setminus(\OPT\cup\calP_k)$. Furthermore, $u_1u_2u_3$ and $u_2u_3u_4$ are triangles in $\APX\cup\OPT$, so $c:=u_1u_2, u_2u_3, u_3u_4$ is a chunk. We claim that $P_1,\dots, P_k\circ c$ satisfy Invariant~\ref{inv:main}.
    
    Observe $P_1,\dots, P_k\circ c$ satisfy Invariant~\ref{inv:main}.\ref{prp:chunks},~\ref{inv:main}.\ref{prp:disjoint} and~\ref{inv:main}.\ref{prp:startingEnding}. Recall that $u_2u_3u_4$ is the only triangle in $\APX\triangle(P_k\cup\{u_1u_2, u_2u_3\})$. Since $u_3u_4\in\APX$, $\APX\triangle(P_k\cup c)$ contains no triangle, so Invariant~\ref{inv:main}.\ref{prp:redTriangles} ($\APX$-triangle) is also satisfied. We next prove that Invariant~\ref{inv:main}.\ref{prp:blueTriangles} holds. By contradiction, assume that there is a triangle in $\OPT\triangle (\calP_k\cup c)$.

    Recall that $u_1u_2u_3$ is the only triangle in $\OPT\triangle (\calP_k\cup\{u_1u_2\})$. Since $u_2u_3\in \OPT$, $u_2u_3\notin\OPT\triangle (\calP_k\cup c)$. Thus, every triangle in $\OPT\triangle (\calP_k\cup c)$ must contain the edge $u_3u_4$. We have $u_3u_4\in\APX\setminus\OPT$ and, by the assumption of Case~\textbf{(3)}, $u_1u_3\in\APX\cap P_k$, so by the Parity Lemma~\ref{lem:parityLemma}, $N_{\OPT\triangle (\calP_k\cup c)}(u_3) = \{u_1, u_4\}$. Thus, the only possible triangle in $\OPT\triangle (\calP_k\cup c)$ is $u_1u_3u_4$. Since $u_1u_2\in \APX\setminus\OPT$, by the Parity Lemma~\ref{lem:parityLemma}, one has $N_{\OPT\triangle (\calP_k\cup c)}(u_1) = \{u_2, u_3\}$, implying $u_1u_4\notin \OPT\triangle (\calP_k\cup c)$, a contradiction. 

    \vspace{2.5mm}\noindent\textbf{(3.2) $\mathbf{u_3u_4\in \OPT\setminus \APX}$.} We will show that this case is not possible. We illustrate this case in Figure~\ref{fig:blueCase32}.
    \begin{figure}[ht]
        \centering
        \includegraphics{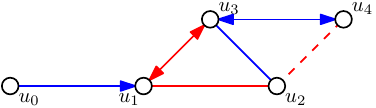}
        \caption{Proof of Lemma~\ref{lem:extendingTrailRed}, Case \textbf{(3.2)}.}
        \label{fig:blueCase32}
    \end{figure}
    
    \begin{Claim}\label{claim:redCase3u2u4}
        $u_2u_4\in \APX\setminus (\OPT\cup P_k)$.
    \end{Claim}
    \begin{proof}
        By Claim~\ref{claim:redCase3u2u3} and by the assumption of Case~\textbf{(3.2)}, $u_2u_3, u_3u_4\in\OPT$, so $u_2u_4\in \APX\setminus\OPT$, otherwise $u_2u_3u_4$ would be a triangle in $\OPT$. Furthermore, since $u_2u_3u_4$ is a triangle in $\APX\triangle(P_k\cup\{u_1u_2, u_2u_3\})$, $u_2u_4\notin P_k$.
    \end{proof}

    \begin{Claim}\label{claim:redCase3onlyPk}
        $N_{P_k}(u_3) = \{u_1, u_4\}$
    \end{Claim}
    \begin{proof}
        Since $u_2u_3u_4$ is a triangle in $\APX\triangle(P_k\cup\{u_1u_2, u_2u_3\})$, and $u_3u_4\in\OPT\setminus\APX$ by the assumption of Case~\textbf{(3.2)}, it must be that $u_3u_4\in P_k$. Thus, $N_{P_k}(u_3)\supseteq \{u_1, u_4\}$. By Claim~\ref{claim:redCase3u2u3}, $u_2u_3\in\OPT\setminus(\APX\cup\calP_k)$, so $u_2u_3\notin P_k$. If $u_3$ has another neighbor $v\notin\{u_1, u_2, u_4\}$ \fab{w.r.t. $\APX\cup \OPT$}, it must be that $u_3v\in\APX\setminus\OPT$, because $N_\OPT(u_3)=\{u_2, u_4\}$. Moreover, since $u_1u_3\in \APX\cap P_k$ by the assumption of Case~\textbf{(3)}, by the Parity Lemma~\ref{lem:parityLemma} one has $N_{\OPT\triangle \calP_k}(u_3) = \{u_1, u_2\}$. Thus, it must be that $u_3v\notin \calP_k$.
    \end{proof}

    \begin{Claim}
        $u_4\neq u_0$.
    \end{Claim}
    \begin{proof}
        By Claim~\ref{claim:redCase3u2u4}, $u_2u_4\in\APX\setminus(\OPT\cup P_k)$, and by Claim~\ref{claim:redCase3u3u4}, $u_3u_4\in\OPT\cap P_k$. By contradiction, assume that $u_0=u_4$. There is an edge $u_0u_3\in \OPT\cap P_k$ and an edge $u_0u_2\in\APX\setminus (\OPT\cup P_k)$. But then, since $u_0u_1\in\OPT\cap P_k$, $N_{\APX\triangle P_k}(u_0) \supseteq \{u_1, u_2, u_3\}$, a contradiction by the Parity Lemma~\ref{lem:parityLemma}.
        \end{proof}

    \begin{Claim}\label{claim:redCase3u1u4}
        There is no edge $u_1u_4$ in $\APX\cup\OPT$.
    \end{Claim}
    \begin{proof}
        Recall that $u_1u_2\in\APX\setminus(\OPT\cup\calP_k)$, and, by assumption of Case~\textbf{(3)}, $u_1u_3\in\APX\cap P_k$. Assume to get a contradiction that $u_1u_4\in \APX\cup\OPT$. Then since $N_\APX(u_1)=\{u_2, u_3\}$ it must be that $u_1u_4\in\OPT\setminus\APX$. 
        
        If $u_1u_4\in P_j$ for some $j\in [1, k-1]$ we reach a contradiction. Indeed, we have $u_1u_2, u_1u_3\in\APX\setminus P_j$, so $N_{\APX\triangle P_j}(u_1) \supseteq \{u_2, u_3, u_4\}$, a contradiction by Corollary~\ref{cor:parityCorollary}. By Claim~\ref{claim:redCase3u2u4}, $u_2u_4\in\APX\setminus(\OPT\cup P_k)$. By the assumption of Case~\textbf{(3.2)}, $u_3u_4\in\OPT\setminus\APX$, and by Claim~\ref{claim:redCase3onlyPk}, $u_3u_4\in\OPT\cap P_k$. If $u_1u_4\in P_k$ then, one has $N_{\APX\triangle P_k}(u_4)\supseteq \{u_1, u_2, u_3\}$, a contradiction by the Parity Lemma~\ref{lem:parityLemma}. Therefore, $u_1u_4\in\OPT\setminus(\APX\cup\calP_k)$.
        
        Since $u_1u_2\in\APX\setminus\OPT$ and $u_1u_3\in\APX\cap P_k$, one has $N_{\OPT\triangle (\calP_k\cup \{u_1u_2\})}(u_1) \supseteq \{u_2, u_3, u_4\}$, a contradiction by the Parity Lemma~\ref{lem:parityLemma}. The claim follows.
    \end{proof}

    By Invariant~\ref{inv:main}.\ref{prp:chunks}, $P_k = c_1\circ\dots\circ c_l$, where $c_j$ is a chunk for all $j\in[1, l]$. By Claim~\ref{claim:redCase3u1u4}, there is no edge $u_1u_4$ \fab{in} $\APX\cup\OPT$, so there is no triangle $u_1u_3u_4$ in $\APX\cup\OPT$. By Claim~\ref{claim:redCase3onlyPk}, $N_{P_k}(u_3)=\{u_1, u_4\}$. Therefore, by Definition~\ref{def:chunk} of chunk, the edges $u_1u_3$ and $u_3u_4$ must belong to consecutive chunks in $P_k$. Thus, we have $P_k = c_1\circ\dots\circ c_i\circ c_{i+1}\circ\dots c_l$, where $c_i$ is either a chunk whose last edge is $u_1u_3$ (and the first edge of $c_{i+1}$ is $u_3u_4$) or $c_i$ is a chunk whose last edge is $u_3u_4$ (and the first edge of $c_{i+1}$ is $u_1u_3$). We consider those two cases.
    
    \vspace{2.5mm}\noindent\textbf{(3.2.1) $\mathbf{u_1u_3}$ is the last edge of $\mathbf{c_i}$.} This case is illustrated in Figure~\ref{fig:blueCase321}.
    \begin{figure}[ht]
        \centering
        \includegraphics{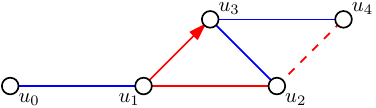}
        \caption{Proof of Lemma~\ref{lem:extendingTrailRed}, Case \textbf{(3.2.1)}. Here the arrows are used for edges of $P$ instead of edges of $P_k$.}
        \label{fig:blueCase321}
    \end{figure}
    By Remark~\ref{rem:notAugmenting}, $P_1,\dots,P_{k-1},P$, where $P=c_1\circ\dots\circ c_i$, must satisfy Invariant~\ref{inv:main}. By Claim~\ref{claim:redCase3u2u3}, $u_2u_3\in\OPT\setminus(\APX\cup\calP_k)$, so $c:=u_2u_3$ is a chunk. We claim that $P_1,\dots, P_{k-1}, P\circ c$ satisfy Invariant~\ref{inv:main}. That would lead to a contradiction, by Invariant~\ref{inv:tieBreak}.\ref{inv:tieBreak:destroyTriangle}. Indeed, $u_1u_2u_3$ is a triangle in $\APX\cup\OPT$, so the first edge of $c$ (\fab{namely,} $u_2u_3$) shares a triangle in $\APX\cup\OPT$ with the last edge of $c_i$ (\fab{namely,} $u_1u_3$). On the other hand, since by Claim~\ref{claim:redCase3u1u4}, $u_1u_4\notin\APX\cup\OPT$, the first edge of $c_{i+1}$ (\fab{namely,} $u_3u_4$), does not share a triangle in $\APX\cup\OPT$ with the last edge of $c_i$ (\fab{namely,} $u_1u_3$).

    Observe that $P_1,\dots, P_{k-1}, P\circ c$ satisfy  Invariant~\ref{inv:main}.\ref{prp:chunks},~\ref{inv:main}.\ref{prp:disjoint} and~\ref{inv:main}.\ref{prp:startingEnding}. Since $u_2u_3\in \OPT\setminus \APX$, and $\OPT\triangle(P\cup\calP_{k-1})$ is triangle-free, Invariant~\ref{inv:main}.\ref{prp:blueTriangles} ($\OPT$-triangle) is also satisfied. Thus, it remains to show that Invariant~\ref{inv:main}.\ref{prp:redTriangles} ($\APX$-triangle) holds. Assume by contradiction that $\APX\triangle(P\cup\{u_2u_3\})$ contains a triangle.

    Since $P_1,\dots,P_{k-1}, P$ satisfy Invariant~\ref{inv:main}.\ref{prp:redTriangles}, $\APX\triangle P$ contains no triangle. Therefore, every triangle in $\APX\triangle(P\cup\{u_2u_3\})$ must contain the edge $u_2u_3$. Recall that $u_1u_2\in\APX\setminus(\OPT\cup\calP_k)$. By Claim~\ref{claim:redCase3u2u4}, $u_2u_4\in\APX\setminus P_k$. By the Parity Lemma~\ref{lem:parityLemma} one has $N_{\APX\triangle P}(u_2) = \{u_1, u_4\}$. Then, $N_{\APX\triangle(P\cup\{u_2u_3\})}(u_2) = \{u_1, u_3, u_4\}$. Therefore the only possible triangles in $\APX\triangle(P\cup\{u_2u_3\})$ are $u_1u_2u_3$ and $u_2u_3u_4$. 

    By assumption of Cases~\textbf{(3)} and~\textbf{(3.2)}, $u_1u_3\in\APX\cap P_k$ and $u_3u_4\in\OPT\setminus\APX$. By assumption of Case~\textbf{(3.2.1)} $u_1u_3\in P$ and $u_3u_4\notin P$, so $u_1u_3, u_3u_4\notin \APX\triangle(P\cup\{u_2u_3\})$, a contradiction. 

    \vspace{2.5mm}\noindent\textbf{(3.2.2) $\mathbf{u_3u_4}$ is the last edge of $\mathbf{c_i}$.} This case is illustrated in Figure~\ref{fig:blueCase322}.
    \begin{figure}[ht]
        \centering
        \includegraphics{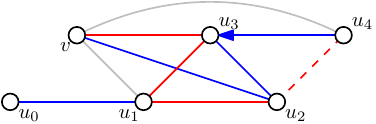}
        \caption{Proof of Lemma~\ref{lem:extendingTrailRed}, Case \textbf{(3.2.2)}. Here the arrows are used for edges of $P$ instead of edges of $P_k$.}
        \label{fig:blueCase322}
    \end{figure}
    By Remark~\ref{rem:notAugmenting}, $P_1,\dots,P_{k-1}, P$, where $P:=c_1\circ\dots\circ c_i$, satisfy Invariant~\ref{inv:main}.
    
    \begin{Claim}\label{claim:redCase3u3v}
        There exists an edge $u_3v\in\APX\setminus(\OPT\cup\calP_k)$
    \end{Claim}
    \begin{proof}
        By assumption of Cases~\textbf{(3)} and~\textbf{(3.2)}, $u_1u_3\in\APX\cap P_k$ and $u_3u_4\in\OPT\setminus\APX$. By Claim~\ref{claim:redCase3onlyPk}, $u_3u_4\in\OPT\cap P_k$. By Claim~\ref{claim:redCase3u2u3}, $u_2u_3\in \OPT\setminus(\APX\cup\calP_k)$. If there is no edge $u_3v\in\APX\setminus\OPT, v\neq u_1$, then $N(u_3) = \{u_1, u_2, u_4\}$ and thus $u_3$ is deficient. Since $u_1u_3, u_2u_3, u_3u_4\notin \calP_{k-1}$, $u_3$ is deficient w.r.t. $\calP_{k-1}$ and $P$ would be an augmenting trail, a contradiction to Remark~\ref{rem:notAugmenting}. Thus there is an edge $u_3v\in\APX\setminus\OPT$.
        
        Since $u_1u_3\in \APX\cap P_k$ and $u_2u_3\in\OPT\setminus\calP_k$, if $u_3v\in \calP_k$ then $N_{\OPT\triangle\calP_k}(u_3)\supseteq \{u_1, u_2, v\}$ a contradiction to the Parity Lemma~\ref{lem:parityLemma}, so $u_3v\notin\calP_k$. The claim follows.
    \end{proof}
    \begin{Claim}\label{claim:redCase3contr}
        There is no chunk $c$ such that $P_1,\dots, P_{k-1}, P\circ c$ satisf\fab{y} Invariant~\ref{inv:main}, and the first edge of $c$ is $u_3v$. 
    \end{Claim}
    \begin{proof}
        Assume by contradiction that there is a chunk $c$
        as in the claim. 
        %such that $\fab{P_1,\ldots,P_{k-1},P\circ c}$ satisfies Invariant~\ref{inv:main}, and the first edge of $c$ is $u_3v$.     
        Then, it must be the case that there is no edge $u_4v\in \APX\cup\OPT$. Indeed, otherwise the first edge of $c$ (namely, $u_3v$) would share a triangle with the last edge of $c_i$ (namely, $u_3u_4$), while, by Claim~\ref{claim:redCase3u1u4}, the first edge of $c_{i+1}$ (namely, $u_1u_3$) does not share a triangle with the last edge of $c_i$: this would be a violation of Invariant~\ref{inv:tieBreak}.\ref{inv:tieBreak:destroyTriangle}. Recall that $u_1u_2\in\APX\setminus(\OPT\cup\calP_k)$. By Claim~\ref{claim:redCase3u2u4}, $u_2u_4\in\APX\setminus(\OPT\cup P_k)$. Thus, there is no edge $vu_2\in\APX$ because $N_{\APX}(u_2) = \{u_1, u_4\}$.

        By assumption of Cases~\textbf{(3)} and~\textbf{(3.2)}, $u_1u_3\in\APX\cap P_k$ and $u_3u_4\in\OPT\setminus\APX$. By Claim~\ref{claim:redCase3u2u3}, $u_2u_3\in\OPT\setminus(\APX\cap\calP_k)$. One has $N_\APX(u_3)=\{u_1, v\}, N_\OPT(u_3) = \{u_2, u_4\}$, there is no edge $u_4v\in\APX\cup\OPT$ and no edge $u_2v\in\APX$. Therefore, every triangle that contains $u_3v$ and another edge of $\APX$ must also contain the edge $u_1u_3$. Since $u_1u_2u_3$ is a triangle with $2$ edges in $\OPT$, this implies that $u_3v$ is in fewer such triangles than $u_1u_3$. This is a violation of Invariant~\ref{inv:tieBreak}.\ref{inv:tieBreak:avoidTriangleRed}.
    \end{proof}

    We will next show that there exists a chunk $c$ as described in Claim~\ref{claim:redCase3contr}, hence obtaining a contradiction. By Claim~\ref{claim:redCase3u3v}, $u_3v\in\APX\setminus(\OPT\cup\calP_k)$, so $c := u_3v$ is a chunk. Observe that Invariants~\ref{inv:main}.\ref{prp:chunks},~\ref{inv:main}.\ref{prp:disjoint} and~\ref{inv:main}.\ref{prp:startingEnding} are satisfied. Since $u_3v\in\APX\setminus\OPT$ and $\APX\triangle P$ is triangle-free, Invariant~\ref{inv:main}.\ref{prp:redTriangles} ($\APX$-triangle) is also satisfied. If there is no triangle in $\OPT\triangle (P\cup\calP_{k-1}\cup\{u_3v\})$ then Invariant~\ref{inv:main}.\ref{prp:blueTriangles} ($\OPT$-triangle) holds. Therefore, we can assume by contradiction that there is a triangle in $\OPT\triangle (P\cup\calP_{k-1}\cup\{u_3v\})$. 

    Since $\OPT\triangle(P\cup\calP_{k-1})$ contains no triangle, every triangle in $\OPT\triangle (P\cup\calP_{k-1}\cup\{u_3v\})$ must contain the edge $u_3v$. Recall that $u_3v\in\APX\setminus(\OPT\cup\calP_k)$, and, by Claim~\ref{claim:redCase3u2u3}, $u_2u_3\in\OPT\setminus(\APX\cup\calP_k)$. Thus by the Parity Lemma~\ref{lem:parityLemma} one has $N_{\OPT\triangle(P\cup\calP_{k-1}\cup\{u_3v\})}(u_3)=\{u_2, v\}$. Therefore, the only triangle in $\OPT\triangle(P\cup\calP_{k-1}\cup\{u_3v\})$ is $u_2u_3v$.

    \begin{Claim}\label{claim:redCase3u2v}
        $u_2v\in\OPT\setminus(\APX\cup\calP_{k})$
    \end{Claim}
    \begin{proof}
        Recall that $u_1u_2\in\APX\setminus(\OPT\cup\calP_k)$. By Claim~\ref{claim:redCase3u2u4}, $u_2u_4\in\APX\setminus(\OPT\cup P_k)$. Therefore, $u_2v\in\OPT\setminus\APX$, otherwise $N_{\APX}(u_2)=\{u_1, u_4, v\}$, a contradiction to the fact that $\APX$ is a $2$-matching. Since $u_2u_3v$ is a triangle in $\OPT\triangle(P\cup\calP_{k-1}\cup\{u_3v\})$, $u_2v\notin\calP_{k-1}$. Moreover, by the Parity Lemma~\ref{lem:parityLemma}, $N_{\APX\triangle P_k}(u_2)=\{u_1, u_4\}$, so $u_2v\notin P_k$. The claim follows.  
    \end{proof}

    By Claims~\ref{claim:redCase3u3v} and~\ref{claim:redCase3u2v}, $u_3v\in\APX\setminus(\OPT\cup\calP_k)$ and $u_2v\in\OPT\setminus(\APX\cup\calP_k)$. Since $u_2u_3v$ is a triangle in $\APX\cup\OPT$, $c:=u_3v, vu_2$ is a chunk. We claim that $P_1,\dots, P_{k-1}, P\circ c$ satisfy Invariant~\ref{inv:main}, implying a contradiction by Claim~\ref{claim:redCase3contr}. Observe that Invariants~\ref{inv:main}.\ref{prp:chunks},~\ref{inv:main}.\ref{prp:disjoint} and~\ref{inv:main}.\ref{prp:startingEnding} are satisfied. Recall that $u_2u_3v$ is the only triangle in $\OPT\triangle(P\cup\calP_{k-1}\cup\{u_3v\})$. Since $u_2v\in\OPT$, $u_2v\notin\OPT\triangle(P\cup\calP_{k-1}\cup\{u_3v, u_2v\})$. Thus Invariant~\ref{inv:main}.\ref{prp:blueTriangles} ($\OPT$-triangle) is also satisfied. We next show that Invariant~\ref{inv:main}.\ref{prp:redTriangles} ($\APX$-triangle) holds. Assume by contradiction that there is a triangle in $\APX\triangle(P\cup\{u_3v, u_2v\})$.

    Since $\APX\triangle P$ contains no triangle and $u_3v\in\APX$, every triangle in $\APX\triangle(P\cup\{u_3v, u_2v\})$ must contain the edge $u_2v$. Recall that $u_1u_2\in\APX\setminus(\OPT\cup\calP_k)$, and, by Claim~\ref{claim:redCase3u2u4}, $u_2u_4\in\APX\setminus P_k$. Since $P\subseteq P_k$, by The Parity Lemma~\ref{lem:parityLemma}, $N_{\APX\triangle P}(u_2) =\{u_1, u_4\}$. Thus, one has $N_{\APX\triangle(P\cup\{u_3v, u_2v\})}(u_2) = \{u_1, u_4, v\}$. Therefore, the only possible triangles in $\APX\triangle(P\cup\{u_3v, u_2v\})$ are $u_1u_2v$ and $u_2u_4v$.

    Recall that $u_1u_2\in\APX\setminus(\OPT\cup\calP_k)$. By assumption of Case~\textbf{(3)}, $u_1u_3\in\APX\cup P_k$. By assumption of Case~\textbf{(3.2.2)}, $u_1u_3\in\APX\setminus P$. Therefore $u_1u_2, u_1u_3\in \APX\setminus P$. By the Parity Lemma~\ref{lem:parityLemma}, one has $N_{\APX\triangle P}(u_1) = \{u_2, u_3\}$. Thus, $N_{\APX\triangle(P\cup\{u_2v, u_3v\})}(u_1) = \{u_2, u_3\}$, so $u_1v\notin\APX\triangle(P\cup\{u_2v, u_3v\})$. Therefore, it must be that $u_2u_4v$ is a triangle in $\APX\triangle(P\cup\{u_3v, u_2v\})$.
    
    By assumption of Case~\textbf{(3.2)}, $u_3u_4\in\OPT\setminus\APX$. By assumption of Case~\textbf{(3.2.2)}, $u_3u_4\in\OPT\cap P$. By Claim~\ref{claim:redCase3u2u4}, $u_2u_4\in\APX\setminus P$. By the Parity Lemma~\ref{lem:parityLemma}, $N_{\APX\triangle(P\cup\{u_2v, u_3v\})}(u_4) = \{u_2, u_3\}$. Therefore, one has that $u_4v\notin\APX\triangle(P\cup\{u_2v, u_3v\})$, a contradiction (since $u_2u_4v$ is a triangle in $\APX\triangle(P\cup\{u_3v, u_2v\})$). 
\end{proof}

\end{document}